\let\originalleft\left
\let\originalright\right
\renewcommand{\left}{\mathopen{}\mathclose\bgroup\originalleft}
\renewcommand{\right}{\aftergroup\egroup\originalright}
\spnewtheorem{fclaim}{Faulty Claim}{\itshape}{\rmfamily}
\newcommand*{\sqcupsqcap}{{\sqcup}{\mkern-3mu \sqcap}}
\newcommand*{\sqcapsqcup}{{\sqcap}{\mkern-3mu \sqcup}}
\let\doendproof\endproof
\renewcommand\endproof{~\hfill\qed\doendproof}
\newcommand{\qedhere}{%
  \begingroup \let\mathqed\math@qedhere
    \let\qed@elt\setQED@elt \QED@stack\relax\relax \endgroup
} 
\title{The Cyclic-Routing UAV Problem is $\mathrm{PSPACE}$-Complete}
\author{Hsi-Ming Ho and Jo\"el Ouaknine}
\institute{Department of Computer Science, University of Oxford \\ Wolfson Building, Parks Road, Oxford, OX1 3QD, UK}
\begin{document}
%\begin{textblock}{5}(10,1)
%\noindent\Large ICALP 2014 Track B
%\end{textblock}
\maketitle
\thispagestyle{plain}
\begin{abstract}
Consider a finite set of targets, with each target assigned a
\emph{relative deadline}, and each pair of targets assigned a fixed
transit \emph{flight time}. Given a flock of identical UAVs, can
one ensure that every target is repeatedly visited by some UAV at
intervals of duration at most the target's relative deadline? The
\mbox{\emph{\textbf{Cyclic-Routing UAV Problem}} \textsc{(cr-uav)}} is the question of
whether this task has a solution.

This problem can straightforwardly be solved in $\mathrm{PSPACE}$ by
modelling it as a network of timed automata. The special case of there
being a single UAV is claimed to be $\mathrm{NP}$-complete in the
literature. In this paper, we show that the \textsc{cr-uav} Problem is in
fact $\mathrm{PSPACE}$-complete even in the single-UAV case.
\end{abstract}

\section{Introduction}

Unmanned aerial vehicles (UAVs) have many uses, ranging from civilian
to military operations. Like other autonomous systems, they are
particularly well-suited to `dull, dirty, and/or dangerous'
missions~\cite{UAVS}.  A common scenario in such missions is that a
set of targets have to be visited by a limited number of UAVs.  This
has given rise to a large body of research on \emph{path planning} for
UAVs.\footnote{\url{http://scholar.google.com/} lists thousands of papers on the subject.} Depending on the specific application at hand, paths of UAVs
may be subject to various complex constraints, e.g., related to kinematics or
fuel (see, e.g.,~\cite{Alighanbari2003, Elizabeth2012, Yang2002,
  Richards2002}).
%Depending on the case at hand, various constraints may need
%to be satisfied in such planning problems---for example, it may be required that a target $v$ be visited twice before
%another target $v'$ is visited~\cite{Alighanbari2003}.
%%the targets may need to be visited in certain predefined ways.
%%e.g., target $v$ must be visited twice in time window $[a, b]$ before target $v'$ is visited.
%Other constraints may arise from UAVs themselves, e.g., 
%fuel~\cite{Elizabeth2012}, kinematics~\cite{Yang2002}, collision avoidance~\cite{Richards2002}, etc.

In this work, we consider the \emph{Cyclic-Routing UAV Problem} (\textsc{cr-uav})~\cite{Drucker2010}:
the decision version of a simple \emph{recurrent}
UAV path-planning problem in which each target must be visited not only
once but repeatedly, i.e., at intervals of prescribed maximal duration. 
%paths of UAVs are infinite and all targets are visited infinitely
%often.  subject to an upper bound on interarrival times.
Problems of this type have long been considered in many other fields
such as transportation~\cite{Orlin1982, Wollmer1990} and
robotics~\cite{Crama1997, Kats1997}.  More recently, a number of
game-theoretic frameworks have been developed to study similar
problems in the context of security~\cite{Tsai2009, Jain2010,
  Basilico2012}.

A special case of the problem (with a single UAV) is considered
in~\cite{Basilico2009, Basilico2012, Fargeas2013}, and is claimed to
be $\mathrm{NP}$-complete in~\cite{Basilico2012}.  However, the proof
of $\mathrm{NP}$-membership in~\cite{Basilico2012} is not
detailed.\footnote{A counterexample to a crucial claim
  in~\cite{Basilico2012} is given in Appendix~\ref{app:cex}.}  The
main result of the present paper is that the \textsc{cr-uav} Problem is
in fact $\mathrm{PSPACE}$-complete, even in the single-UAV case.
We note that this problem can be seen as a recurrent variant of the
decision version of the \emph{Travelling Salesman Problem with Time
  Windows} (\textsc{tsptw}) with upper bounds only (or \emph{TSP with Deadlines}~\cite{Bockenhauer2007}).  Its
$\mathrm{PSPACE}$-hardness hence stems from recurrence: the decision
version of the (non-recurrent) \textsc{tsptw} Problem is
$\mathrm{NP}$-complete~\cite{Savelsbergh1985}.

%whilst the $\mathrm{NP}$-hardness proof is valid, the $\mathrm{NP}$-membership proof
%appears to rely on an erroneous `theorem'.

$\mathrm{PSPACE}$-membership of the (general) \textsc{cr-uav} Problem
follows straightforwardly by encoding the problem as the existence of
infinite paths in a network of timed automata; we briefly sketch the
argument in the next section. The bulk of the paper is then
devoted to establishing $\mathrm{PSPACE}$-hardness of the single-UAV
case. This is accomplished by reduction from the \textsc{periodic sat}
Problem, known to be $\mathrm{PSPACE}$-complete~\cite{Orlin1981}.

\section{Preliminaries}

%In this section, we give some definitions and introduce some notations.
%If not stated otherwise, all numbers are assumed to be non-negative integers.

\subsection{Scenario}

Let there be a set of targets and a number of identical UAVs.  Each
target has a \emph{\textbf{relative deadline}}: an upper bound
requirement on the time between successive visits by UAVs.  The UAVs
are allowed to fly freely between targets, with a \emph{\textbf{flight
    time}} given for each pair of targets: the amount of time
required for a UAV to fly from one of the targets to the other. We
assume that flight times are symmetric, that they obey the triangle
inequality, and that the flight time from target $v$ to target $v'$ is
zero iff $v$ and $v'$ denote the same target. In other words, flight
times are a metric on the set of targets. The goal is to decide
whether there is a way to coordinate UAVs such that no relative
deadline is ever violated. We make a few further assumptions:
\begin{itemize}
\item Initially, each UAV starts at some target; there may be more
  than one UAV at the same target.
\item The first visit to each target must take place at the latest by
  the expiration time of its relative deadline.
\item The UAVs are allowed to `wait' as long as they wish at any given
  target. 
\item Time units are chosen so that all relative deadlines and flight
  times are integers, and moreover all relative deadlines are
  interpreted as closed constraints (i.e., using non-strict inequalities).
%\item For the same reason, we can assume that the number of UAVs is
%  strictly less than the number of targets (otherwise one can simply
%  assign a dedicated UAV to visit each target repeatedly).
%\item The flight times can be assumed to be a \emph{metric} on the set of targets. This is because 
%extra visits to targets can never induce relative deadline violations.
\end{itemize}
%As mentioned above, the problem asks whether a supply of $k$ UAVs can
%satisfy all these requirements.

\subsection{Modelling via Networks of Timed Automata}\label{subsec:membership}

We briefly sketch how to model the \textsc{cr-uav} Problem as the
existence of infinite non-Zeno paths in a network of B\"uchi timed
automata, following the notation and results of~\cite{Alur1998}, from
which $\mathrm{PSPACE}$-membership immediately follows.

Intuitively, one ascribes a particular timed automaton to each UAV and
to each target. Each UAV-automaton keeps track of the location of its
associated UAV, and enforces flight times by means of a single clock,
which is reset the instant the UAV leaves a given target. Each
target-automaton is likewise equipped with a single clock, keeping
track of time elapsed since the last visit by some UAV\@. The action
of a UAV visiting a target is modelled by synchronising on a
particular event; when this takes place, provided the target's
relative deadline has not been violated, the target resets its
internal clock and instantaneously visits a B\"uchi
location. Similarly, the action of a UAV leaving a target is modelled
by event synchronisation. Finally, since multiple UAVs may visit a
given target simultaneously, each target is in addition equipped with
a counter to keep track at any time of whether or not it is currently
being visited by some UAV.

The given instance of the \textsc{cr-uav} Problem therefore has a
solution iff there exists a non-Zeno run of the resulting network of
timed automata in which each B\"uchi accepting location is visited
infinitely often. By Thm.~$7$ of \cite{Alur1998}, this can be decided in
$\mathrm{PSPACE}$.

It is worth noting that, since all timing constraints are closed by
assumption, standard digitisation results apply (cf.~\cite{Henzinger1992}) and
it is sufficient to consider integer (i.e., discrete) time. In the
next section, we therefore present a discrete graph-based (and
timed-automaton independent) formulation of the problem specialised to
a single UAV, in order to establish $\mathrm{PSPACE}$-hardness.

\subsection{Weighted Graph Formulation}

The solution to a single-UAV instance of the \textsc{cr-uav} Problem
consists of an infinite path from target to target in which each
target is visited infinitely often, at time intervals never greater
than the target's relative deadline. One may clearly assume that the
UAV never `lingers' at any given target, i.e., targets are visited
instantaneously.  Formally, a single-UAV instance of the \textsc{cr-uav} Problem can be
described as follows. Let $V$ be a set of $n \geq 2$ vertices, with
each vertex $v \in V$ assigned a strictly positive integer weight
$\mathit{RD}(v)$ (intuitively, the relative deadline of target
$v$). Consider a weighted undirected clique over $V$, i.e., to each
pair of vertices $(v,v')$ with $v \neq v'$, one assigns a strictly
positive integer weight $\mathit{FT}(v,v')$ (intuitively, the flight
time from $v$ to $v'$). In addition we require that $\mathit{FT}$ be
symmetric and satisfy the triangle inequality.

%\begin{figure}[h]
%\centering
%\begin{tikzpicture}[-,>=stealth', auto,node distance=4.5cm, transform shape, sc%ale=0.8,
%                    semithick, bend angle=25, every state/.style={fill=none,dra%w=black,text=black,shape=circle}]
%
%\node[state]					(v0)	{$5$};
%\node[state]					(v1) [below left of =v0]	%{$10$};
%\node[state]					(v2) [below right of =v0]	%{$10$};
%
%\path		(v0)	edge [swap] node {$3$} (v1)
	%		(v0)	edge [loop above] node {\scriptsize $0$} (v0)

%			(v1)	edge node {$8$} (v2)
	%		(v1)	edge [loop left] node {\scriptsize $0$} (v1)

%			(v0)	edge node {$5$} (v2);
	%		(v2)	edge [loop right] node {\scriptsize $0$} (v2);
%\end{tikzpicture}
%\caption{An example with $|V| = 3$.}
%\label{fig:graph-example}
%\end{figure}

Let $G = \langle V,\mathit{RD}, \mathit{FT}\rangle$ be an instance of
the above data. Given a finite path $u$ in (the clique associated
with) $G$, the \emph{\textbf{duration}} $\mathit{dur}(u)$ of $u$ is
defined to be the sum of the weights of the edges in $u$. A
\emph{\textbf{solution}} to $G$ is an infinite path $s$ through $G$ with the
following properties:
\begin{itemize}
\item $s$ visits every vertex in $V$ infinitely often;
\item Any finite subpath of $s$ that starts and ends at
  consecutive occurrences of a given vertex $v$ must have duration at
  most $\mathit{RD}(v)$.
\end{itemize}

\begin{definition}[The \textsc{cr-uav} Problem with a Single UAV]
Given $G$ as described above, does $G$ have a solution?
\end{definition}

As pointed out in~\cite{Fargeas2013}, if a
solution exists at all then a \emph{periodic} solution can be found, i.e., an
infinite path in which the targets are visited repeatedly in the same
order.

%We denote solutions by $\omega$-strings over $V$ in the sequel.

%We .
%We consider the simpler problem where $k = 1$.
%%We refer to an infinite string $\sigma \in V^{\omega}$ as a \emph{route}. A route is \emph{feasible} if it
%For convenience, we assume that all locations $\{v_1, \ldots, v_n\}$ are labelled in ascending order of their respective $RD$'s.
%In this case, it is easy to see that the only UAV can always start at time instant $0$. 
%Also, it does not matter which location $c_1$ starts from. Following this, we can state some observations on the problem.
\subsection{The \textsc{periodic sat} Problem}

\textsc{periodic sat} is one of the many $\mathrm{PSPACE}$-complete
problems introduced in~\cite{Orlin1981}.  In the following definition
(and in the rest of this paper), let $\overline{x}$ be a finite set of
variables and let $\overline{x}^j$ be the set of variables obtained
from $\overline{x}$ by adding a superscript $j$ to each variable.

\begin{definition}[The \textsc{periodic sat} Problem~\cite{Orlin1981}]
Consider a CNF formula $\varphi(0)$ over $\overline{x}^0 \cup
\overline{x}^1$. Let $\varphi(j)$ be the formula obtained from
$\varphi(0)$ by replacing all variables $x_i^0 \in \overline{x}^0$ by
$x_i^j$ and all variables $x_i^1 \in \overline{x}^1$ by $x_i^{j+1}$.
Is there an assignment of $\bigcup_{j \geq 0} \overline{x}^j$ such
that $\bigwedge_{j \geq 0} \varphi(j)$ is satisfied?
\end{definition}

%of a network of timed automata is singly-exponential in the descriptions of the component automata~\cite{Alur1994}.
%Also note that the size of the formula is $O(|V|)$.

%\begin{fclaim}[\cite{Basilico2012}, Theorem $4.4$]
%[\textsc{uav}] is $\mathrm{NP}$-complete.\footnote{Indeed }
%\end{fclaim}

\section{$\mathrm{PSPACE}$-Hardness}

In this section, we give a reduction from the \textsc{periodic sat}
Problem to the \textsc{cr-uav} Problem with a single UAV\@.
%\footnote{For notational convenience, we do not distinguish between an instance of \textsc{uav} and its underlying graph.}
%To ease the discussion, we allow underlying graphs to be incomplete for now---
%this will be remedied later.
Consider a CNF formula $\varphi(0) = c_1 \wedge \cdots \wedge c_h$
over $\overline{x}^0 = \{x_1^0, \ldots, x_m^0\}$ and $\overline{x}^1 =
\{x_1^1, \ldots, x_m^1\}$.  Without loss of generality, we assume that
each clause $c_j$ of $\varphi(0)$ is non-trivial (i.e., $c_j$ does not
contain both positive and negative occurrences of a variable) and $m >
2$, $h > 0$. We can construct an instance $G$ of the \textsc{cr-uav}
Problem (with the largest constant having magnitude $O(m^2h)$ and $|V|
= O(mh)$) such that $\bigwedge_{j \geq 0} \varphi(j)$ is satisfiable
if and only if $G$ has a solution.

The general idea of the reduction can be described as follows. We
construct \emph{variable gadgets} that can be traversed in two
`directions' (corresponding to assignments $\mathbf{true}$ and
$\mathbf{false}$ to variables). A \emph{clause vertex} is visited if
the corresponding clause is satisfied by the assignment.  Crucially,
we use \emph{consistency gadgets}, in which we set the relative
deadlines of the vertices carefully to ensure that the directions of
traversals of the variable gadgets for $\overline{x}^1$ (corresponding
to a particular assignment of variables) in a given iteration is
consistent with the directions of traversals of the variable gadgets
for $\overline{x}^0$ in the next iteration.

\subsection{The Construction}

We describe and explain each part of $G$ in detail.
The reader is advised to glance ahead to Figure~\ref{fig:stack} to form an
impression of $G$.  Note that for ease of presentation, we temporarily
relax the requirement that $\mathit{FT}$ be a metric and describe $G$ as an incomplete graph.\footnote{In
  the single-UAV case, if the $\mathit{FT}$ of some edge is greater
  than any value in $\mathit{RD}$, that edge can simply be seen as
  non-existent.}  In what follows, let $l = 24h + 34$ and
\[T = 2
\Big( m\big(2 (3m+1)l + l\big) + m\big(2 (3m+2)l + l\big) + l + 2h
\Big)\,.\]

\subsubsection{Variable Gadgets}
For each variable $x^0_i$, we construct (as a subgraph of $G$) a \emph{variable gadget}.
It consists of the following vertices (see Figure~\ref{fig:vargadget}):
\begin{itemize}
\item Three vertices on the left side ($\mathit{LS}_i = \{v^{t, L}_i, v^{m, L}_i, v^{b, L}_i\}$)
\item Three vertices on the right side ($\mathit{RS}_i = \{v^{t, R}_i, v^{m, R}_i, v^{b, R}_i\}$)
\item A `\emph{clause box}' ($\mathit{CB}_{i}^j  = \{ v^{a, j}_{i}, v^{b, j}_{i}, v^{c, j}_{i}, v^{d, j}_{i}, v^{e, j}_{i}, v^{f, j}_{i} \}$) for each $j \in \{1, \ldots, h\}$
\item A `\emph{separator box}' ($\mathit{SB}_{i}^j = \{ v^{\bar{a}, j}_{i}, v^{\bar{b}, j}_{i}, v^{\bar{c}, j}_{i}, v^{\bar{d}, j}_{i}, v^{\bar{e}, j}_{i}, v^{\bar{f}, j}_{i} \}$) for each $j \in \{0, \ldots, h\}$
\item A vertex at the top ($v_{top}$ if $i = 0$, $v_{i-1}$ otherwise)
\item A vertex at the bottom ($v_i$).
\end{itemize}

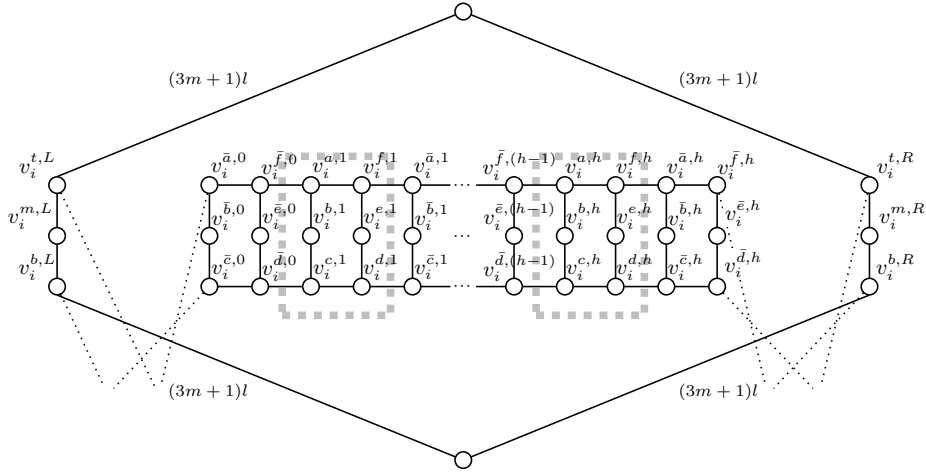
\begin{figure}[h]
\centering
\begin{tikzpicture}[-,>=stealth', auto, transform shape, node distance=2.5cm, scale=0.9,
                    semithick, every state/.style={fill=none,draw=black,text=black,shape=circle,scale=0.3}]

\node[state]               (vsrc)  {$$};

\node[state, draw=none]	   (vi) [below=3cm of vsrc]	{\LARGE $\boldsymbol{\cdots}$};
\node[state, draw=none]	   (vii) [above=.37cm of vi]	{\LARGE $\boldsymbol{\cdots}$};
\node[state, draw=none]	   (vio) [below=.37cm of vi]	{\LARGE $\boldsymbol{\cdots}$};

\node[state]			   (v5) [left of =vi] {$$};
\node[state]			   (v5i) [above=.5cm of v5] {$$};
\node[state]			   (v5o) [below=.5cm of v5] {$$};

\node[state]			   (v4) [left of =v5] {$$};
\node[state]			   (v4i) [below=.5cm of v4] {$$};
\node[state]			   (v4o) [above=.5cm of v4] {$$};

\node[state]			   (v3) [left of =v4] {$$};
\node[state]			   (v3i) [above=.5cm of v3] {$$};
\node[state]			   (v3o) [below=.5cm of v3] {$$};

\node[state]			   (v2) [left of =v3] {$$};
\node[state]			   (v2i) [below=.5cm of v2] {$$};
\node[state]			   (v2o) [above=.5cm of v2] {$$};

\node[state]			   (v1) [left of =v2] {$$};
\node[state]			   (v1i) [above=.5cm of v1] {$$};
\node[state]			   (v1o) [below=.5cm of v1] {$$};

\node[state]			   (v1') [left=2cm of v1] {$$};
\node[state]			   (v1'i) [above=.5cm of v1'] {$$};
\node[state]			   (v1'o) [below=.5cm of v1'] {$$};

\node[state]				(v6) [right of =vi] {$$};
\node[state]			   (v6i) [below=.5cm of v6] {$$};
\node[state]			   (v6o) [above=.5cm of v6] {$$};

\node[state]				(v7) [right of =v6] {$$};
\node[state]			   (v7i) [above=.5cm of v7] {$$};
\node[state]			   (v7o) [below=.5cm of v7] {$$};

\node[state]				(v8) [right of =v7] {$$};
\node[state]			   (v8i) [below=.5cm of v8] {$$};
\node[state]			   (v8o) [above=.5cm of v8] {$$};

\node[state]				(v9) [right of =v8] {$$};
\node[state]			   (v9i) [above=.5cm of v9] {$$};
\node[state]			   (v9o) [below=.5cm of v9] {$$};

\node[state]				(v10) [right of =v9] {$$};
\node[state]			   (v10i) [below=.5cm of v10] {$$};
\node[state]			   (v10o) [above=.5cm of v10] {$$};

\node[state]				(v10') [right=2cm of v10] {$$};
\node[state]			   (v10'i) [above=.5cm of v10'] {$$};
\node[state]			   (v10'o) [below=.5cm of v10'] {$$};

\node[state]               (vsnk) [below=3cm of vi]  {$$};

\node (box1) [draw=gray!50, fit=(v3i) (v4i), dashed, line width=1mm, inner sep=3mm] {};
\node (box2) [draw=gray!50, fit=(v7i) (v8i), dashed, line width=1mm, inner sep=3mm] {};
%\node [yshift=-4mm, gray] at (box1.south) {$c_1$};
%\node [yshift=-4mm, gray] at (box2.south) {$c_h$};

\node[state, draw=none]				 (p') [below=3cm of $(v10')!0.5!(v10)$]  {$$};

\node[state, draw=none]			   (idown') [above left=.6cm and .2cm of p'] {$$};

\node[state, draw=none]			   (oup') [above right=.6cm and .2cm of p'] {$$};

\node[state, draw=none]				 (p) [below=3cm of $(v1')!0.5!(v1)$]  {$$};

\node[state, draw=none]			   (idown) [above left=.6cm and .2cm of p] {$$};

\node[state, draw=none]			   (oup) [above right=.6cm and .2cm of p] {$$};

\path (vsrc) edge node [swap] {\scriptsize $(3m+1)l$} (v1'i.north)
		(vsrc) edge node {\scriptsize $(3m+1)l$} (v10'i.north)

        (v1'i) edge node [right=-1mm] {\scriptsize $$} (v1')
        (v1') edge node [right=-1mm] {\scriptsize $$} (v1'o)

        (v1i) edge node [right=-1mm] {\scriptsize $$} (v1)
        (v1) edge node [right=-1mm] {\scriptsize $$} (v1o)

        (v2i) edge node [right=-1mm] {\scriptsize $$} (v2)
        (v2) edge node [right=-1mm] {\scriptsize $$} (v2o)

        (v3i) edge node [right=-1mm] {\scriptsize $$} (v3)
        (v3) edge node [right=-1mm] {\scriptsize $$} (v3o)

        (v4i) edge node [right=-1mm] {\scriptsize $$} (v4)
        (v4) edge node [right=-1mm] {\scriptsize $$} (v4o)

        (v5i) edge node [right=-1mm] {\scriptsize $$} (v5)
        (v5) edge node [right=-1mm] {\scriptsize $$} (v5o)

        (v6i) edge node [right=-1mm] {\scriptsize $$} (v6)
        (v6) edge node [right=-1mm] {\scriptsize $$} (v6o)

        (v7i) edge node [right=-1mm] {\scriptsize $$} (v7)
        (v7) edge node [right=-1mm] {\scriptsize $$} (v7o)

        (v8i) edge node [right=-1mm] {\scriptsize $$} (v8)
        (v8) edge node [right=-1mm] {\scriptsize $$} (v8o)

        (v9i) edge node [right=-1mm] {\scriptsize $$} (v9)
        (v9) edge node [right=-1mm] {\scriptsize $$} (v9o)

        (v10i) edge node [right=-1mm] {\scriptsize $$} (v10)
        (v10) edge node [right=-1mm] {\scriptsize $$} (v10o)

        (v10'i) edge node [right=-1mm] {\scriptsize $$} (v10')
        (v10') edge node [right=-1mm] {\scriptsize $$} (v10'o)

        (v1i) edge node [below=-1mm] {\scriptsize $$} (v2o)
        (v2o) edge node [below=-1mm] {\scriptsize $$} (v3i)
        (v3i) edge node [below=-1mm] {\scriptsize $$} (v4o)
        (v4o) edge node [below=-1mm] {\scriptsize $$} (v5i)
        (v5i) edge node [below=-1mm] {\scriptsize $$} (vii)
        (vii) edge node [below=-1mm] {\scriptsize $$} (v6o)
        (v6o) edge node [below=-1mm] {\scriptsize $$} (v7i)
        (v7i) edge node [below=-1mm] {\scriptsize $$} (v8o)
        (v8o) edge node [below=-1mm] {\scriptsize $$} (v9i)
        (v9i) edge node [below=-1mm] {\scriptsize $$} (v10o)

        (v1o) edge node [swap, below=-1mm] {\scriptsize $$} (v2i)
        (v2i) edge node [swap, below=-1mm] {\scriptsize $$} (v3o)
        (v3o) edge node [swap, below=-1mm] {\scriptsize $$} (v4i)
        (v4i) edge node [swap, below=-1mm] {\scriptsize $$} (v5o)
        (v5o) edge node [swap, below=-1mm] {\scriptsize $$} (vio)
        (vio) edge node [swap, below=-1mm] {\scriptsize $$} (v6i)
        (v6i) edge node [swap, below=-1mm] {\scriptsize $$} (v7o)
        (v7o) edge node [swap, below=-1mm] {\scriptsize $$} (v8i)
        (v8i) edge node [swap, below=-1mm] {\scriptsize $$} (v9o)
        (v9o) edge node [swap, below=-1mm] {\scriptsize $$} (v10i)

        (v1'o.south) edge node [swap] {\scriptsize $(3m+1)l$} (vsnk)
        (v10'o.south) edge node {\scriptsize $(3m+1)l$} (vsnk)

		  (v1'o) edge [dotted] node {$$} (idown)
		  (v1o) edge [dotted] node {$$} (idown)

		  (v1'i) edge [dotted] node {$$} (oup)
		  (v1i) edge [dotted] node {$$} (oup)

		  (v10'o) edge [dotted] node {$$} (idown')
		  (v10o) edge [dotted] node {$$} (idown')

		  (v10'i) edge [dotted] node {$$} (oup')
		  (v10i) edge [dotted] node {$ $} (oup')

;

\node [xshift=-2mm, yshift=2mm] at (v1'i.north west) {$v^{t, L}_i$};
\node [xshift=-3mm, yshift=2mm] at (v1'.north west) {$v^{m, L}_i$};
\node [xshift=-2mm, yshift=2mm] at (v1'o.north west) {$v^{b, L}_i$};

\node [xshift=2mm, yshift=2mm] at (v1i.north east) {$v^{\bar{a}, 0}_{i}$};
\node [xshift=2mm, yshift=2mm] at (v1.north east)  {$v^{\bar{b}, 0}_{i}$};
\node [xshift=2mm, yshift=2mm] at (v1o.north east) {$v^{\bar{c}, 0}_{i}$};

\node [xshift=2mm, yshift=2mm] at (v2i.north east) {$v^{\bar{d}, 0}_{i}$};
\node [xshift=2mm, yshift=2mm] at (v2.north east)  {$v^{\bar{e}, 0}_{i}$};
\node [xshift=2mm, yshift=2mm] at (v2o.north east) {$v^{\bar{f}, 0}_{i}$};

\node [xshift=2mm, yshift=2mm] at (v3i.north east) {$v^{a, 1}_{i}$};
\node [xshift=2mm, yshift=2mm] at (v3.north east)  {$v^{b, 1}_{i}$};
\node [xshift=2mm, yshift=2mm] at (v3o.north east) {$v^{c, 1}_{i}$};

\node [xshift=2mm, yshift=2mm] at (v4i.north east) {$v^{d, 1}_{i}$};
\node [xshift=2mm, yshift=2mm] at (v4.north east)  {$v^{e, 1}_{i}$};
\node [xshift=2mm, yshift=2mm] at (v4o.north east) {$v^{f, 1}_{i}$};

\node [xshift=2mm, yshift=2mm] at (v5i.north east) {$v^{\bar{a}, 1}_{i}$};
\node [xshift=2mm, yshift=2mm] at (v5.north east)  {$v^{\bar{b}, 1}_{i}$};
\node [xshift=2mm, yshift=2mm] at (v5o.north east) {$v^{\bar{c}, 1}_{i}$};

\node [xshift=0mm, yshift=2mm] at (v6i.north east) {$v^{\bar{d}, (h-1)}_{i}$};
\node [xshift=0mm, yshift=2mm] at (v6.north east)  {$v^{\bar{e}, (h-1)}_{i}$};
\node [xshift=0mm, yshift=2mm] at (v6o.north east) {$v^{\bar{f}, (h-1)}_{i}$};

\node [xshift=2mm, yshift=2mm] at (v7i.north east) {$v^{a, h}_{i}$};
\node [xshift=2mm, yshift=2mm] at (v7.north east)  {$v^{b, h}_{i}$};
\node [xshift=2mm, yshift=2mm] at (v7o.north east) {$v^{c, h}_{i}$};

\node [xshift=2mm, yshift=2mm] at (v8i.north east) {$v^{d, h}_{i}$};
\node [xshift=2mm, yshift=2mm] at (v8.north east)  {$v^{e, h}_{i}$};
\node [xshift=2mm, yshift=2mm] at (v8o.north east) {$v^{f, h}_{i}$};

\node [xshift=2mm, yshift=2mm] at (v9i.north east) {$v^{\bar{a}, h}_{i}$};
\node [xshift=2mm, yshift=2mm] at (v9.north east)  {$v^{\bar{b}, h}_{i}$};
\node [xshift=2mm, yshift=2mm] at (v9o.north east) {$v^{\bar{c}, h}_{i}$};
                                                                    
\node [xshift=3mm, yshift=2.5mm, fill=white, inner sep=0.5pt] at (v10i.north east) {$v^{\bar{d}, h}_i$};
\node [xshift=3mm, yshift=2.5mm, fill=white, inner sep=0.5pt] at (v10.north east)  {$v^{\bar{e}, h}_i$};
\node [xshift=2mm, yshift=2mm] at (v10o.north east) 										  {$v^{\bar{f}, h}_i$};

\node [xshift=3mm, yshift=2mm] at (v10'i.north east) {$v^{t, R}_i$};
\node [xshift=4mm, yshift=2mm] at (v10'.north east)  {$v^{m, R}_i$};
\node [xshift=3mm, yshift=2mm] at (v10'o.north east) {$v^{b, R}_i$};

\end{tikzpicture}
\caption{The variable gadget for $x^0_i$}
\label{fig:vargadget}
\end{figure}

The clause boxes for $j \in \{1, \ldots, h\}$ are aligned horizontally in the figure.
A separator box is laid between each adjacent pair of clause boxes and at both ends.
This row of boxes (\(\mathit{Row}_i = \bigcup_{j \in \{1, \ldots, h\}} \mathit{CB}_i^j \cup \bigcup_{j \in \{0, \ldots, h\}} \mathit{SB}_i^j\))
is then put between $\mathit{LS}_i$ and $\mathit{RS}_i$.
The $\mathit{RD}$ of all vertices
$v \in \mathit{LS}_i \cup \mathit{RS}_i \cup \mathit{Row}_i$ are set to $T + l + 2h$.

The vertices are connected as indicated by solid lines in the figure.
The four `long' edges in the figure have their $\mathit{FT}$ set to $(3m+1)l$
while all other edges have $\mathit{FT}$ equal to $2$, e.g.,
$\mathit{FT}(v_{top}, v^{t, L}_1)  = (3m + 1)l$ and $\mathit{FT}(v^{b, 1}_1, v^{c, 1}_1) = 2$.
There is an exception though: $\mathit{FT}(v_m^{b, L}, v_m)$ and $\mathit{FT}(v_m^{b, R}, v_m)$
(in the variable gadget for $x^0_m$) are equal to $(3m + 2)l$.

The variable gadgets for variables $x^1_i$ are constructed almost identically.
The three vertices on the left and right side are now $\mathit{LS}_{i+m}$ and $\mathit{RS}_{i+m}$.
The set of vertices in the row is now
\(\mathit{Row}_{i + m} = \bigcup_{j \in \{1, \ldots, h\}} \mathit{CB}_{i + m}^j \cup \bigcup_{j \in \{0, \ldots, h\}} \mathit{SB}_{i + m}^j\).
The vertex at the top is $v_{i + m - 1}$ and the vertex at the bottom is $v_{i + m}$ ($i \neq m$)
or $v_{bot}$ ($i = m$). The $\mathit{RD}$ of vertices in
$\mathit{LS}_{i + m} \cup \mathit{RS}_{i + m} \cup \mathit{Row}_{i + m}$ are set to $T + l + 2h$,
and the $\mathit{FT}$ of the edges are set as before, except that all
the `long' edges now have $\mathit{FT}$ equal to $(3m+2)l$.

Now consider the following ordering of variables:
\[
x_1^0, x_2^0, \ldots, x_m^0, x_1^1, x_2^1, \ldots, x_m^1 \,.
\]
Observe that the variable gadgets for two `neighbouring' variables (with respect to this ordering) have a vertex in common.
To be precise, the set of shared vertices is $S = \{v_1, \ldots, v_{2m-1}\}$.
We set the $\mathit{RD}$ of all vertices in $S$ to $T + 2h$ and the $\mathit{RD}$ of
$v_{top}$ and $v_{bot}$ to $T$. 

\subsubsection{Clause Vertices}

For each clause $c_j$ in $\varphi(0)$, there is a \emph{clause vertex} $v^{c_j}$
with $\mathit{RD}$ set to $\frac{3}{2}T$.
If $x_i^0$ occurs in $c_j$ as a literal, we connect the $j$-th clause box in the variable gadget for $x_i^0$
to $v^{c_j}$ as shown in Figure~\ref{fig:detour} and set the $\mathit{FT}$ of these new edges to $2$
(e.g., $\mathit{FT}(v^{c_j}, v_i^{c, j}) = \mathit{FT}(v^{c_j}, v_i^{d, j}) = 2$). 
If instead $\neg x_i^0$ occurs in $c_j$, then $v^{c_j}$ is connected to $v_i^{a, j}$ and $v_i^{f, j}$
(with $\mathit{FT}$ equal to $2$). Likewise, the variable gadget for $x_i^1$ may be connected to
$v^{c_j}$ via $\{v_{i+m}^{c, j}, v_{i+m}^{d, j}\}$ (if $x_i^1$ occurs in $c_j$)
or $\{v_{i+m}^{a, j}, v_{i+m}^{f, j}\}$ (if $\neg x_i^1$ occurs in $c_j$).
%Note that by assumption there can be at most one such `detour' for such a $3 \times 2$ box.

\begin{figure}[h]
\begin{minipage}[b]{0.5\linewidth}
\centering

\begin{tikzpicture}[-,>=stealth', auto, transform shape, node distance=2.5cm,
                    semithick, every state/.style={fill=none,draw=black,text=black,shape=circle,scale=0.3}]

\node[state, draw=none]	   (vi) 								{\LARGE $\boldsymbol{\cdots}$};
\node[state, draw=none]	   (vii) [above=.37cm of vi]	{\LARGE $\boldsymbol{\cdots}$};
\node[state, draw=none]	   (vio) [below=.37cm of vi]	{\LARGE $\boldsymbol{\cdots}$};

\node[state]			   (v5) [left of =vi] {$$};
\node[state]			   (v5i) [above=.5cm of v5] {$$};
\node[state]			   (v5o) [below=.5cm of v5] {$$};

\node[state]			   (v4) [left of =v5] {$$};
\node[state]			   (v4i) [below=.5cm of v4] {$$};
\node[state]			   (v4o) [above=.5cm of v4] {$$};

\node[state, draw=none]	   (vi') [left of =v4]				{\LARGE $\boldsymbol{\cdots}$};
\node[state, draw=none]	   (vii') [above=.37cm of vi']	{\LARGE $\boldsymbol{\cdots}$};
\node[state, draw=none]	   (vio') [below=.37cm of vi']	{\LARGE $\boldsymbol{\cdots}$};

\node[state]				 (cj) [below=2cm of $(v4i)!0.5!(v5o)$]  {$$};

%\node[state]			   (cj) [below right=2cm and 2cm of v4i] {$$};
\node (box1) [draw=gray!50, fit=(v4o) (v5o), dashed, line width=1mm, inner sep=3mm] {};
%\node [yshift=-4mm, gray] at (box1.south) {$c_j$};
\node [xshift=1.5mm, yshift=-1.5mm] at (cj.south east) {$v^{c_j}$};

\path   (vii') edge [->, line width=1mm] node {\scriptsize $2$} (v4o)
        (vio') edge node [swap] {\scriptsize $2$} (v4i)
		  (v5i) edge [->, line width=1mm] node {\scriptsize $2$} (vii)
        (vio) edge node {\scriptsize $2$} (v5o)
			
        (v4) edge [->, line width=1mm] node [swap] {\scriptsize $2$} (v4i)
        (v4o) edge [->, line width=1mm] node [swap] {\scriptsize $2$} (v4)

        (v4i) edge node [swap] {\scriptsize $2$} (v5o)
        (v5i) edge node [swap] {\scriptsize $2$} (v4o)

        (v4i) edge [->, line width=1mm] node [swap] {\scriptsize $2$} (cj.north west)
        (cj.north east) edge [->, line width=1mm] node [swap] {\scriptsize $2$} (v5o)

        (v5) edge [->, line width=1mm] node {\scriptsize $2$} (v5i)
        (v5o) edge [->, line width=1mm] node {\scriptsize $2$} (v5);

\end{tikzpicture}
\caption{The variable occurs positively in $c_j$}
\label{fig:detour}
\end{minipage}
\begin{minipage}[b]{0.5\linewidth}
\centering
\begin{tikzpicture}[-,>=stealth', auto, transform shape, node distance=7cm, scale=0.9,
                    semithick, every state/.style={fill=none,draw=black,text=black,shape=circle,scale=0.3}]

\node[state]			   (p) {$$};

%mathit!

\node[state]			   (idown) [above left of =p] {$$};

\node[state]			   (oup) [above right of =p] {$$};

\node[state]			   (iup) [below left of =p] {$$};

\node[state]			   (odown) [below right of =p] {$$};

\node [xshift=0mm, yshift=4mm] at (p.north) {$\mathit{pvt}_i^L$};

\node [xshift=2mm, yshift=2mm] at (idown.north east) {${in}_i^{\downarrow, L}$};
\node [xshift=2mm, yshift=2mm] at (oup.north east) {${out}_i^{\uparrow, L}$};
\node [xshift=-2mm, yshift=2mm] at (iup.north west) {${in}_i^{\uparrow, L}$};
\node [xshift=2mm, yshift=2mm] at (odown.north east) {${out}_i^{\downarrow, L}$};

\path   (p) edge [swap] node {\scriptsize $2$} (oup)
		  (idown) edge node {\scriptsize $2$} (p)
		  (p) edge [swap] node {\scriptsize $2$} (odown)
		  (iup) edge node {\scriptsize $2$} (p);
\end{tikzpicture}
\caption{A consistency gadget $\mathit{LCG}_i$}
\label{fig:aux}
\end{minipage}
\end{figure}

\vspace{-0.5cm}

\subsubsection{Consistency Gadgets}
For each $i \in \{1, \ldots, m \}$, we construct two \emph{consistency gadgets}
$\mathit{LCG}_i$ (see Figure~\ref{fig:aux}) and $\mathit{RCG}_i$.
In $\mathit{LCG}_i$, the vertex at the centre ($\mathit{pvt}_i^{t, L}$)
has $\mathit{RD}$ equal to $\frac{1}{2}T + m\big(2(3m+2)l + l\big) - (2i - 1)l + 4h$.
The other four vertices ($\mathit{in}^{\downarrow, L}_i$, $\mathit{out}^{\uparrow, L}_i$, $\mathit{in}^{\uparrow, L}_i$ and $\mathit{out}^{\downarrow, L}_i$)
have $\mathit{RD}$ equal to $\frac{3}{2}T$. The $\mathit{FT}$ from $\mathit{pvt}_i^{t, L}$ to any
of the other four vertices is $2$. $\mathit{RCG}_i$ is identical except that the subscripts on
the vertices change from $L$ to $R$.

$\mathit{LCG}_i$ and $\mathit{RCG}_i$ are connected to the variable gadgets for $x_i^0$ and $x_i^1$
as in Figure~\ref{fig:connectaux}.
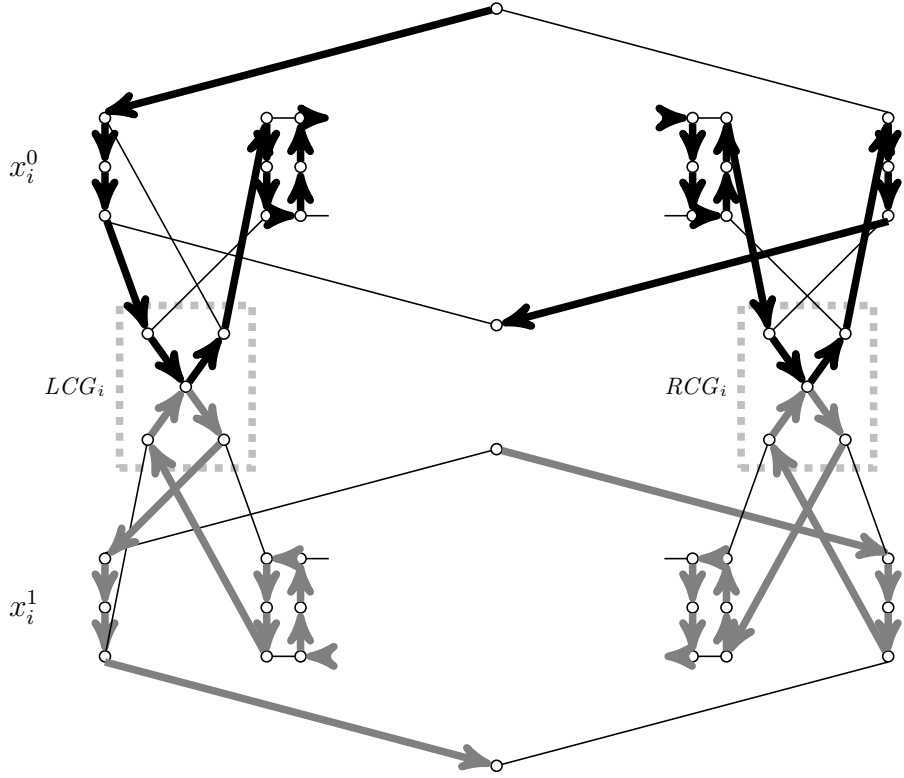
\begin{figure}[h]
\centering

\begin{tikzpicture}[-,>=stealth', auto, transform shape, node distance=2.5cm,
                    semithick, every state/.style={fill=none,draw=black,text=black,shape=circle,scale=0.18}]

\node[state, draw=none]			   (v3) {$$};
\node[state, draw=none]			   (v3i) [above=.5cm of v3] {$$};
\node[state, draw=none]			   (v3o) [below=.5cm of v3] {$$};

\node[state]			   (v2) [left of =v3] {$$};
\node[state]			   (v2i) [below=.5cm of v2] {$$};
\node[state]			   (v2o) [above=.5cm of v2] {$$};

\node[state]			   (v1) [left of =v2] {$$};
\node[state]			   (v1i) [above=.5cm of v1] {$$};
\node[state]			   (v1o) [below=.5cm of v1] {$$};

\node[state]			   (v0) [left=2cm of v1] {$$};
\node[state]			   (v0i) [above=.5cm of v0] {$$};
\node[state]			   (v0o) [below=.5cm of v0] {$$};

\node[state]				(vt1)	[above right=2cm and 2.5cm of v2] {$$};
\node[state]				(vb1)	[below right=2cm and 2.5cm of v2] {$$};

\node[state]				(vt2)	[below=1.5cm of vb1] {$$};

\node[state]			   (v5) [below right=2cm and 2.5cm of vt1] {$$};
\node[state]			   (v5i) [below=.5cm of v5] {$$};
\node[state]			   (v5o) [above=.5cm of v5] {$$};

\node[state, draw=none]			   (v4) [left of =v5] {$$};
\node[state, draw=none]			   (v4i) [above=.5cm of v4] {$$};
\node[state, draw=none]			   (v4o) [below=.5cm of v4] {$$};

\node[state]			   (v6) [right of =v5] {$$};
\node[state]			   (v6i) [above=.5cm of v6] {$$};
\node[state]			   (v6o) [below=.5cm of v6] {$$};

\node[state]			   (v7) [right=2cm of v6] {$$};
\node[state]			   (v7i) [above=.5cm of v7] {$$};
\node[state]			   (v7o) [below=.5cm of v7] {$$};

\node[state]			   (v2') [below left=2cm and 2.5cm of vt2] {$$};
\node[state]			   (v2i') [below=.5cm of v2'] {$$};
\node[state]			   (v2o') [above=.5cm of v2'] {$$};

\node[state, draw=none]			   (v3') [right of =v2'] {$$};
\node[state, draw=none]			   (v3i') [above=.5cm of v3'] {$$};
\node[state, draw=none]			   (v3o') [below=.5cm of v3'] {$$};

\node[state]			   (v1') [left of =v2'] {$$};
\node[state]			   (v1i') [above=.5cm of v1'] {$$};
\node[state]			   (v1o') [below=.5cm of v1'] {$$};

\node[state]			   (v0') [left=2cm of v1'] {$$};
\node[state]			   (v0i') [above=.5cm of v0'] {$$};
\node[state]			   (v0o') [below=.5cm of v0'] {$$};

\node[state]				(vb1')	[below right=2cm and 2.5cm of v2'] {$$};

\node[state]			   (v5') [below right=2cm and 2.5cm of vt2] {$$};
\node[state]			   (v5i') [below=.5cm of v5'] {$$};
\node[state]			   (v5o') [above=.5cm of v5'] {$$};

\node[state, draw=none]			   (v4') [left of =v5'] {$$};
\node[state, draw=none]			   (v4i') [above=.5cm of v4'] {$$};
\node[state, draw=none]			   (v4o') [below=.5cm of v4'] {$$};

\node[state]			   (v6') [right of =v5'] {$$};
\node[state]			   (v6i') [above=.5cm of v6'] {$$};
\node[state]			   (v6o') [below=.5cm of v6'] {$$};

\node[state]			   (v7') [right=2cm of v6'] {$$};
\node[state]			   (v7i') [above=.5cm of v7'] {$$};
\node[state]			   (v7o') [below=.5cm of v7'] {$$};

\node[state]				 (p') [below=2.85cm of $(v6)!0.5!(v7)$]  {$$};

\node[state]			   (idown') [above left=.6cm and .4cm of p'] {$$};

\node[state]			   (oup') [above right=.6cm and .4cm of p'] {$$};

\node[state]			   (iup') [below left=.6cm and .4cm  of p'] {$$};

\node[state]			   (odown') [below right=.6cm and .4cm of p'] {$$};

\node[state]				 (p) [below=2.85cm of $(v0)!0.5!(v1)$]  {$$};

\node[state]			   (idown) [above left=.6cm and .4cm of p] {$$};

\node[state]			   (oup) [above right=.6cm and .4cm of p] {$$};

\node[state]			   (iup) [below left=.6cm and .4cm of p] {$$};

\node[state]			   (odown) [below right=.6cm and .4cm of p] {$$};

\node [xshift=-10mm] at (v0.west) {\large $x_i^0$};

\node [xshift=-10mm] at (v0'.west) {\large $x_i^1$};

\node (box1) [draw=gray!50, fit=(idown) (odown), dashed, line width=1mm, inner sep=3mm] {};
\node (box2) [draw=gray!50, fit=(idown') (odown'), dashed, line width=1mm, inner sep=3mm] {};
\node [xshift=-14mm] at (p.west) {$\mathit{LCG}_i$};
\node [xshift=-14mm] at (p'.west) {$\mathit{RCG}_i$};

\path   (vt1) edge [->, line width=1mm] node {$$} (v0i.north)
	     (v0i) edge [->, line width=1mm] node {$$} (v0)
	     (v0) edge [->, line width=1mm] node {$$} (v0o)

	     (v1i) edge [->, line width=1mm] node {$$} (v1)
	     (v1) edge [->, line width=1mm] node {$$} (v1o)

	     (v2i) edge [->, line width=1mm] node {$$} (v2)
	     (v2) edge [->, line width=1mm] node {$$} (v2o)

	     (v1i) edge node {$$} (v2o)
	     (v1o) edge [->, line width=1mm] node {$$} (v2i)

	     (v2i) edge node {$$} (v3o)
	     (v2o) edge [->, line width=1mm] node {$$} (v3i)
	     (v0o.south) edge node {$$} (vb1)

		  (vt2) edge node {$$} (v0i'.north)
	     (v0i') edge [->, draw=black!50, line width=1mm] node {$$} (v0')
	     (v0') edge [->, draw=black!50, line width=1mm] node {$$} (v0o')

	     (v1i') edge [->, draw=black!50, line width=1mm] node {$$} (v1')
	     (v1') edge [->, draw=black!50, line width=1mm] node {$$} (v1o')

	     (v2i') edge [->, draw=black!50, line width=1mm] node {$$} (v2')
	     (v2') edge [->, draw=black!50, line width=1mm] node {$$} (v2o')

	     (v2o') edge [->, draw=black!50, line width=1mm] node {$$} (v1i')
	     (v1o') edge node {$$} (v2i')

	     (v3o') edge [->, draw=black!50, line width=1mm] node {$$} (v2i')
	     (v2o') edge node {$$} (v3i')
	     (v0o'.south) edge [->, draw=black!50, line width=1mm] node {$$} (vb1')

		  (vt1) edge node {$$} (v7i.north)
	     (v7i) edge [->, line width=1mm] node {$$} (v7)
	     (v7) edge [->, line width=1mm] node {$$} (v7o)

	     (v6) edge [->, line width=1mm] node {$$} (v6i)
	     (v6o) edge [->, line width=1mm] node {$$} (v6)

	     (v5) edge [->, line width=1mm] node {$$} (v5i)
	     (v5o) edge [->, line width=1mm] node {$$} (v5)

	     (v4i) edge [->, line width=1mm] node {$$} (v5o)
	     (v4o) edge node {$$} (v5i)

	     (v5i) edge [->, line width=1mm] node {$$} (v6o)
	     (v5o) edge node {$$} (v6i)
	     (v7o.south) edge  [->, line width=1mm] node {$$} (vb1)

		  (vt2) edge [->, draw=black!50, line width=1mm] node {$$} (v7i'.north)
	     (v7i') edge [->, draw=black!50, line width=1mm] node {$$} (v7')
	     (v7') edge [->, draw=black!50, line width=1mm] node {$$} (v7o')

	     (v6o') edge [->, draw=black!50, line width=1mm] node {$$} (v6')
	     (v6') edge [->, draw=black!50, line width=1mm] node {$$} (v6i')

	     (v5') edge [->, draw=black!50, line width=1mm] node {$$} (v5i')
	     (v5o') edge [->, draw=black!50, line width=1mm] node {$$} (v5')

	     (v4i') edge node {$$} (v5o')
	     (v5i') edge [->, draw=black!50, line width=1mm] node {$$} (v4o')

	     (v5i') edge node {$$} (v6o')
	     (v6i') edge [->, draw=black!50, line width=1mm] node {$$} (v5o')
	     (v7o'.south) edge node {$$} (vb1')

		  (p) edge [->, line width=1mm] node {$$} (oup)
		  (idown) edge [->, line width=1mm] node {$$} (p)
		  (p) edge [->, draw=black!50, line width=1mm] node {$$} (odown)
		  (iup) edge [->, draw=black!50, line width=1mm] node {$$} (p)

	     (v0o) edge [->, line width=1mm] node {$$} (idown)
	     (v1o) edge node {$$} (idown)

	     (v0i) edge node {$$} (oup)
	     (oup) edge [->, line width=1mm] node {$$} (v1i)

	     (v0o') edge node {$$} (iup)
	     (v1o') edge [->, draw=black!50, line width=1mm] node {$$} (iup)

	     (odown) edge [->, draw=black!50, line width=1mm] node {$$} (v0i')
	     (v1i') edge node {$$} (odown)

		  (p') edge [->, line width=1mm] node {$$} (oup')
		  (idown') edge [->, line width=1mm] node {$$} (p')
		  (p') edge [->, draw=black!50, line width=1mm] node {$$} (odown')
		  (iup') edge [->, draw=black!50, line width=1mm] node {$$} (p')

	     (v6i) edge [->, line width=1mm] node {$$} (idown')
		  (v7o) edge node {$$} (idown')
		  (v6o) edge node {$$} (oup')
	     (oup') edge [->, line width=1mm] node {$$} (v7i)

	     (v6i') edge node {$$} (iup')
		  (v7o') edge [->, draw=black!50, line width=1mm] node {$$} (iup')
		  (odown') edge [->, draw=black!50, line width=1mm] node {$$} (v6o')
	     (v7i') edge node {$ $} (odown');

\end{tikzpicture}
\caption{Connecting the variable gadgets for $x_i^0$ and $x_i^1$ to $\mathit{LCG}_i$ and $\mathit{RCG}_i$}
\label{fig:connectaux}
\end{figure}
The vertices ${in}_i^{\downarrow, L}, {out}_i^{\uparrow, L}, {in}_i^{\downarrow, R}, {out}_i^{\uparrow, R}$
are connected to certain vertices in the variable gadget for $x_i^0$---this allows $\mathit{pvt}_i^L$ and $\mathit{pvt}_i^R$ to be traversed `from above'. Similarly,
the edges connected to ${in}_i^{\uparrow, L}, {out}_i^{\downarrow, L}, {in}_i^{\uparrow, L}, {out}_i^{\downarrow, L}$ allow $\mathit{pvt}_i^L$ and $\mathit{pvt}_i^R$ to be traversed `from below'.
Formally, $\mathit{FT}(v, v') = 2$ if
\begin{itemize}
\item $v = \mathit{in}^{\downarrow, L}_i, v' \in \{ v_i^{b, L}, v_i^{\bar{c}, 0} \} \text{ or } v = \mathit{in}^{\downarrow, R}_i, v' \in \{ v_i^{\bar{f}, h}, v_i^{b, R} \}$
\item $v = \mathit{out}^{\uparrow, L}_i, v' \in \{ v_i^{t, L}, v_i^{\bar{a}, 0} \} \text{ or } v = \mathit{out}^{\uparrow, R}_i, v' \in \{ v_i^{\bar{d}, h}, v_i^{t, R} \}$
\item $v = \mathit{in}^{\uparrow, L}_i, v' \in \{ v_{(i+m)}^{b, L}, v_{(i+m)}^{\bar{c}, 0} \} \text{ or } v = \mathit{in}^{\uparrow, R}_i, v' \in \{ v_{(i + m)}^{\bar{f}, h}, v_{(i + m)}^{b, R} \}$
\item $v = \mathit{out}^{\downarrow, L}_i, v' \in \{ v_{(i+m)}^{t, L}, v_{(i+m)}^{\bar{a}, 0} \} \text{ or } v = \mathit{out}^{\downarrow, R}_i, v' \in \{ v_{(i+m)}^{\bar{d}, h}, v_{(i+m)}^{t, R} \}$.
\end{itemize}
Two parts of an intended path, which we will explain in more detail later, is also illustrated in Figure~\ref{fig:connectaux}.

Finally, there is a vertex $v_{mid}$ with $\mathit{RD}(v_{mid}) = T$ connected to $v_{bot}$ and $v_{top}$
with two edges, both with $\mathit{FT}$ equal to $\frac{1}{4}T$.
The $\mathit{FT}$ of all the missing edges are $2T$
(note that the largest value in $\mathit{RD}$ is less than $2T$, so these edges can never be taken).
This completes the construction of $G$. An example with $m = 3$
is given in Figure~\ref{fig:stack}, where vertices in $S$ (shared by two variable gadgets) are depicted as solid circles.
%$\overline{x}^0 = \{x_1^0, x_2^0, x_3^0\}$ and $\overline{x}^1 = \{x_1^1, x_2^1, x_3^1\}$

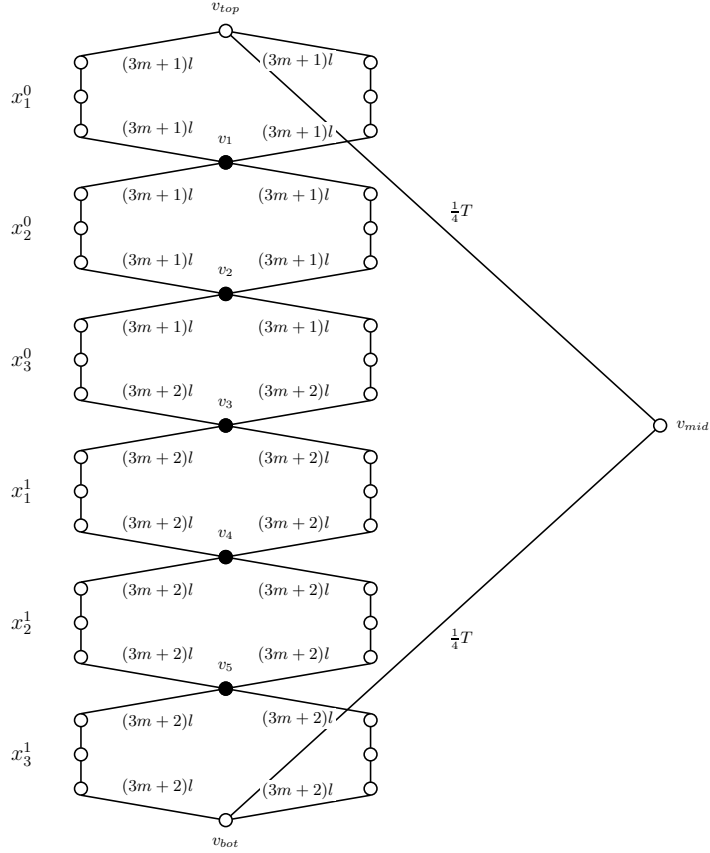
\begin{figure}[h]
\centering
\begin{tikzpicture}[-,>=stealth', auto, transform shape, node distance=.6cm, scale=0.7,
                    semithick, every state/.style={fill=none,draw=black,text=black,shape=circle, scale=0.3}]

\node[state, draw=none]			   (v3) {$$};
\node[state, draw=none]			   (v3i) [above=.25cm of v3] {$$};
\node[state, draw=none]			   (v3o) [below=.25cm of v3] {$$};

\node[state]			   (v0) [left=2.5cm of v3] {$$};
\node[state]			   (v0i) [above=.4cm of v0] {$$};
\node[state]			   (v0o) [below=.4cm of v0] {$$};

\node[state]				(vt1)	[above=1cm of v3] {$$};
\node[state]				(vb1)	[below=1cm of v3] {$$};

\node[state]			   (v7) [right=2.5cm of v3] {$$};
\node[state]			   (v7i) [above=.4cm of v7] {$$};
\node[state]			   (v7o) [below=.4cm of v7] {$$};

\node[state, draw=none]			   (v32) [below=1cm of vb1] {$$};
\node[state, draw=none]			   (v3i2) [above=.25cm of v32] {$$};
\node[state, draw=none]			   (v3o2) [below=.25cm of v32] {$$};

\node[state]			   (v02) [left=2.5cm of v32] {$$};
\node[state]			   (v0i2) [above=.4cm of v02] {$$};
\node[state]			   (v0o2) [below=.4cm of v02] {$$};

\node[state, fill=black]				(vt2)	[above=1cm of v32] {$$};
\node[state]				(vb2)	[below=1cm of v32] {$$};

\node[state]			   (v72) [right=2.5cm of v32] {$$};
\node[state]			   (v7i2) [above=.4cm of v72] {$$};
\node[state]			   (v7o2) [below=.4cm of v72] {$$};

\node[state, draw=none]			   (v33) [below=1cm of vb2] {$$};
\node[state, draw=none]			   (v3i3) [above=.25cm of v33] {$$};
\node[state, draw=none]			   (v3o3) [below=.25cm of v33] {$$};

\node[state]			   (v03) [left=2.5cm of v33] {$$};
\node[state]			   (v0i3) [above=.4cm of v03] {$$};
\node[state]			   (v0o3) [below=.4cm of v03] {$$};

\node[state, fill=black]				(vt3)	[above=1cm of v33] {$$};
\node[state]				(vb3)	[below=1cm of v33] {$$};

\node[state]			   (v73) [right=2.5cm of v33] {$$};
\node[state]			   (v7i3) [above=.4cm of v73] {$$};
\node[state]			   (v7o3) [below=.4cm of v73] {$$};

\node[state, draw=none]			   (v34) [below=1cm of vb3] {$$};
\node[state, draw=none]			   (v3i4) [above=.25cm of v34] {$$};
\node[state, draw=none]			   (v3o4) [below=.25cm of v34] {$$};

\node[state]			   (v04) [left=2.5cm of v34] {$$};
\node[state]			   (v0i4) [above=.4cm of v04] {$$};
\node[state]			   (v0o4) [below=.4cm of v04] {$$};

\node[state, fill=black]				(vt4)	[above=1cm of v34] {$$};
\node[state]				(vb4)	[below=1cm of v34] {$$};

\node[state]			   (v74) [right=2.5cm of v34] {$$};
\node[state]			   (v7i4) [above=.4cm of v74] {$$};
\node[state]			   (v7o4) [below=.4cm of v74] {$$};

\node[state]				(vmid)	[right=8cm of vt4] {$$};

\node[state, draw=none]			   (v35) [below=1cm of vb4] {$$};
\node[state, draw=none]			   (v3i5) [above=.25cm of v35] {$$};
\node[state, draw=none]			   (v3o5) [below=.25cm of v35] {$$};

\node[state]			   (v05) [left=2.5cm of v35] {$$};
\node[state]			   (v0i5) [above=.4cm of v05] {$$};
\node[state]			   (v0o5) [below=.4cm of v05] {$$};

\node[state, fill=black]				(vt5)	[above=1cm of v35] {$$};
\node[state]				(vb5)	[below=1cm of v35] {$$};

\node[state]			   (v75) [right=2.5cm of v35] {$$};
\node[state]			   (v7i5) [above=.4cm of v75] {$$};
\node[state]			   (v7o5) [below=.4cm of v75] {$$};

\node[state, draw=none]			   (v36) [below=1cm of vb5] {$$};
\node[state, draw=none]			   (v3i6) [above=.25cm of v36] {$$};
\node[state, draw=none]			   (v3o6) [below=.25cm of v36] {$$};

\node[state]			   (v06) [left=2.5cm of v36] {$$};
\node[state]			   (v0i6) [above=.4cm of v06] {$$};
\node[state]			   (v0o6) [below=.4cm of v06] {$$};

\node[state, fill=black]				(vt6)	[above=1cm of v36] {$$};
\node[state]				(vb6)	[below=1cm of v36] {$$};

\node[state]			   (v76) [right=2.5cm of v36] {$$};
\node[state]			   (v7i6) [above=.4cm of v76] {$$};
\node[state]			   (v7o6) [below=.4cm of v76] {$$};

\node [xshift=5mm] at (vmid.east) {\small $v_{mid}$};
\node [yshift=3mm] at (vt1.north) {\small $v_{top}$};
\node [yshift=-3mm] at (vb6.south) {\small $v_{bot}$};
\node [yshift=3mm] at (vt2.north) {\small $v_1$};
\node [yshift=3mm] at (vt3.north) {\small $v_2$};
\node [yshift=3mm] at (vt4.north) {\small $v_3$};
\node [yshift=3mm] at (vt5.north) {\small $v_4$};
\node [yshift=3mm] at (vt6.north) {\small $v_5$};

\node [xshift=-10mm] at (v0.west) {\large $x_1^0$};
\node [xshift=-10mm] at (v02.west) {\large $x_2^0$};
\node [xshift=-10mm] at (v03.west) {\large $x_3^0$};
\node [xshift=-10mm] at (v04.west) {\large $x_1^1$};
\node [xshift=-10mm] at (v05.west) {\large $x_2^1$};
\node [xshift=-10mm] at (v06.west) {\large $x_3^1$};

\path   (vmid) edge node [swap] {\small $\frac{1}{4}T$} (vt1)
		  (vmid) edge node {\small $\frac{1}{4}T$} (vb6)
		  (vt1) edge node [near end] {\small $(3m+1)l$} (v0i.north)
		  (vt1) edge node [swap, near end, fill=white, inner sep=1pt] {\small $(3m+1)l$} (v7i.north)
		  (v0i) edge node {$$} (v0)
		  (v0) edge node {$$} (v0o)
		  (v7i) edge node {$$} (v7)
		  (v7) edge node {$$} (v7o)
		  (v7o.south) edge node [swap, near start, fill=white, inner sep=1pt] {\small $(3m+1)l$} (vb1)
		  (v0o.south) edge node [near start] {\small $(3m+1)l$} (vb1)

		  (vt2) edge node [near end] {\small $(3m+1)l$} (v0i2.north)
		  (vt2) edge node [swap, near end] {\small $(3m+1)l$} (v7i2.north)
		  (v0i2) edge node {$$} (v02)
		  (v02) edge node {$$} (v0o2)
		  (v7i2) edge node {$$} (v72)
		  (v72) edge node {$$} (v7o2)
		  (v7o2.south) edge node [swap, near start] {\small $(3m+1)l$} (vb2)
		  (v0o2.south) edge node [near start] {\small $(3m+1)l$} (vb2)

		  (vt3) edge node [near end] {\small $(3m+1)l$} (v0i3.north)
		  (vt3) edge node [swap, near end] {\small $(3m+1)l$} (v7i3.north)
		  (v0i3) edge node {$$} (v03)
		  (v03) edge node {$$} (v0o3)
		  (v7i3) edge node {$$} (v73)
		  (v73) edge node {$$} (v7o3)
		  (v7o3.south) edge node [swap, near start] {\small $(3m+2)l$} (vb3)
		  (v0o3.south) edge node [near start] {\small $(3m+2)l$} (vb3)

		  (vt4) edge node [near end] {\small $(3m+2)l$} (v0i4.north)
		  (vt4) edge node [swap, near end] {\small $(3m+2)l$} (v7i4.north)
		  (v0i4) edge node {$$} (v04)
		  (v04) edge node {$$} (v0o4)
		  (v7i4) edge node {$$} (v74)
		  (v74) edge node {$$} (v7o4)
		  (v7o4.south) edge node [swap, near start] {\small $(3m+2)l$} (vb4)
		  (v0o4.south) edge node [near start] {\small $(3m+2)l$} (vb4)

		  (vt5) edge node [near end] {\small $(3m+2)l$} (v0i5.north)
		  (vt5) edge node [swap, near end] {\small $(3m+2)l$} (v7i5.north)
		  (v0i5) edge node {$$} (v05)
		  (v05) edge node {$$} (v0o5)
		  (v7i5) edge node {$$} (v75)
		  (v75) edge node {$$} (v7o5)
		  (v7o5.south) edge node [swap, near start] {\small $(3m+2)l$} (vb5)
		  (v0o5.south) edge node [near start] {\small $(3m+2)l$} (vb5)

		  (vt6) edge node [near end] {\small $(3m+2)l$} (v0i6.north)
		  (vt6) edge node [swap, near end, fill=white, inner sep=1pt] {\small $(3m+2)l$} (v7i6.north)
		  (v0i6) edge node {$$} (v06)
		  (v06) edge node {$$} (v0o6)
		  (v7i6) edge node {$$} (v76)
		  (v76) edge node {$ $} (v7o6)
		  (v7o6.south) edge node [swap, near start, fill=white, inner sep=1pt] {\small $(3m+2)l$} (vb6)
		  (v0o6.south) edge node [near start] {\small $(3m+2)l$} (vb6)

;

\end{tikzpicture}
\caption{An example with $m = 3$. Solid circles denote shared vertices $S = \{v_1, \ldots, v_5\}$.}
\label{fig:stack}
\end{figure}

The rest of this section is devoted to the proof of the following proposition.
\begin{proposition}\label{prop:iff}
$\bigwedge_{j \geq 0} \varphi(j)$ is satisfiable iff $G$ has a solution.
\end{proposition}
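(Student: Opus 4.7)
The plan is to prove both directions separately, exploiting the intended semantics of the gadgets.

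For the ($\Rightarrow$) direction, suppose $\bigwedge_{j \geq 0} \varphi(j)$ is satisfied by some assignment $\alpha$. I would construct a periodic solution $s$ as an infinite concatenation of path segments, one per iteration $j$. In iteration $j$, the UAV starts at $v_{top}$, descends through the variable gadgets for $x_1^j, \ldots, x_m^j, x_1^{j+1}, \ldots, x_m^{j+1}$ in order, reaches $v_{bot}$, and returns to $v_{top}$ via $v_{mid}$. In each gadget it chooses the left side (if the corresponding variable is $\mathbf{true}$ under $\alpha$) or the right side (if $\mathbf{false}$); Figure~\ref{fig:connectaux} already depicts the intended traversal through the consistency gadgets, entering via $\mathit{in}^{\cdot,\cdot}$ and exiting via $\mathit{out}^{\cdot,\cdot}$. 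Each clause vertex $v^{c_j}$ is visited once per iteration by taking the detour of Figure~\ref{fig:detour} inside the clause box of some literal that satisfies $c_j$ under $\alpha$ (such a literal exists precisely because $\alpha \models \varphi(j)$). I would then verify, using the definition of $T$, that the duration of one iteration is exactly $T$, so that $v_{top}$, $v_{bot}$, $v_{mid}$ and every shared vertex $v_i \in S$ are revisited within their deadlines; that every $v^{c_j}$ is revisited within $\frac{3}{2}T$; and that each pivot vertex $\mathit{pvt}_i^{L/R}$ is hit within its carefully tuned deadline precisely because the direction chosen for $x_i^1$ in iteration $j$ agrees with that chosen for $x_i^0$ in iteration $j+1$.

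For the ($\Leftarrow$) direction, suppose $G$ has a solution $s$. Since $\mathit{RD}(v_{top}) = T$, consecutive visits to $v_{top}$ occur within $T$ time units. I would first prove an equality: the minimum time to visit every vertex of $G$ at least once, given the rigid topology forced by the long edges $(3m+1)l$ and $(3m+2)l$ and the fact that $v_{mid}$, $v_{bot}$ must also be hit, is exactly $T$. Hence each ``iteration'' (between consecutive visits of $v_{top}$) takes exactly $T$, and every vertex is visited exactly once per iteration, with no slack for backtracking. Consequently each variable gadget is traversed straight through on either its left or its right side, inducing a well-defined assignment $\alpha_j$ to $\overline{x}^j \cup \overline{x}^{j+1}$ in iteration $j$. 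Two claims then finish the argument: (i) the deadline $\frac{3}{2}T$ on each $v^{c_j}$ forces a detour through $v^{c_j}$ at least once every two iterations, and the topology of Figure~\ref{fig:detour} allows such a detour only when a literal of $c_j$ is ``active'' in the current iteration's side-choice, so $\alpha_j \models c_j$; (ii) the pivot deadlines $\frac{1}{2}T + m(2(3m+2)l + l) - (2i-1)l + 4h$ are tight enough that $\mathit{pvt}_i^{L}$ (resp.\ $\mathit{pvt}_i^R$) can be reached in time only when the side chosen for $x_i^1$ in iteration $j$ matches the side chosen for $x_i^0$ in iteration $j+1$. Together, the $\alpha_j$ glue into a single assignment satisfying $\bigwedge_{j \geq 0} \varphi(j)$.

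The main obstacle will be the timing calculation behind claim (ii). The asymmetry between $x^0$-gadgets (long edges of weight $(3m+1)l$) and $x^1$-gadgets (long edges of weight $(3m+2)l$), together with the $-(2i-1)l$ offset, is what forces direction consistency; I would case-split on the four possible pairs of side-choices for $(x_i^0 \text{ in iter. } j+1, x_i^1 \text{ in iter. } j)$ and compute, in each case, the shortest distance between successive visits to $\mathit{pvt}_i^L$ (and symmetrically $\mathit{pvt}_i^R$), checking that only the two consistent cases meet the deadline while the two inconsistent ones overshoot it by exactly $2l$. Claim (i) is an analogous but easier accounting using the $\frac{3}{2}T$ deadline on clause vertices together with the fact that, within each iteration, a clause-vertex detour is available only in the clause boxes corresponding to an active literal.
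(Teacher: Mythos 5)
Your overall strategy matches the paper's (an explicit construction for the forward direction; timing and counting arguments that force a rigid traversal for the converse), but two of the claims on which your converse rests are wrong as stated, and both are load-bearing. First, it is not true that every vertex is visited exactly once per iteration: each pivot vertex $\mathit{pvt}_i^L$, $\mathit{pvt}_i^R$ has relative deadline $\frac{1}{2}T + m\big(2(3m+2)l+l\big) - (2i-1)l + 4h$, which is strictly less than the period $T$, so each pivot must be visited \emph{twice} per iteration --- once while the $x_i^0$ gadget is traversed and once while the $x_i^1$ gadget is. This double visit is not a technicality; it is the entire mechanism of the consistency gadgets. Your claim (ii) compares ``successive visits to $\mathit{pvt}_i^L$'' across the iteration boundary, which only makes sense if the last visit in iteration $j$ occurs inside the $x_i^1$ gadget and the next visit occurs inside the $x_i^0$ gadget of iteration $j+1$ --- i.e., two visits per iteration, contradicting your own once-per-vertex accounting. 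The correct count (the paper's Proposition~\ref{prop:exact}) is: pivots twice, every other vertex outside $\{v_{mid}\}$ once; the lower bound on a segment's duration obtained from \emph{that} count is what exactly meets the upper bound $\frac{1}{2}T$ and kills all slack.

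Second, your claim (i) is too weak: a visit to $v^{c_k}$ ``at least once every two iterations'' would only certify that $c_k$ is satisfied in those iterations in which the detour occurs, so you could not conclude $\alpha_j \models \varphi(j)$ for every $j$. What is needed (and what the paper proves) is that every clause vertex is visited in \emph{every} segment; this follows from the deadline $\frac{3}{2}T$ together with the lower bound $\mathit{dur}(s_j) \geq \frac{1}{2}T - O(m+h)$, since skipping even one segment forces a gap exceeding $\frac{3}{2}T$. Finally, ``no slack for backtracking'' does not by itself rule out the path hopping between two different variable gadgets through a shared clause vertex $v^{c_k}$ (which is adjacent to clause boxes in several gadgets); excluding this requires a separate combinatorial argument about the forced traversal patterns `$\sqcupsqcap$' and `$\sqcapsqcup$' of the $3 \times 3$ boxes (the paper's Proposition~\ref{prop:nojump}), not just a duration count. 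Without that step the ``induced assignment $\alpha_j$'' is not even well defined.
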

\subsection{The Proof of Proposition~\ref{prop:iff}}

We first prove the forward direction. Given a satisfying assignment of
$\bigwedge_{j \geq 0} \varphi(j)$, we construct a solution $s$ as follows: $s$ starts from $v_{top}$
and goes through the variable gadgets for $x_1^0, x_2^0, \ldots, x_m^0, x_1^1, x_2^1, \ldots, x_m^1$ in order,
eventually reaching $v_{bot}$. Each variable gadget is traversed according to the
truth value assigned to its corresponding variable. In such a traversal,
both $\mathit{pvt}_i^L$ and $\mathit{pvt}_i^R$ are visited once
(see the thick arrows in Figure~\ref{fig:connectaux} for the situation when $x_i^0$ is assigned $\mathbf{true}$
and $x_i^1$ is assigned $\mathbf{false}$).
Along the way from $v_{top}$ to $v_{bot}$, $s$ detours at certain times and `hits' each clause vertex
exactly once as illustrated by the thick arrows in Figure~\ref{fig:detour}
(this can be done as $\varphi(0)$ is satisfied by the assignment).
Then $s$ goes back to $v_{top}$ through $v_{mid}$ and starts over again, this time following 
the truth values assigned to variables in $\overline{x}^1 \cup \overline{x}^2$, and so on.
One can verify that this describes a solution to $G$.

Now consider the other direction. Let 
%Recall that we may
%By Proposition~\ref{prop:periodic},
%we can assume $s$ to be a periodic string over $V$. 
\[
s = (v_{mid} s_1 v_{mid} \ldots v_{mid} s_p)^\omega
\]
be a periodic solution to $G$ where each \emph{segment} $s_j$, $j \in \{1, \ldots, p \}$ is a
finite subpath visiting only vertices in $V \setminus \{ v_{mid} \}$.
%The proofs of the following two propositions can be found in the full version of this paper~\cite{Ho2014}.
%In the following proposition,  we let $s_{j - 1} = s_p$ if $j = 1$ and $s_{j + 1} = s_1$ if $j = p$.

%To ease the proof, we define the notion of \emph{well-formedness} of periodic solutions.
%\begin{definition}
%A solution $s = (v_{mid} s_1 v_{mid} \ldots v_{mid} s_r)^\omega$ is \emph{well-formed} if
%each of its segments $s_j$ satisfies the following:
%\begin{itemize}
%\item Each vertex in $S \cup \{v_{top}, v_{bot}\}$ is visited exactly once in $s_j$ in the order
%\[
%v_{top}, v_1, \ldots, v_{2m-1}, v_{bot}
%\]
%\item In each variable gadget, all vertices except those constituting `boxes' are traversed
%in the ways shown in Figures~\ref{fig:positive} and~\ref{fig:negative} in $s_j$
%\end{itemize}
%\end{definition}
%

\begin{proposition}\label{prop:tborbt}
In $s = (v_{mid} s_1 v_{mid} \ldots v_{mid} s_p)^\omega$, either of the following holds:
\begin{itemize}
\item All $s_j$, $j \in \{ 1, \ldots, p \}$ starts with $v_{top}$ and ends with $v_{bot}$
\item All $s_j$, $j \in \{ 1, \ldots, p \}$ starts with $v_{bot}$ and ends with $v_{top}$.
\end{itemize}
\end{proposition}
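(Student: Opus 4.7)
The plan is to derive the proposition from three ingredients: the restricted neighborhood of $v_{mid}$, a lower bound $D$ on the duration of any $v_{top}$-to-$v_{bot}$ path in $G$ avoiding $v_{mid}$, and the relative deadlines $\mathit{RD}(v_{top}) = \mathit{RD}(v_{bot}) = T$.

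First, every edge incident to $v_{mid}$ other than the two $\frac{1}{4}T$ edges to $v_{top}$ and $v_{bot}$ has flight time $2T$, which is larger than every relative deadline in $G$, so the UAV can enter and leave $v_{mid}$ only via those two edges. Hence each segment $s_j$ begins and ends at a vertex in $\{v_{top}, v_{bot}\}$, and $\mathit{RD}(v_{mid}) = T$ together with the flanking $\frac{1}{4}T$ edges forces $|s_j| \leq \frac{1}{2}T$. Second, inspecting the chain of $2m$ variable gadgets (each forcing its two long edges of weight $(3m+1)l$ or $(3m+2)l$ plus at least one row crossing) and comparing with the definition of $T$, the minimum duration $D$ of any $v_{top}$-to-$v_{bot}$ path avoiding $v_{mid}$ is precisely $\frac{1}{2}T$. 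In particular, no single segment can contain a round trip between $v_{top}$ and $v_{bot}$, since that would require $\geq 2D > \frac{1}{2}T \geq |s_j|$.

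The heart of the argument is to classify each $s_j$ by its ordered pair of endpoints, giving the four types $(T,B)$, $(B,T)$, $(T,T)$, $(B,B)$ (with $T$, $B$ abbreviating $v_{top}$, $v_{bot}$), and to rule out every configuration other than the two uniform patterns. Direct computation yields: (i) a $(T,B)$ immediately followed by a $(B,T)$ produces a gap of $|s_j| + \frac{1}{2}T + |s_{j+1}| \geq 2D + \frac{1}{2}T > T$ between successive $v_{top}$ visits, contradicting $\mathit{RD}(v_{top})$; (ii) symmetrically, $(B,T)$ followed by $(T,B)$ violates $\mathit{RD}(v_{bot})$. For (iii), any $(T,T)$ segment $s_j$, trivial or not, is impossible: since $s_j$ cannot visit $v_{bot}$, a case analysis on the types of $s_{j-1}$ and $s_{j+1}$ shows that the gap between the nearest surrounding $v_{bot}$ visits always exceeds $T$---the two $\frac{1}{4}T$ edges on either side of $s_j$ and the segment itself contribute at least $T$, and any further full-chain traversal required in a $v_{bot}$-visiting neighbour pushes the sum past $\mathit{RD}(v_{bot})$. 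Case (iv), $(B,B)$, is symmetric. These exclusions leave only the patterns all-$(T,B)$ and all-$(B,T)$, matching the proposition.

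The main obstacle lies in case (iii) (and symmetrically (iv)), which requires careful enumeration of the types of the segments bracketing any putative $(T,T)$ to guarantee the $v_{bot}$-gap bound in all subcases (including trivial $s_j$ with $|s_j| = 0$). The underlying intuition, once the above bounds are in hand, is that any deviation from a strictly alternating $v_{top}$-to-$v_{bot}$ traversal wastes a $v_{mid}$ round trip and stretches the $\mathit{RD}$ budget beyond $T$.
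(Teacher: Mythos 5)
Your overall strategy---classify segments by their ordered endpoint pair in $\{v_{top},v_{bot}\}^2$ and eliminate all but the two uniform patterns via the deadlines of $v_{top}$ and $v_{bot}$---matches the shape of the paper's final step, and your first two ingredients (segments must begin and end in $\{v_{top},v_{bot}\}$ because all other edges at $v_{mid}$ have flight time $2T$; and $\mathit{dur}(s_j)\le\frac{1}{2}T$ from $\mathit{RD}(v_{mid})=T$) are exactly the paper's Lemma~\ref{lem:toporbot} and Proposition~\ref{prop:bound}. The gap is your third ingredient: the claim that the minimum duration $D$ of a $v_{top}$-to-$v_{bot}$ path avoiding $v_{mid}$ is ``precisely $\frac{1}{2}T$'' because the path must thread all $2m$ variable gadgets. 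This is false as a statement about paths in $G$: the clause vertices and the consistency gadgets create shortcuts between non-adjacent gadgets. For instance, $\mathit{LCG}_1$ lets a path pass from the side of the gadget for $x_1^0$ directly to the side of the gadget for $x_1^1$ via $\mathit{pvt}_1^L$ at cost $O(1)$, skipping gadgets $2,\dots,m$ entirely; this alone already gives a $v_{top}$-to-$v_{bot}$ path of duration below $\frac{1}{4}T$, so your key inequality $2D>\frac{1}{2}T$ fails, and clause-vertex hops can shorten $D$ to $O(ml)$. Since every subsequent case ((i), (ii), the exclusion of round trips, and the claim that a $(T,T)$ segment cannot visit $v_{bot}$ internally) rests on $D>\frac{1}{4}T$, the argument does not go through.

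The missing idea is that the length of a segment is forced not by graph distances but by coverage: one must first show (as in the paper's Proposition~\ref{prop:leastonce}) that every segment contains \emph{every} vertex of relative deadline at most $T+l+2h$---because a vertex skipped in one segment has consecutive visits separated by more than $\frac{1}{4}T+\frac{1}{2}T+\mathit{dur}(s_j)+\frac{1}{4}T$, exceeding its deadline---and then observe that each of the $2m-1$ shared vertices $v_1,\dots,v_{2m-1}$ is incident only to long edges, so visiting all of them costs at least the quantity in Eq.~(\ref{eq:eq2}), which is what actually pins $\mathit{dur}(s_j)$ near $\frac{1}{2}T$. The same ``exactly once'' refinement (Proposition~\ref{prop:tbonce}) is what rules out $(T,T)$ and $(B,B)$ segments and interior occurrences of $v_{bot}$, for which you currently have no valid justification. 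A smaller but real secondary issue: for a trivial segment the surrounding $v_{mid}$ round trips contribute exactly $T$, which does \emph{not} violate the closed deadline $\mathit{RD}(v_{bot})=T$; the paper instead derives the contradiction from $v_1$ (with $\mathit{RD}(v_1)=T+2h$) together with the $(3m+1)l$ lower bound on reaching $v_1$ from $v_{top}$, a step your sketch would still need.
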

\begin{proof}
See Appendix~\ref{app:tborbt}.
\end{proof}
We therefore further assume that $s$ satisfies the first case of the proposition above
(this is sound as a periodic solution can be `reversed' while remaining a valid solution).
We argue that $s$ `witnesses' a satisfying assignment of $\bigwedge_{j \geq 0} \varphi(j)$.
\todo{should this be called a lemma?}
\begin{proposition}\label{prop:exact}
In each segment $s_j$, each vertex in \( \bigcup_{i \in \{1, \ldots, m\}} \{\mathit{pvt}_i^L, \mathit{pvt}_i^R\} \) appears twice
whereas other vertices in $V \setminus \{ v_{mid} \}$ appear once.
\end{proposition}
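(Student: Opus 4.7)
\noindent\emph{Proof plan.} The approach is to tightly bound $\mathit{dur}(s_j)$ both above and below and deduce exactness of the claimed visit counts.  The upper bound $\mathit{dur}(s_j) \leq \tfrac{1}{2}T$ follows immediately from $\mathit{RD}(v_{mid}) = T$ together with the two transit edges $v_{mid} \leftrightarrow v_{top}, v_{bot}$ of weight $\tfrac{1}{4}T$.  A matching lower bound---obtained by showing every vertex must be visited at least the claimed number of times and then computing the minimum-cost path subject to these constraints---will force equality.  Since any additional visit beyond the minimum incurs at least two more edge traversals of cost $\geq 2$, tightness of this bound then forces exact visit counts.

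First, I would argue that every $v \in V \setminus \{v_{mid}\}$ must appear at least once in each segment.  For vertices with $\mathit{RD}(v) \leq T + l + 2h$, skipping $v$ in some $s_j$ forces the gap between visits in $s_{j-1}$ and $s_{j+1}$ to include a full mini-cycle plus the long-edge contribution of $s_j$, totalling at least $\tfrac{1}{2}T + 2(3m+1)l$, which far exceeds $\mathit{RD}(v)$ since $\mathit{RD}(v) - T \leq l + 2h \ll 2(3m+1)l$.  For clause vertices (with $\mathit{RD} = \tfrac{3}{2}T$), a finer argument is needed: the position at which $v^{c_j}$ is visited within a segment is fixed (up to a choice among the variable gadgets whose literal occurs in $c_j$), and once these positional constraints are combined with periodicity, one shows that skipping $v^{c_j}$ in any $s_j$ forces the inter-visit gap to exceed $\tfrac{3}{2}T$.

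Each $\mathit{pvt}^L_i, \mathit{pvt}^R_i$ must be visited at least twice per segment.  Direct computation gives $\mathit{RD}(\mathit{pvt}^L_i) < T$.  If $\mathit{pvt}^L_i$ were visited only once per segment at position $d_j$ in $s_j$, periodicity would force $\sum_{j}(d_{j+1} - d_j) = 0$, so for some $j$ the inter-segment gap $\mathit{dur}(s_j) + \tfrac{1}{2}T + (d_{j+1} - d_j)$ would be at least $\mathit{dur}(s_j) + \tfrac{1}{2}T$.  Combined with the lower bound $\mathit{dur}(s_j) \geq (12m^2 + 6m + 1)l$ coming from the mandatory long edges across all $2m$ variable gadgets, this gap strictly exceeds $\mathit{RD}(\mathit{pvt}^L_i)$, a contradiction.

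Finally, a careful accounting gives the minimum cost of a $v_{top}$-to-$v_{bot}$ path satisfying these constraints exactly $\tfrac{1}{2}T$: the long edges total $(12m^2 + 6m + 1)l$ (with the $x_m^0$ asymmetry accounted for); the short edges within each variable gadget contribute exactly $l = 24h + 34$ (two $2$-edge side columns, a row zigzag of cost $24h + 10$ over the $12h + 6$ row vertices, and two consistency-gadget detours of cost $8$ each); the $h$ clause detours add $2h$.  The sum $(12m^2 + 8m + 1)l + 2h = \tfrac{1}{2}T$ matches the upper bound and thus forces all visit counts to be exact.  The main obstacle is the clause-vertex case: its relative deadline $\tfrac{3}{2}T$ is, on its own, not tight enough to force once-per-segment visits, so one must exploit the positional constraints imposed by the graph's connectivity---in particular the deliberate matching $24h + 34 = l$---to rule out alternative distributions of clause visits across segments.
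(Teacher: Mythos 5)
Your overall strategy is the paper's: sandwich $\mathit{dur}(s_j)$ between the upper bound $\tfrac{1}{2}T$ coming from $\mathit{RD}(v_{mid})$ and a lower bound obtained by costing the mandatory visits, and your final accounting $(12m^2+8m+1)l+2h=\tfrac{1}{2}T$ is exactly the computation the paper relies on. But there is a genuine gap at the step you yourself flag as ``the main obstacle'': you never actually prove that each clause vertex (and, which you do not mention at all, each of the $\mathit{in}/\mathit{out}$ vertices of the consistency gadgets, which also have $\mathit{RD}=\tfrac{3}{2}T$) appears at least once in \emph{every} segment. Without that, your matching lower bound is not a lower bound, and the whole exactness argument collapses. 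Your claim that the $\tfrac{3}{2}T$ deadline ``is, on its own, not tight enough'' and that one must instead exploit positional/connectivity constraints is both undeveloped and, as it happens, a misjudgment: the paper resolves this purely by deadlines, via a bootstrap. First one costs only the vertices already known to be mandatory (those with $\mathit{RD}\le T+l+2h$, i.e.\ the rows, sides and shared vertices), obtaining the intermediate bound $\mathit{dur}(s_j)\ge \tfrac{1}{2}T-20m-2h$; then skipping any $\tfrac{3}{2}T$-deadline vertex in $s_j$ forces an inter-visit gap of at least $T+\mathit{dur}(s_j)+2(3m+1)l\ge \tfrac{3}{2}T-20m-2h+2(3m+1)l>\tfrac{3}{2}T$, since $2(3m+1)l=2(3m+1)(24h+34)$ dwarfs $20m+2h$. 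This staged refinement of the lower bound (crude bound $\to$ all low-deadline vertices $\to$ all $\tfrac{3}{2}T$ vertices $\to$ pivots twice $\to$ exact) is the organizing idea your plan is missing.

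Two further points. Your argument that each $\mathit{pvt}$ vertex appears twice only treats the case where it appears exactly once in \emph{every} segment (so that the offsets $d_j$ telescope); it does not exclude, say, one occurrence in $s_1$ and three in $s_2$. The paper instead argues locally: if a pivot occurs only once in some particular $s_j$, the window between its neighbouring occurrences has length at least $\tfrac{1}{2}T+(\tfrac{1}{2}T-4h)+\tfrac{1}{2}T$, so one of the two gaps around the lone occurrence is at least half of that plus $(3m+1)l$, which exceeds $\mathit{RD}(\mathit{pvt}_i^{L})$ and $\mathit{RD}(\mathit{pvt}_i^{R})$. Finally, your stated gap bound ``$\tfrac{1}{2}T+2(3m+1)l$'' for a skipped low-deadline vertex is too small as written (it is below $T\le \mathit{RD}(v)$ and yields no contradiction); the correct bound is $T+\mathit{dur}(s_j)+2(3m+1)l$, accounting for both $v_{mid}$ crossings and the whole of the skipped segment.
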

\begin{proof}
See Appendix~\ref{app:cnt}.
\end{proof}
Based on this proposition, we show that $s$ cannot `jump' between variable gadgets via clause vertices.
It follows that the traversal of each $\mathit{Row}_i$ must be done in a single pass.

\begin{proposition}\label{prop:nojump}
In each segment $s_j$, if $v^{c_k}$ is entered from a clause box (in some variable gadget), the edge that
immediately follows must go back to the same clause box.
\end{proposition}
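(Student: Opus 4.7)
I proceed by contradiction: suppose that in some segment $s_j$ the path enters $v^{c_k}$ from a vertex $u \in \mathit{CB}_i^k$ and then leaves along an edge to $u' \in \mathit{CB}_{i'}^{k}$ with $\mathit{CB}_{i'}^{k} \neq \mathit{CB}_i^k$. Since $v^{c_k}$ is only adjacent to clause boxes corresponding to clause $c_k$, of which there is one per variable gadget whose variable appears in $c_k$, necessarily $i' \neq i$ and $u'$ lies in a clause box of a distinct variable gadget. By Proposition~\ref{prop:exact}, $v^{c_k}$ is visited exactly once in $s_j$, so the edges $u \to v^{c_k}$ and $v^{c_k} \to u'$ are its only occurrences in $s_j$; hence any subsequent visit to the remaining five vertices of $\mathit{CB}_i^k$ (each required exactly once) must be made via the four edges joining $\mathit{CB}_i^k$ to its two adjacent separator boxes in $\mathit{Row}_i$.

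I would then analyse the local structure of $\mathit{CB}_i^k$. Its six vertices form a 6-cycle through $a_i^k, b_i^k, c_i^k, d_i^k, e_i^k, f_i^k$; only the four ``corners'' $\{a_i^k, c_i^k, d_i^k, f_i^k\}$ carry separator-box edges, while the two ``middles'' $b_i^k, e_i^k$ have no external edges at all. A brief case analysis on the position of $u$ (depending on whether $x_i^0$ occurs positively or negatively in $c_k$) shows that, under the single-visit constraint, the restriction of $s_j$ to $\mathit{CB}_i^k$ must either consist of a single Hamiltonian subwalk of $\mathit{CB}_i^k$ entering from one side and exiting via $v^{c_k}$, or decompose into two subwalks: one containing $u$ and exiting via $v^{c_k}$, and a second that enters and leaves through separator-box edges on the \emph{opposite} side of $\mathit{CB}_i^k$. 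In either case, $s_j$ is forced to return to variable gadget $i$ \emph{after} the jump in order to finish the traversal of $\mathit{Row}_i$ (and of $\mathit{LS}_i$, $\mathit{RS}_i$).

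The global contradiction comes from a counting argument. Once $s_j$ has jumped into variable gadget $i'$, the only portals for returning to variable gadget $i$ outside of clause-vertex detours are (i) the shared vertices of $S$, each visited exactly once by Proposition~\ref{prop:exact}, and (ii) the consistency gadgets $\mathit{LCG}_\ell, \mathit{RCG}_\ell$, which each link only the pair of gadgets corresponding to $(x_\ell^0, x_\ell^1)$. Matching the number of variable-gadget boundary crossings that $s_j$ is forced to make against the number supplied by $S$, the consistency gadgets, and the remaining unused clause-vertex detours shows that the extra round-trip required by the jump exhausts the available budget: some vertex in $S$, or some internal vertex of $\mathit{Row}_i$, would have to be visited more than once, contradicting Proposition~\ref{prop:exact}. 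The principal difficulty is setting up this boundary-crossing bookkeeping cleanly; once it is in place, the contradiction follows immediately.
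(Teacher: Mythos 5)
Your setup is right as far as it goes: the only way to violate the claim is to leave $v^{c_k}$ into a clause box $\mathit{CB}_{i'}^k$ of a \emph{different} variable gadget, and your structural observations about a clause box (a $6$-cycle whose two middle vertices have no external edges, so each column must be traversed contiguously) are exactly the right raw material. The gap is in what you do next. You defer the contradiction to a global ``boundary-crossing budget'' over $S$, the consistency gadgets and the unused clause-vertex attachments, and you explicitly leave that bookkeeping as the unproved ``principal difficulty''. As it stands this is not a proof: you have not shown that the path must return to gadget $i$ \emph{after} the jump (rather than having already completed the rest of $\mathit{Row}_i$, $\mathit{LS}_i$, $\mathit{RS}_i$ before it), nor that the supply of portals is actually exhausted --- and a clause vertex can be attached to as many as $2m$ clause boxes, so the budget is not obviously tight. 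Worse, pinning down how a segment moves between variable gadgets is precisely what Propositions~\ref{prop:twopivots}--\ref{prop:inorder} establish, and those are proved \emph{using} Proposition~\ref{prop:nojump}; a global argument here risks circularity on top of the missing computation.

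The paper's proof is purely local and needs no such accounting. Group each separator box with the adjacent half of the clause box into a $3\times3$ block; only its four corner vertices touch the rest of the graph, so under Proposition~\ref{prop:exact} each block is covered by a single subwalk in one of the two serpentine patterns `$\sqcupsqcap$' or `$\sqcapsqcup$' (Figures~\ref{fig:pat1} and~\ref{fig:pat2}), whose entry and exit lie at diagonally opposite corners. If the path enters $v^{c_k}$ from, say, $v_i^{c,k}$, that corner must be an endpoint of its block's traversal (otherwise it acquires a third incident path edge), which fixes the left block's pattern; if the path then fails to return to $v_i^{d,k}$, that vertex has both of its remaining usable edges inside the right block, so it is internal there and the right block's pattern is forced too. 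The two forced patterns leave $v_i^{f,k}$ as an endpoint of the right block's traversal with no available external edge, since its only neighbour outside that block, $v_i^{a,k}$, has already had both of its path edges consumed inside the left block. That is the contradiction, obtained entirely within $\mathit{Row}_i$; you should replace the counting plan with this pattern-forcing step.
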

\begin{proof}
Consider a $3 \times 3$ `box' formed by a separator box
and (the left- or right-) half of a clause box.
Note that except for the four vertices at the corners, no vertex in this $3 \times 3$ box is 
connected to the rest of the graph. Recall that if each vertex in this $3 \times 3$ box is to be
visited only once (as enforced by Proposition~\ref{prop:exact}),
it must be traversed in the patterns illustrated in Figures~\ref{fig:pat1} and~\ref{fig:pat2}.

\begin{figure}[h]
\begin{minipage}[b]{0.5\linewidth}
\centering
\begin{tikzpicture}[-,>=stealth', auto, transform shape, node distance=2.5cm,
                    semithick, every state/.style={fill=none,draw=black,text=black,shape=circle,scale=0.3}]

\node[state, draw=none]	   (vi) 								{\LARGE $\boldsymbol{\cdots}$};
\node[state, draw=none]	   (vii) [above=.37cm of vi]	{\LARGE $\boldsymbol{\cdots}$};
\node[state, draw=none]	   (vio) [below=.37cm of vi]	{\LARGE $\boldsymbol{\cdots}$};

\node[state]			   (v7) [left of =vi] {$$};
\node[state]			   (v7i) [below=.5cm of v7] {$$};
\node[state]			   (v7o) [above=.5cm of v7] {$$};

\node[state]			   (v6) [left of =v7] {$$};
\node[state]			   (v6i) [above=.5cm of v6] {$$};
\node[state]			   (v6o) [below=.5cm of v6] {$$};

\node[state]			   (v5) [left of =v6] {$$};
\node[state]			   (v5i) [above=.5cm of v5] {$$};
\node[state]			   (v5o) [below=.5cm of v5] {$$};

\node[state, draw=none]	   (vi') [left of =v5]				{\LARGE $\boldsymbol{\cdots}$};
\node[state, draw=none]	   (vii') [above=.37cm of vi']	{\LARGE $\boldsymbol{\cdots}$};
\node[state, draw=none]	   (vio') [below=.37cm of vi']	{\LARGE $\boldsymbol{\cdots}$};

%\node[state]			   (cj) [below right=2cm and 2cm of v4i] {$$};

\path   (vii') edge node {\scriptsize $$} (v5i)
        (vio') edge node [swap] {\scriptsize $$} (v5o)
		  (v7o) edge node {\scriptsize $$} (vii)
        (vio) edge node {\scriptsize $$} (v7i)
			
        (v5) edge [line width=1mm] node {\scriptsize $$} (v5i)
        (v5o) edge [line width=1mm] node {\scriptsize $$} (v5)

        (v6) edge [line width=1mm] node {\scriptsize $$} (v6i)
        (v6o) edge [line width=1mm] node {\scriptsize $$} (v6)

        (v7) edge [line width=1mm] node {\scriptsize $$} (v7i)
        (v7o) edge [line width=1mm] node {\scriptsize $$} (v7)

        (v5i) edge node {\scriptsize $$} (v6i)
        (v6o) edge [line width=1mm] node {\scriptsize $$} (v5o)

        (v6i) edge [line width=1mm] node {\scriptsize $$} (v7o)
        (v7i) edge node {\scriptsize $$} (v6o);

\end{tikzpicture}
\caption{Pattern `$\sqcupsqcap$'}
\label{fig:pat1}
\end{minipage}
\begin{minipage}[b]{0.5\linewidth}
\centering
\begin{tikzpicture}[-,>=stealth', auto, transform shape, node distance=2.5cm,
                    semithick, every state/.style={fill=none,draw=black,text=black,shape=circle,scale=0.3}]

\node[state, draw=none]	   (vi) 								{\LARGE $\boldsymbol{\cdots}$};
\node[state, draw=none]	   (vii) [above=.37cm of vi]	{\LARGE $\boldsymbol{\cdots}$};
\node[state, draw=none]	   (vio) [below=.37cm of vi]	{\LARGE $\boldsymbol{\cdots}$};

\node[state]			   (v7) [left of =vi] {$$};
\node[state]			   (v7i) [below=.5cm of v7] {$$};
\node[state]			   (v7o) [above=.5cm of v7] {$$};

\node[state]			   (v6) [left of =v7] {$$};
\node[state]			   (v6i) [above=.5cm of v6] {$$};
\node[state]			   (v6o) [below=.5cm of v6] {$$};

\node[state]			   (v5) [left of =v6] {$$};
\node[state]			   (v5i) [above=.5cm of v5] {$$};
\node[state]			   (v5o) [below=.5cm of v5] {$$};

\node[state, draw=none]	   (vi') [left of =v5]				{\LARGE $\boldsymbol{\cdots}$};
\node[state, draw=none]	   (vii') [above=.37cm of vi']	{\LARGE $\boldsymbol{\cdots}$};
\node[state, draw=none]	   (vio') [below=.37cm of vi']	{\LARGE $\boldsymbol{\cdots}$};

%\node[state]			   (cj) [below right=2cm and 2cm of v4i] {$$};

\path   (vii') edge node {\scriptsize $$} (v5i)
        (vio') edge node [swap] {\scriptsize $$} (v5o)
		  (v7o) edge node {\scriptsize $$} (vii)
        (vio) edge node {\scriptsize $$} (v7i)
			
        (v5) edge [line width=1mm] node {\scriptsize $$} (v5i)
        (v5o) edge [line width=1mm] node {\scriptsize $$} (v5)

        (v6) edge [line width=1mm] node {\scriptsize $$} (v6i)
        (v6o) edge [line width=1mm] node {\scriptsize $$} (v6)

        (v7) edge [line width=1mm] node {\scriptsize $$} (v7i)
        (v7o) edge [line width=1mm] node {\scriptsize $$} (v7)

        (v5i) edge [line width=1mm] node {\scriptsize $$} (v6i)
        (v6o) edge node {\scriptsize $$} (v5o)

        (v6i) edge node {\scriptsize $$} (v7o)
        (v7i) edge [line width=1mm] node {\scriptsize $$} (v6o);

\end{tikzpicture}
\caption{Pattern `$\sqcapsqcup$'}
\label{fig:pat2}
\end{minipage}
\end{figure}

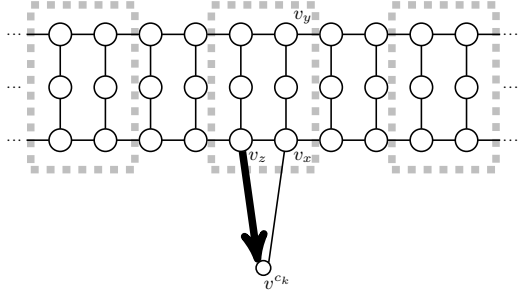
\begin{figure}[h]
\centering
\begin{tikzpicture}[-,>=stealth', auto, transform shape, node distance=2.5cm, scale=0.8,
                    semithick, every state/.style={fill=none,draw=black,text=black,shape=circle,scale=0.3}]

\node[state]	   (vi) 								{\phantom{\LARGE $\boldsymbol{\cdots}$}};
\node[state]	   (vii) [above=.5cm of vi]	{\phantom{\LARGE $\boldsymbol{\cdots}$}};
\node[state]	   (vio) [below=.5cm of vi]	{\phantom{\LARGE $\boldsymbol{\cdots}$}};

\node[state]	   (viinv) 	[right of =vi]				{\phantom{\LARGE $\boldsymbol{\cdots}$}};
\node[state]	   (viiinv) [above=.5cm of viinv]	{\phantom{\LARGE $\boldsymbol{\cdots}$}};
\node[state]	   (vioinv) [below=.5cm of viinv]	{\phantom{\LARGE $\boldsymbol{\cdots}$}};

\node[state, draw=none]	   (viinv2) 	[right of =viinv]			{\LARGE $\boldsymbol{\cdots}$};
\node[state, draw=none]	   (viiinv2) [above=.5cm of viinv2]	{\LARGE $\boldsymbol{\cdots}$};
\node[state, draw=none]	   (vioinv2) [below=.5cm of viinv2]	{\LARGE $\boldsymbol{\cdots}$};

\node[state]			   (v7) [left of =vi] 		{\phantom{\LARGE $\boldsymbol{\cdots}$}}; 
\node[state]			   (v7i) [below=.5cm of v7] {\phantom{\LARGE $\boldsymbol{\cdots}$}}; 
\node[state]			   (v7o) [above=.5cm of v7] {\phantom{\LARGE $\boldsymbol{\cdots}$}}; 

\node[state]			   (v6) [left of =v7] 			{\phantom{\LARGE $\boldsymbol{\cdots}$}}; 
\node[state]			   (v6i) [above=.5cm of v6] {\phantom{\LARGE $\boldsymbol{\cdots}$}}; 
\node[state]			   (v6o) [below=.5cm of v6] {\phantom{\LARGE $\boldsymbol{\cdots}$}}; 

\node[state]			   (v5) [left of =v6] 		  {\phantom{\LARGE $\boldsymbol{\cdots}$}};  
\node[state]			   (v5i) [above=.5cm of v5] {\phantom{\LARGE $\boldsymbol{\cdots}$}};    
\node[state]			   (v5o) [below=.5cm of v5] {\phantom{\LARGE $\boldsymbol{\cdots}$}};    

\node[state]			   (v4) [left of =v5] 		 {\phantom{\LARGE $\boldsymbol{\cdots}$}};    
\node[state]			   (v4i) [below=.5cm of v4] {\phantom{\LARGE $\boldsymbol{\cdots}$}};     
\node[state]			   (v4o) [above=.5cm of v4] {\phantom{\LARGE $\boldsymbol{\cdots}$}};     

\node[state]			   (v3) [left of =v4] 			{\phantom{\LARGE $\boldsymbol{\cdots}$}};      
\node[state]			   (v3i) [above=.5cm of v3] {\phantom{\LARGE $\boldsymbol{\cdots}$}};      
\node[state]			   (v3o) [below=.5cm of v3] {\phantom{\LARGE $\boldsymbol{\cdots}$}};      

\node[state]			   (v2) [left of =v3]		  {\phantom{\LARGE $\boldsymbol{\cdots}$}};       
\node[state]			   (v2i) [below=.5cm of v2] {\phantom{\LARGE $\boldsymbol{\cdots}$}};         
\node[state]			   (v2o) [above=.5cm of v2] {\phantom{\LARGE $\boldsymbol{\cdots}$}};         

\node[state]	   (vi') [left of =v2]			 {\phantom{\LARGE $\boldsymbol{\cdots}$}};       
\node[state]	   (vii') [above=.5cm of vi']	{\phantom{\LARGE $\boldsymbol{\cdots}$}};         
\node[state]	   (vio') [below=.5cm of vi']	{\phantom{\LARGE $\boldsymbol{\cdots}$}};         

\node[state]	   (vi'inv) 	[left of =vi']			 {\phantom{\LARGE $\boldsymbol{\cdots}$}};       
\node[state]	   (vii'inv) [above=.5cm of vi'inv]	{\phantom{\LARGE $\boldsymbol{\cdots}$}};         
\node[state]	   (vio'inv) [below=.5cm of vi'inv]	{\phantom{\LARGE $\boldsymbol{\cdots}$}};         

\node[state, draw=none]	   (vi'inv2) 	[left of =vi'inv]			 {\LARGE $\boldsymbol{\cdots}$};       
\node[state, draw=none]	   (vii'inv2) [above=.5cm of vi'inv2]	{\LARGE $\boldsymbol{\cdots}$};         
\node[state, draw=none]	   (vio'inv2) [below=.5cm of vi'inv2]	{\LARGE $\boldsymbol{\cdots}$};         

\node[state]				 (cj) [below=2cm of $(v4i)!0.5!(v5o)$]  {$$};

%\node[state]			   (cj) [below right=2cm and 2cm of v4i] {$$};
\node (box1) [draw=gray!50, fit=(v4o) (v5o), dashed, line width=1mm, inner sep=3mm] {};
\node (box2) [draw=gray!50, fit=(vio) (viiinv), dashed, line width=1mm, inner sep=3mm] {};
\node (box2) [draw=gray!50, fit=(vio') (vii'inv), dashed, line width=1mm, inner sep=3mm] {};
%\node [yshift=-4mm, gray] at (box1.south) {$c_j$};

\node [xshift=1.5mm, yshift=-1.5mm] at (cj.south east) {$v^{c_k}$};
\node [xshift=1.5mm, yshift=1.5mm] at (v5i.north east) {$v_y$};
\node [xshift=1.5mm, yshift=-1.5mm] at (v5o.south east) {$v_x$};
\node [xshift=1.5mm, yshift=-1.5mm] at (v4i.south east) {$v_z$};

\path   (vii') edge node {\scriptsize $$} (v2o)
        (vio') edge node [swap] {\scriptsize $$} (v2i)
		  (v7o) edge node {\scriptsize $$} (vii)
        (vio) edge node {\scriptsize $$} (v7i)

        (vi) edge node {\scriptsize $$} (vii)
        (vi) edge node {\scriptsize $$} (vio)

        (viinv) edge node {\scriptsize $$} (viiinv)
        (viinv) edge node {\scriptsize $$} (vioinv)

        (vi') edge node {\scriptsize $$} (vii')
        (vi') edge node {\scriptsize $$} (vio')

        (vi'inv) edge node {\scriptsize $$} (vii'inv)
        (vi'inv) edge node {\scriptsize $$} (vio'inv)

        (vii) edge node {\scriptsize $$} (viiinv)
        (vio) edge node {\scriptsize $$} (vioinv)

        (vii') edge node {\scriptsize $$} (vii'inv)
        (vio') edge node {\scriptsize $$} (vio'inv)

        (viiinv) edge node {\scriptsize $$} (viiinv2)
        (vioinv) edge node {\scriptsize $$} (vioinv2)

        (vii'inv) edge node {\scriptsize $$} (vii'inv2)
        (vio'inv) edge node {\scriptsize $$} (vio'inv2)

        (v2) edge node {\scriptsize $$} (v2i)
        (v2o) edge node {\scriptsize $$} (v2)

        (v3) edge node {\scriptsize $$} (v3i)
        (v3o) edge node {\scriptsize $$} (v3)

        (v4) edge node {\scriptsize $$} (v4i)
        (v4o) edge node {\scriptsize $$} (v4)

        (v5) edge node {\scriptsize $$} (v5i)
        (v5o) edge node {\scriptsize $$} (v5)

        (v6) edge node {\scriptsize $$} (v6i)
        (v6o) edge node {\scriptsize $$} (v6)

        (v7) edge node {\scriptsize $$} (v7i)
        (v7o) edge node {\scriptsize $$} (v7)

        (v2i) edge node {\scriptsize $$} (v3o)
        (v3i) edge node {\scriptsize $$} (v2o)

        (v3i) edge node {\scriptsize $$} (v4o)
        (v4i) edge node {\scriptsize $$} (v3o)

        (v4i) edge node {\scriptsize $$} (v5o)
        (v5i) edge node {\scriptsize $$} (v4o)

        (v5i) edge node {\scriptsize $$} (v6i)
        (v6o) edge node {\scriptsize $$} (v5o)

        (v6i) edge node {\scriptsize $$} (v7o)
        (v7i) edge node {\scriptsize $$} (v6o)

        (v4i) edge [->, line width=1mm] node [swap] {\scriptsize $$} (cj.north west)
        (cj.north east) edge node [swap] {\scriptsize $$} (v5o);

\end{tikzpicture}
\caption{$x_i^0$ occurs positively in $c_k$}
\label{fig:cannotjump}
\end{figure}

Now consider the situation in Figure~\ref{fig:cannotjump} where $s_j$ goes from
$v_z$ to $v^{c_k}$. The $3 \times 3$ box with $v_z$ at its lower-right
must be traversed in Pattern `$\sqcupsqcap$' (as otherwise $v_z$ will be visited twice).
Assume that $s_j$ does not visit $v_x$ immediately after $v^{c_k}$.
As $v_x$ cannot be entered or left via $v_z$ and $v^{c_k}$, the $3 \times 3$ box with $v_x$ at its lower-left
must also be traversed in Pattern `$\sqcupsqcap$'. However, there is then no way to enter or leave $v_y$.
This is a contradiction.
\end{proof}
Note that in Figure~\ref{fig:cannotjump}, the three clause boxes (framed by dotted lines) are all traversed
in Pattern `$\sqcap$' or they are all traversed in Pattern `$\sqcup$'. More generally, we
have the following proposition. 
\begin{proposition}\label{col:asawhole}
In each segment $s_j$, clause boxes in a given variable gadget
are all traversed in Pattern `$\sqcap$' or they are all traversed in Pattern `$\sqcup$' (with possible detours via clause vertices).
\end{proposition}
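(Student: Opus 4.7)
My plan is to argue that, once Proposition~\ref{prop:exact} guarantees that every vertex of every box in the row $\mathit{Row}_i$ is visited exactly once during the segment $s_j$, the traversal of each individual box is essentially rigid, and this rigidity propagates along the whole row.

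First I would examine a single box $B$ (clause or separator) in isolation. Its six vertices form two vertical triples, and inside $B$ the middle vertex of each triple is joined only to the top and bottom vertex of its own column; the two columns are joined by one top-to-top edge and one bottom-to-bottom edge, with no horizontal edge at the middle level. Together with `visited exactly once', these structural facts force precisely two possible local traversals of $B$: either entered and exited at the top of its columns (pattern~`$\sqcup$') or entered and exited at the bottom (pattern~`$\sqcap$'). This is essentially a single-box refinement of the `$\sqcupsqcap$/$\sqcapsqcup$' dichotomy already established for $3\times 3$ blocks in the proof of Proposition~\ref{prop:nojump}.

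Next I would propagate the pattern. Two neighbouring boxes are wired together by exactly one top-to-top edge and one bottom-to-bottom edge, so if $B$ is in pattern `$\sqcup$' then $s_j$ leaves $B$ through the top of its right column, enters the following box $B'$ at the top of its left column, and hence forces $B'$ into pattern `$\sqcup$' as well; the symmetric statement holds for `$\sqcap$'. Iterating this observation along the chain $\mathit{SB}_i^0, \mathit{CB}_i^1, \mathit{SB}_i^1, \ldots, \mathit{CB}_i^h, \mathit{SB}_i^h$, every box of $\mathit{Row}_i$ must share a common pattern, and in particular so do all clause boxes.

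Finally I would dispose of the `with possible detours via clause vertices' clause. By Proposition~\ref{prop:nojump}, a visit to $v^{c_k}$ from a clause box $\mathit{CB}_i^k$ simply replaces one horizontal edge internal to $\mathit{CB}_i^k$ by a length-two detour and then rejoins the intended traversal, exactly as depicted in Figure~\ref{fig:detour}; the entry and exit vertices of $\mathit{CB}_i^k$ are unchanged, so its `$\sqcup$'/`$\sqcap$' classification is preserved. The main nuisance I expect is administrative rather than conceptual: because the subscript conventions for `$i$'/`$o$' alternate between consecutive columns in the construction, I will need to fix a single consistent meaning of `top' and `bottom' across the whole row before the propagation step can be phrased cleanly. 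Once that is in place, each step above reduces to inspection of the local edges.
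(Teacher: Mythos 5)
Your argument is correct and matches the paper's: the paper gives no explicit proof of this proposition, treating it as an immediate consequence of the $3\times 3$ pattern analysis in the proof of Proposition~\ref{prop:nojump}, which is precisely the column-dichotomy-plus-propagation that you spell out (single $2$-column boxes rather than the paper's separator-plus-half-clause-box blocks, but the same mechanism). The only caveat is that your step one is not quite a purely local fact --- in isolation a box's two columns could a priori be traversed in two separate passes with neither internal horizontal edge used --- but this possibility is eliminated by your own propagation step once it is anchored at the corner of the row where the fragment enters, so nothing essential is missing.
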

%\begin{corollary}
%$s_r$ has duration $\frac{1}{2}T$.
%\end{corollary}

Write $v \rightarrow v'$ for the edge from $v$ to $v'$ and
$v \leadsto v'$ for a finite path that starts with $v$ and ends with $v'$.
By Proposition~\ref{prop:exact}, each segment $s_j$ can be written as $v_{top} \leadsto v_{b_1} \leadsto \cdots \leadsto v_{b_{2m - 1}} \leadsto v_{bot}$
where $b_1, \ldots, b_{2m - 1}$ is a permutation of $1, \ldots, 2m - 1$.
We show that each subpath $v \leadsto v'$ of $s_j$ with distinct $v, v' \in S \cup \{v_{top}, v_{bot}\}$
and no $v'' \in S \cup \{ v_{top}, v_{bot} \}$ in between
must be of a very restricted form. For convenience, we call such a subpath $v \leadsto v'$ a \emph{fragment}.

\begin{proposition}\label{prop:twopivots}
In each segment $s_j = v_{top} \leadsto v_{b_1} \leadsto \cdots \leadsto v_{b_{2m - 1}} \leadsto v_{bot}$,
a fragment $v \leadsto v'$ visits $\mathit{pvt}_i^L$ and $\mathit{pvt}_i^R$ (once for each) for some $i \in \{1, \ldots, m\}$.
Moreover, each fragment $v \leadsto v'$ in $v_{top} \leadsto v_{b_1} \leadsto \cdots \leadsto v_{b_m}$
visits a different set $\{\mathit{pvt}_i^L, \mathit{pvt}_i^R\}$.
The same holds for $v_{b_m} \leadsto v_{b_{m+1}} \leadsto \cdots \leadsto v_{bot}$.
\end{proposition}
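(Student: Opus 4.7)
By Proposition~\ref{prop:exact} each of the $2m$ pivots $\mathit{pvt}_i^L, \mathit{pvt}_i^R$ is visited exactly twice per segment, for a total of $4m$ pivot visits distributed over the $2m$ fragments of $s_j$ (the segment visits each of the $2m-1$ shared vertices exactly once). The goal is to show that each fragment contributes exactly two pivot visits forming a pair $\{\mathit{pvt}_i^L, \mathit{pvt}_i^R\}$, and that the first $m$ (and separately the last $m$) fragments realise a permutation of $\{1, \ldots, m\}$.

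\textbf{Step 1: each row is a single Hamiltonian sweep.} The middle vertex of every column of $\mathit{Row}_i$ has degree $2$ (connected only to the top and bottom of the same column) and must be visited by Proposition~\ref{prop:exact}, so each column must be traversed as a three-vertex chain. Combined with the brick-wall criss-cross between adjacent columns and the fact (Proposition~\ref{prop:nojump}) that clause vertices are dead ends, visiting all $6(2h+1)$ row vertices forces a single sweep entering at the left end through a $\mathit{pvt}_i^L$-arm and exiting at the right end through a $\mathit{pvt}_i^R$-arm (or vice versa). Each of the $2m$ sweeps then consumes exactly one visit of $\mathit{pvt}_i^L$ and one of $\mathit{pvt}_i^R$, saturating the $4m$-visit budget.

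\textbf{Step 2: each fragment contains exactly one sweep.} If some fragment contained two sweeps (necessarily bridging $\mathit{Row}_i$ with $\mathit{Row}_{i+m}$ through both visits of each of $\mathit{pvt}_i^L, \mathit{pvt}_i^R$), then some other fragment would contain no sweep and would have to traverse a single side of some gadget $x_j^?$. The long edges at its two bounding shared vertices are then exhausted, so the opposite side of $x_j^?$ could only be reached through arm-only entry and exit using both $x_j^?$-side arms of $\mathit{pvt}_j^R$ side-wards; but those arms are precisely the ones required row-wards for the single sweep of $\mathit{Row}_j$ (Step~1), a contradiction. Hence each fragment contains exactly one sweep, and its two pivot visits form the pair $\{\mathit{pvt}_i^L, \mathit{pvt}_i^R\}$ for that sweep's gadget.

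\textbf{Step 3: permutation across halves.} For each $i$, the pair $\{\mathit{pvt}_i^L, \mathit{pvt}_i^R\}$ is used by exactly two fragments: the sweepers of $\mathit{Row}_i$ (in $x_i^0$) and of $\mathit{Row}_{i+m}$ (in $x_i^1$). Their endpoints lie in $\{v_{i-1}, v_i, v_{m+i-1}, v_{m+i}\}$, and one checks (since two distinct fragments cannot share both endpoints without revisiting a shared vertex) that they together cover at least three distinct shared vertices. A careful accounting of long-edge usage at each shared vertex then shows that both pair-$i$ fragments cannot lie in the same half: the $m+1$ shared vertices of that half cannot simultaneously accommodate the $\geq 3$ pair-$i$ contributions and the endpoint demands of the other pairs' fragments. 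Hence each pair contributes exactly one fragment to each half, yielding the required permutation. I expect Step~3 to be the main obstacle: naive counting alone is insufficient, and the argument must carefully handle ``half-bridge'' fragments whose endpoints cross between the $x_i^0$ and $x_i^1$ regions via a pivot visit.
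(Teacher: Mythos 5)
Your approach is genuinely different from the paper's: you try to derive the proposition purely from the combinatorial structure of the gadgets (degree-$2$ middle vertices, forced row sweeps, visit budgets from Proposition~\ref{prop:exact}), whereas the paper's proof is essentially a timing argument. There are two genuine gaps. First, in Step~1, entering $\mathit{Row}_i$ through an arm vertex of $\mathit{LCG}_i$ does not force a visit to $\mathit{pvt}_i^L$: an arm such as $\mathit{out}_i^{\uparrow,L}$ has three neighbours ($\mathit{pvt}_i^L$, $v_i^{t,L}$, $v_i^{\bar a,0}$) and, being visited exactly once, may serve as a pass-through between $v_i^{t,L}$ and $v_i^{\bar a,0}$, with the pivot's two visits then realised via the remaining arms (e.g.\ using a spur $\mathit{pvt}_i^L \rightarrow \mathit{arm} \rightarrow \mathit{pvt}_i^L$). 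So ``each sweep consumes exactly one visit of $\mathit{pvt}_i^L$ and one of $\mathit{pvt}_i^R$'' is precisely what needs proving, not a consequence of the sweep structure; the same objection undermines the ``saturated budget'' count.

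Second, and more fundamentally, Step~3 --- the ``Moreover'' clause --- is asserted rather than proved, and I do not believe the accounting you describe can be completed. Nothing in the incidence structure distinguishes the first $m$ fragments from the last $m$: with ``bridge'' fragments crossing from the $x_i^0$ region to the $x_i^1$ region through a pivot (which your Steps~1--2 do not exclude), the shared vertices can be visited out of order and both fragments using pair $i$ can land in the same half without violating any visit count. What actually forces the distribution is the choice of $\mathit{RD}(\mathit{pvt}_i^{L})$ and $\mathit{RD}(\mathit{pvt}_i^{R})$ as roughly $\frac{1}{2}T$ plus the duration of $m$ fragments: since every fragment begins and ends with a long edge, $\mathit{dur}(v \leadsto v') \geq 2(3m+1)l$, so if some pivot were absent from the first $m$ fragments of $s_j$, its previous occurrence (necessarily in $s_{j-1}$) would lie more than $\frac{1}{2}T + m\big(2(3m+1)l\big) + (3m+1)l > \mathit{RD}$ in the past. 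That one-line deadline computation, combined with Propositions~\ref{prop:nojump} and~\ref{prop:exact}, is the paper's entire proof; your write-up never invokes the pivots' relative deadlines or any duration bound, and without them the claim is out of reach.
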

\begin{proof}
It is clear that $\mathit{dur}(v \leadsto v') \geq 2(3m+1)l$,
and hence $\mathit{dur}(v_{top} \leadsto v_{b_1} \leadsto \cdots v_{b_m}) \geq m \big(2(3m+1)l\big)$.
Let there be a vertex \(v \in \bigcup_{i \in \{1, \ldots, m\}} \{\mathit{pvt}_i^L, \mathit{pvt}_i^R\}\)
missing in $v_{top} \leadsto v_{b_1} \leadsto \cdots v_{b_m}$.
Since the time needed from $v_{b_m}$ to $v$ is greater than $(3m+1)l$,
even if $s_j$ visits $v$ as soon as possible after $v_{b_m}$, 
the duration from $v_{bot}$ in $s_{j - 1}$ to $v$ in $s_j$ will still be greater than
%$dur(v_{bot} \leadsto v_{top} \leadsto v_{b_1} \leadsto \cdots v_{b_m} \leadsto pvt)
$\frac{1}{2}T + m\big(2(3m+1)l\big) + (3m + 1)l > \mathit{RD}(v)$, which is a contradiction.
Therefore, all vertices in \( \bigcup_{i \in \{1, \ldots, m\}} \{\mathit{pvt}_i^L, \mathit{pvt}_i^R\} \) must appear in the subpath from $v_{top}$ to $v_{b_m}$.
The same holds for the subpath from $v_{b_m}$ to $v_{bot}$ by similar arguments.
Now note that by Proposition~\ref{prop:nojump}, a fragment $v \leadsto v'$ may visit
at most two vertices---$\{\mathit{pvt}_i^L, \mathit{pvt}_i^R\}$ for some $i \in \{1, \ldots, m\}$.
The proposition then follows from Proposition~\ref{prop:exact}.
\end{proof}

\begin{proposition}\label{prop:onerow}
In each segment $s_j$, a fragment $v \leadsto v'$
visits all vertices in either $\mathit{Row}_i$ or $\mathit{Row}_{i + m}$ for some $i \in \{1, \ldots, m\}$
but not a single vertex in $\bigcup_{\substack{j \neq i \\ j \in \{1, \ldots, m\}}} (\mathit{Row}_j \cup \mathit{Row}_{j + m})$.
\end{proposition}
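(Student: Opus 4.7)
The plan is to combine a structural argument (each fragment fully traverses at least one row) with a counting argument (nothing more is visited).

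For the structural step, fix a fragment $v \leadsto v'$ and let $\{\mathit{pvt}_i^L, \mathit{pvt}_i^R\}$ be the unique pivot pair it visits, as guaranteed by Proposition~\ref{prop:twopivots}. The four neighbours of $\mathit{pvt}_i^L$ are attached only to the left borders $\mathit{LS}_i \cup \{v_i^{\bar{a},0}, v_i^{\bar{c},0}\}$ of $x_i^0$ and $\mathit{LS}_{i+m} \cup \{v_{i+m}^{\bar{a},0}, v_{i+m}^{\bar{c},0}\}$ of $x_i^1$, and symmetrically $\mathit{pvt}_i^R$'s neighbours lie only on the analogous right borders. Since within a fragment the UAV cannot revisit any vertex of $S \cup \{v_{top}, v_{bot}\}$, and by Proposition~\ref{prop:nojump} cannot jump between variable gadgets via clause vertices, the sub-path between the two pivot visits is confined to the union of the $x_i^0$, $x_i^1$, $\mathit{LCG}_i$, and $\mathit{RCG}_i$ sub-graphs. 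Combining this with Proposition~\ref{col:asawhole}, which enforces a uniform `$\sqcupsqcap$' or `$\sqcapsqcup$' traversal of clause boxes in each row, the only way the UAV can connect $\mathit{pvt}_i^L$ and $\mathit{pvt}_i^R$ is to completely traverse at least one of $\mathit{Row}_i$ or $\mathit{Row}_{i+m}$.

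For the counting step, let $N = 12h + 6 = |\mathit{Row}_i|$ be the number of vertices in one row. The segment $s_j$ decomposes into exactly $2m$ fragments (bounded in turn by $v_{top}, v_{b_1}, \ldots, v_{b_{2m-1}}, v_{bot}$), and by Proposition~\ref{prop:exact} each of the $2mN$ row vertices is visited exactly once. Since each fragment contributes at least $N$ row-vertex visits by the structural step, the total $2mN$ saturates, so every fragment visits precisely one full row and no further row vertex.

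The main obstacle is the structural step: the pivot gadgets deliberately connect to both $x_i^0$ (from above) and $x_i^1$ (from below), so one must rule out a ``switching'' fragment that enters $\mathit{pvt}_i^L$ from $x_i^0$, crosses over to $x_i^1$ via the left pivot, and leaves $\mathit{pvt}_i^R$ back into $x_i^0$, thereby touching only the outer borders of both rows and missing their interiors. I would close this off by observing that any such fragment contributes strictly less than $N$ row-vertex visits, forcing the global sum to fall short of $2mN$ and contradicting Proposition~\ref{prop:exact}. An independent handle is a duration computation: $\mathit{dur}(s_j) \leq \tfrac{1}{2} T$ by the deadline on $v_{top}$ combined with the $v_{mid}$ detour, and summing the minimum fragment durations from the proof of Proposition~\ref{prop:twopivots} over all $m$ pivot pairs shows that any extra row traversal would overshoot this budget.
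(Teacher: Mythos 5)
The paper states this proposition without giving any proof, so there is no official argument to compare against line by line; judged on its own merits, your proposal is essentially sound, and its two-part shape --- a structural lower bound of one full row per fragment, followed by saturation against the exact count of $2m\cdot|\mathit{Row}_i|$ row-vertex visits guaranteed by Proposition~\ref{prop:exact} --- is a clean way to obtain both halves of the statement simultaneously. The structural half is correctly grounded: within a fragment, the subpath joining the two pivot visits contains no vertex of $S \cup \{v_{top}, v_{bot}\}$, no pivot of any other pair (Proposition~\ref{prop:twopivots}), and no clause-vertex jump (Proposition~\ref{prop:nojump}); since $\mathit{LS}_i \cup \mathit{LS}_{i+m} \cup \mathit{LCG}_i$ communicates with $\mathit{RS}_i \cup \mathit{RS}_{i+m} \cup \mathit{RCG}_i$ only through the rows or through shared vertices, that subpath must cross $\mathit{Row}_i$ or $\mathit{Row}_{i+m}$ end to end, and the $\sqcupsqcap$/$\sqcapsqcup$ pattern analysis from the proof of Proposition~\ref{prop:nojump} (forced by the once-per-segment constraint of Proposition~\ref{prop:exact}) upgrades ``cross'' to ``visit every vertex of''. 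The counting step is then exactly right: $2m$ fragments, $2m$ rows, each row vertex visited exactly once, so equality holds fragment by fragment and no fragment can touch any row vertex outside the single row it covers.

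The one genuine flaw is in your treatment of the self-declared main obstacle. The inference ``any such fragment contributes strictly less than $N$ row-vertex visits, forcing the global sum to fall short of $2mN$'' is invalid: a single deficient fragment creates no global deficit unless you already know every other fragment is capped at $N$ visits, which is precisely what is being proved. Fortunately the obstacle is illusory rather than real: a ``switching'' fragment that hops between the $x_i^0$ and $x_i^1$ gadgets through $\mathit{pvt}_i^L$ still has to reach $\mathit{pvt}_i^R$ on the other side, and the left borders of the two gadgets reach the right borders only via shared vertices (excluded inside a fragment) or via a complete row crossing, so the connectivity argument above applies verbatim and no separate patch is required. You should delete the faulty sentence and instead note explicitly that the structural bound already covers switching fragments; the sketched duration estimate is then superfluous.
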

%\begin{proof}
%Observe that in order for $v \leadsto v'$ to visit $\mathit{pvt}_i^L$ and $\mathit{pvt}_i^R$ for some $i \in \{1, \ldots, m\}$
%(as stated in Proposition~\ref{prop:twopivots}), $v \leadsto v'$ must visit all vertices in
%either $\mathit{Row}_i$ or $\mathit{Row}_{i + m}$. The proposition follows from the pigeonhole principle.
%\end{proof}

%The following proposition is crucial for the proof.

Now consider a fragment $v \leadsto v'$ that visits $\mathit{pvt}_i^L$ and $\mathit{pvt}_i^R$ (by Proposition~\ref{prop:twopivots}).
By Proposition~\ref{prop:exact}, $v \leadsto v'$ must also visit exactly two vertices other than $\mathit{pvt}_i^L$ in $\mathit{LCG}_i$
and exactly two vertices other than $\mathit{pvt}_i^R$ in $\mathit{RCG}_i$ (once for each).
It is not hard to see that $v \leadsto v'$ must contain, in order, the following subpaths (together with
some obvious choices of edges connecting these subpaths):
\begin{enumerate}[(i).]
\item \label{itm:one} A long edge, e.g., $v_i \rightarrow v_i^{b, R}$.
\item A `side', e.g., $v_i^{b, R} \rightarrow v_i^{m, R} \rightarrow v_i^{t, R}$.
\item A subpath consisting of a $\mathit{pvt}$ vertex and two other vertices in the relevant consistency gadget,
e.g., $\mathit{out}^{\uparrow, R}_i \rightarrow \mathit{pvt}^R_i \rightarrow \mathit{in}^{\downarrow, R}_i$.
\item \label{itm:four} A traversal of a row with detours.
\item A subpath consisting of a $\mathit{pvt}$ vertex and two other vertices in the relevant consistency gadget.
\item A side.
\item \label{itm:seven} A long edge.
\end{enumerate}
The following proposition is then immediate. In particular, the exact value of $\mathit{dur}(v \leadsto v')$ is \mbox{decided by}:
\begin{itemize}
\item $\mathit{FT}$ of the long edges taken in (\ref{itm:one}) and (\ref{itm:seven})
\item detours to clause vertices in (\ref{itm:four}).
\end{itemize}
\begin{proposition}\label{prop:dur}
In each segment $s_j$, the following holds for all fragments \mbox{$v \leadsto v'$}:
\[
2 (3m+1)l + l \leq \mathit{dur}(v \leadsto v') \leq 2 (3m+2)l + l + 2h.
\]
\end{proposition}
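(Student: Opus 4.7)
The proof is a direct arithmetic computation based on the seven-piece decomposition (i)--(vii) of $v \leadsto v'$ stated immediately before the proposition. Every edge outside the two long edges in (i) and (vii) has $\mathit{FT} = 2$ by construction, so the non-long-edge duration is simply twice the number of such edges.

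First I would count the non-long-edge, non-detour edges. Pieces (ii), (iii), (v) and (vi) each contribute exactly two edges, and there is exactly one connecting edge between each consecutive pair among (ii)--(vi) (for instance a single $\mathit{FT}$-$2$ edge from the end of the side in (ii) to the non-$\mathit{pvt}$ entry of the consistency gadget in (iii)), yielding $8 + 4 = 12$ edges. Piece (iv), by Propositions~\ref{prop:exact},~\ref{prop:onerow} and~\ref{col:asawhole}, is a Hamiltonian traversal of the $12h+6$ vertices of exactly one $\mathit{Row}_i$ (or $\mathit{Row}_{i+m}$), so without detours it uses $12h+5$ edges. The total fixed non-long-edge duration is therefore $2(12 + 12h + 5) = 24h + 34 = l$.

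Next I would add the variable contributions. The two long edges in (i) and (vii) each have $\mathit{FT}$ equal to $(3m+1)l$ or $(3m+2)l$ by construction, giving a combined contribution between $2(3m+1)l$ and $2(3m+2)l$. Each detour through a clause vertex replaces one $\mathit{FT}$-$2$ edge inside the row by two $\mathit{FT}$-$2$ edges via the clause vertex, which adds exactly $2$ to the duration; since the row has $h$ clause boxes and at most one detour is possible per box, detours contribute between $0$ and $2h$. Summing the three contributions yields exactly the two inequalities of the proposition.

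The main obstacle is the bookkeeping: one must verify that pieces (i)--(vii) together with the four short connecting edges above exhaust every edge of the fragment with no double-counting, and that piece (iv) really is a Hamiltonian traversal of a single row. Both follow from the structural results already proved---namely Propositions~\ref{prop:exact},~\ref{prop:nojump},~\ref{col:asawhole} and~\ref{prop:onerow}, combined with the zig-zag patterns $\sqcupsqcap$ and $\sqcapsqcup$ of Figures~\ref{fig:pat1} and~\ref{fig:pat2}. Once these structural facts are in place, the inequalities are an immediate consequence of the definition $l = 24h + 34$.
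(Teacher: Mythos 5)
Your proof is correct and follows exactly the route the paper intends: the paper states this proposition as ``immediate'' from the seven-piece decomposition (i)--(vii) and omits the arithmetic, and your edge count ($8+8+8$ for sides, consistency subpaths and connecting edges, plus $2(12h+5)$ for the row, totalling $l=24h+34$) together with the $2(3m+1)l$ versus $2(3m+2)l$ long-edge bounds and the $+2$ per clause detour is precisely the omitted verification.
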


\begin{proposition}\label{prop:sameorder}
The order the sets $\{\mathit{pvt}_i^L, \mathit{pvt}_i^R\}$ are visited (regardless
of which vertex in the set is first visited) in the first $m$ fragments of each segment $s_j$
is identical to the order they are visited in the last $m$ fragments of $s_{j-1}$.
\end{proposition}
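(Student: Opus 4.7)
The plan is to combine Propositions~\ref{prop:exact}, \ref{prop:twopivots}, and \ref{prop:onerow} with the seven-step fragment decomposition stated just before Proposition~\ref{prop:dur} to show that the sequence of shared-vertex visits in every segment $s_j$ is rigid, namely the linear chain $v_{top} \to v_1 \to v_2 \to \cdots \to v_{2m-1} \to v_{bot}$. Once this is established, Proposition~\ref{prop:sameorder} is immediate, because both the first $m$ fragments of $s_j$ and the last $m$ fragments of $s_{j-1}$ then visit the pivot pairs in the induced order $1, 2, \ldots, m$.

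The main step is a pinning argument showing that every fragment $v \leadsto v'$ lies entirely inside a single variable gadget, either $x_i^0$ or $x_i^1$, rather than ``crossing'' between the two via a consistency gadget. By Proposition~\ref{prop:twopivots} the fragment visits exactly one pair $\{\mathit{pvt}_i^L, \mathit{pvt}_i^R\}$, and by Proposition~\ref{prop:onerow} it fully traverses either $\mathit{Row}_i$ or $\mathit{Row}_{i+m}$. A cardinality count on $\mathit{LCG}_i$ using Proposition~\ref{prop:exact} pins the situation down: its four non-pivot vertices contribute four visits and the pivot contributes two per segment, so the two pair-$i$ fragments must split those six visits as $3+3$; since each fragment needs two distinct leaves of $\mathit{LCG}_i$ to route a simple path through the pivot, each uses exactly three vertices of $\mathit{LCG}_i$ (the pivot plus two leaves), matching step~(iii) of the seven-step decomposition, and similarly for $\mathit{RCG}_i$. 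Together with the fact that $\mathit{Row}_i$ and $\mathit{LS}_i$ are only reachable via the corresponding consistency gadget, this excludes any crossing fragment: such a fragment would either need to use more than three vertices of $\mathit{LCG}_i$, or leave $\mathit{Row}_i$ (or $\mathit{LS}_i$) unvisited with no remaining fragment able to cover it without double-visiting $v_{top}$, $v_{bot}$, or an intermediate shared vertex, contradicting Proposition~\ref{prop:exact}.

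With each fragment thus confined to one gadget, its two endpoints are adjacent in the chain $v_{top}, v_1, \ldots, v_{2m-1}, v_{bot}$ (since each $x_i^0$ joins $v_{i-1}$ or $v_{top}$ with $v_i$, and each $x_i^1$ joins $v_{i+m-1}$ with $v_{i+m}$ or $v_{bot}$). Using the ``each shared vertex visited exactly once'' part of Proposition~\ref{prop:exact}, the segment is then forced to traverse this chain monotonically, so $b_k = k$ for every $k \in \{1, \ldots, 2m-1\}$. Consequently the first $m$ fragments of $s_j$ traverse $x_1^0, x_2^0, \ldots, x_m^0$ and the last $m$ fragments of $s_{j-1}$ traverse $x_1^1, x_2^1, \ldots, x_m^1$, producing identical pivot-pair sequences $1, 2, \ldots, m$ on both sides, which gives the claim. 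The real work is the pinning argument in the middle paragraph; once crossing fragments are excluded, everything else reduces to chasing the consequences of Propositions~\ref{prop:exact}, \ref{prop:twopivots}, and \ref{prop:onerow}.
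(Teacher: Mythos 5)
There is a genuine gap, and it sits exactly where you place ``the real work''. Your plan never touches the one ingredient the construction supplies for this proposition: the relative deadlines of the $\mathit{pvt}$ vertices together with the duration bounds of Proposition~\ref{prop:dur}. The paper's proof is a short timing argument: every fragment has duration between $2(3m+1)l+l$ and $2(3m+2)l+l+2h$, so if the two orders disagreed, some $\mathit{pvt}$ vertex would have two consecutive occurrences in $s$ separated by more than $\tfrac{1}{2}T + m\big(2(3m+1)l+l\big) + 2(3m+1)l$, which exceeds every $\mathit{RD}(\mathit{pvt}_i^L)$ and $\mathit{RD}(\mathit{pvt}_i^R)$. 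You instead try to prove the much stronger statement $b_k=k$ by pure counting and then read the claim off. But note that the paper establishes $b_k=k$ only in Proposition~\ref{prop:inorder}, which is proved \emph{after} and \emph{using} Proposition~\ref{prop:sameorder}, together with a further timing argument separating the $(3m+1)l$ long edges from the $(3m+2)l$ ones. Your route inverts that logical order, so the combinatorial ``pinning argument'' has to carry all of the weight on its own --- and it does not.

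Concretely, two things go wrong in the middle paragraph. First, the supporting fact you invoke is false: $\mathit{LS}_i$ is \emph{not} reachable only via $\mathit{LCG}_i$; it is joined directly to $v_{i-1}$ and $v_i$ by the long edges (that is their purpose). Second, the $3+3$ split of the $\mathit{LCG}_i$ visits does not exclude crossing fragments. Consider a fragment that starts at $v_{i-1}$, runs down $\mathit{RS}_i$, enters $\mathit{Row}_i$ via $\mathit{in}_i^{\downarrow,R} \rightarrow \mathit{pvt}_i^R \rightarrow \mathit{out}_i^{\uparrow,R}$, traverses the whole row, crosses on the left via $\mathit{out}_i^{\uparrow,L} \rightarrow \mathit{pvt}_i^L \rightarrow \mathit{out}_i^{\downarrow,L}$ into $\mathit{LS}_{i+m}$, and exits at $v_{i+m}$. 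This uses exactly the pivot plus two leaves in each of $\mathit{LCG}_i$ and $\mathit{RCG}_i$, covers exactly one row, and visits each pivot of pair $i$ once --- fully consistent with Propositions~\ref{prop:exact}, \ref{prop:twopivots} and~\ref{prop:onerow} --- yet it joins the non-adjacent shared vertices $v_{i-1}$ and $v_{i+m}$. Ruling out such fragments would require a genuinely new argument about how the stranded pieces $\mathit{LS}_i$, $\mathit{RS}_{i+m}$ and $\mathit{Row}_{i+m}$ could still be covered, or (as the paper does) timing. Since the proposition follows in two lines from Proposition~\ref{prop:dur} and the pivots' deadlines, that is the argument you should give.
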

\begin{proof}
By Proposition~\ref{prop:dur}, if this does not hold then there must be a $\mathit{pvt}$ vertex
having two occurrences in $s$ separated by more than $\frac{1}{2}T + m\big(2 (3m+1)l + l\big) + 2(3m+1)l$.
This is a contradiction.
\end{proof}

For each segment $s_j$, we denote by $\mathit{first}(s_j)$ the `first half' of $s_j$, i.e., the subpath
of $s_j$ that consists of the first $m$ fragments of $s_j$ and by $\mathit{second}(s_j)$ the `second half' of $s_j$.
Write $\exists (v \leadsto v') \subseteq u$ if $u$ has a subpath of the form $v \leadsto v'$.

\begin{proposition}\label{prop:inorder}
In each segment $s_j = v_{top} \leadsto v_{b_1} \leadsto \cdots \leadsto v_{b_{2m - 1}} \leadsto v_{bot}$,
we have $b_i = i$ for all $i \in \{1, \ldots, 2m - 1\}$.
\end{proposition}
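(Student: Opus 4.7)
The plan is to show that each fragment of $s_j$ corresponds to a unique variable gadget, with endpoints being the top and bottom shared vertices of that gadget. The proposition then reduces to the fact that there is a unique Eulerian walk between the two endpoints of a path graph.

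First, I will appeal to the structural decomposition in items~(\ref{itm:one})--(\ref{itm:seven}): each fragment $v \leadsto v'$ begins and ends with a long edge, and both long edges belong to the same variable gadget (the one whose row is traversed in step~(\ref{itm:four})). The variable gadget for $x_i^0$ has exactly four long edges: two incident to the top shared vertex (which is $v_{top}$ when $i = 1$ and $v_{i-1}$ otherwise), reaching $v_i^{t, L}$ and $v_i^{t, R}$; and two incident to the bottom shared vertex $v_i$, reaching $v_i^{b, L}$ and $v_i^{b, R}$. An analogous statement holds for $x_i^1$, with top $v_{i+m-1}$ and bottom $v_{i+m}$ (or $v_{bot}$ when $i = m$). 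Since $v \neq v'$, the two long edges of a fragment cannot both be incident to the top shared vertex, nor both to the bottom. Consequently, each fragment joins the top shared vertex of its gadget to the bottom shared vertex.

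Second, by Proposition~\ref{prop:exact} every row $\mathit{Row}_k$ is traversed in its entirety exactly once per segment, so the $2m$ fragments of $s_j$ are in bijection with the $2m$ variable gadgets. Now consider the path graph $P$ on the vertex set $\{v_{top}, v_1, v_2, \ldots, v_{2m-1}, v_{bot}\}$, laid out in this linear order, with each consecutive pair joined by an edge labelled by the unique variable gadget having those two shared vertices as its top and bottom (gadget for $x_k^0$ for $1 \leq k \leq m$, and for $x_{k-m}^1$ for $m+1 \leq k \leq 2m$). The sequence $v_{top}, v_{b_1}, \ldots, v_{b_{2m-1}}, v_{bot}$ then constitutes an Eulerian walk on $P$ from $v_{top}$ to $v_{bot}$.

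Finally, on a path graph, the unique Eulerian walk between the two end-vertices is the natural left-to-right traversal. Starting at $v_{top}$, the only incident edge leads to $v_1$; after taking this edge, at $v_1$ only the edge to $v_2$ remains unused; iterating yields $v_{top}, v_1, v_2, \ldots, v_{2m-1}, v_{bot}$, whence $b_i = i$ for all $1 \leq i \leq 2m - 1$. The main obstacle is the first step: verifying that a fragment's two long edges must attach to distinct shared vertices. This follows from the distinctness of fragment endpoints combined with the enumeration of long edges above, but requires a careful inspection of items~(\ref{itm:one})--(\ref{itm:seven}) to exclude any degenerate shape of fragment.
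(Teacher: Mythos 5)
There is a genuine gap, and it sits exactly where you flag your ``main obstacle''---but the obstacle is not the one you identify. The step that fails is the claim that ``both long edges belong to the same variable gadget (the one whose row is traversed in step~(\ref{itm:four}))''. Items~(\ref{itm:one})--(\ref{itm:seven}) do not give you this: the consistency gadgets $\mathit{LCG}_i$ and $\mathit{RCG}_i$ are each wired to \emph{both} the gadget for $x_i^0$ and the gadget for $x_i^1$ (e.g.\ $\mathit{pvt}_i^L$ is adjacent to $\mathit{in}_i^{\downarrow,L}$/$\mathit{out}_i^{\uparrow,L}$, which touch $\mathit{LS}_i$ and $\mathit{Row}_i$, \emph{and} to $\mathit{in}_i^{\uparrow,L}$/$\mathit{out}_i^{\downarrow,L}$, which touch $\mathit{LS}_{i+m}$ and $\mathit{Row}_{i+m}$), and these connections bypass every shared vertex. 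So a fragment can perfectly well have the shape: long edge $v_{i-1}\rightarrow v_i^{t,L}$, side $\mathit{LS}_i$, subpath $\mathit{in}_i^{\downarrow,L}\rightarrow\mathit{pvt}_i^L\rightarrow\mathit{out}_i^{\downarrow,L}$, traversal of $\mathit{Row}_{i+m}$, subpath through $\mathit{pvt}_i^R$, side $\mathit{RS}_{i+m}$, long edge to $v_{i+m}$. This matches the decomposition literally, visits $\{\mathit{pvt}_i^L,\mathit{pvt}_i^R\}$ once each, visits exactly one row in full, has duration within the bounds of Proposition~\ref{prop:dur}, and yet is a fragment $v_{i-1}\leadsto v_{i+m}$ whose endpoints are \emph{not} adjacent in your path graph $P$. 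None of Propositions~\ref{prop:exact}--\ref{prop:dur} excludes such ``crossing'' fragments, so the walk on shared vertices need not be an Eulerian trail of $P$, and your final step collapses.

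Ruling out these crossings is precisely the content of the paper's proof, and it is a timing argument, not a combinatorial one: the paper first pins down (via Propositions~\ref{prop:twopivots} and~\ref{prop:sameorder}) that $\{\mathit{pvt}_m^L,\mathit{pvt}_m^R\}$ is the last $\mathit{pvt}$ set visited in $\mathit{first}(s_j)$, and then shows by a four-case analysis that if any long edge of flight time $(3m+2)l$ were taken before that point, some $\mathit{pvt}_m$ vertex would exceed its (carefully tuned) relative deadline $\frac{1}{2}T + m\big(2(3m+2)l+l\big)-(2m-1)l+4h$. Since every entry into or exit from an $\overline{x}^1$-gadget costs a $(3m+2)l$ long edge, this forbids crossings in the first half of the segment; only then does a path-graph/``must go straight down'' argument of the kind you propose take over to give $b_i=i$ for $i<m$, with $b_m$ and the second half handled by Propositions~\ref{prop:sameorder} and~\ref{prop:exact}. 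Your Eulerian-trail observation is a clean way to phrase that endgame, but it cannot replace the deadline computation, which is where the actual work (and the whole purpose of the consistency gadgets' $\mathit{RD}$ values) lies.
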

\begin{proof}
First note that by construction and Proposition~\ref{prop:twopivots}, $\{\mathit{pvt}_m^L, \mathit{pvt}_m^R\}$ must be the last set of $\mathit{pvt}$ vertices visited in $\mathit{second}(s_{j - 1})$.
By Proposition~\ref{prop:sameorder}, it must also be the last set of $\mathit{pvt}$ vertices visited in $\mathit{first}(s_j)$.
Now assume that a long edge of flight time $(3m + 2)l$ is taken before $\mathit{pvt}_m^L$ and $\mathit{pvt}_m^R$ are visited
in $\mathit{first}(s_j)$.
%Since the , 
%there are two occurrences of $\mathit{pvt}_m^L$ in $s$ separated by at least.
Consider the following cases:
\begin{itemize}
\item $\exists (\mathit{pvt}_m^L \leadsto \mathit{pvt}_m^R) \subseteq \mathit{second}(s_{j-1})$ and $\exists (\mathit{pvt}_m^R \leadsto \mathit{pvt}_m^L) \subseteq \mathit{first}(s_{j})$:
Note that the last edge taken in $s_{j-1}$ is a long edge of flight time $(3m + 2)l$,
and hence there are two occurrences of $\mathit{pvt}_m^L$ in $s$ separated by at least
$\frac{1}{2}T + m\big(2 (3m+1)l + l\big) + 2l > \frac{1}{2}T + m\big(2 (3m+1)l + l\big) + l + 4h = \mathit{RD}(\mathit{pvt}_m^L)$.
\item $\exists (\mathit{pvt}_m^R \leadsto \mathit{pvt}_m^L) \subseteq \mathit{second}(s_{j-1})$ and $\exists (\mathit{pvt}_m^L \leadsto \mathit{pvt}_m^R) \subseteq \mathit{first}(s_{j})$:
The same argument shows that $\mathit{pvt}_m^R$ must miss its relative deadline.
\item $\exists (\mathit{pvt}_m^L \leadsto \mathit{pvt}_m^R) \subseteq \mathit{second}(s_{j-1})$ and $\exists (\mathit{pvt}_m^L \leadsto \mathit{pvt}_m^R) \subseteq \mathit{first}(s_{j})$:
The same argument shows that both $\mathit{pvt}_m^L$ and $\mathit{pvt}_m^R$ must miss their relative deadlines.
\item $\exists (\mathit{pvt}_m^R \leadsto \mathit{pvt}_m^L) \subseteq \mathit{second}(s_{j-1})$ and $\exists (\mathit{pvt}_m^R \leadsto \mathit{pvt}_m^L) \subseteq \mathit{first}(s_{j})$:
The same argument shows that both $\mathit{pvt}_m^L$ and $\mathit{pvt}_m^R$ must miss their relative deadlines.
\end{itemize}
We therefore conclude that in $\mathit{first}(s_j)$, all long edges taken before $\mathit{pvt}_m^L$ and $\mathit{pvt}_m^R$ are visited
must have $FT$ equal to $(3m+1)l$.
Furthermore, all such long edges must be traversed `downwards' (by Proposition~\ref{prop:exact}).
It follows that $b_i = i$ for $i \in \{1, \ldots, m - 1\}$.
By Proposition~\ref{prop:sameorder}, Proposition~\ref{prop:exact} and $m > 2$, we easily derive that $b_m = m$
and then $b_i = i$ for $i \in \{m + 1, \ldots, 2m - 1\}$.
%One can verify that if an edge of flight time $(3m + 2)l$ is taken 
%before $\mathit{pvt}_m^L$ or $\mathit{pvt}_m^R$ is first visited in (the first half of) $s_j$,
%since the last edge taken in $s_{j-1}$ is a long edge of flight time $(3m + 2)l$, one of $\mathit{pvt}_m^L, \mathit{pvt}_m^R$ must miss its relative deadline.
%It follows that $b_i = i$ for $i \in \{1, \ldots, m - 1\}$.
%By Proposition~\ref{prop:sameorder} and Proposition~\ref{prop:exact}, we also have $b_i = i$ for $i \in \{m, \ldots, 2m - 1\}$.
\end{proof}

By Proposition~\ref{prop:inorder}, the long edges in each variable
gadget must be traversed in the ways shown in Figures~\ref{fig:positive} and~\ref{fig:negative}.
\begin{figure}[h]
\begin{minipage}[b]{0.5\linewidth}
\centering
\begin{tikzpicture}[-,>=stealth', auto, transform shape, node distance=.6cm,
                    semithick, every state/.style={fill=none,draw=black,text=black,shape=circle,scale=0.15}]

\node[state, draw=none]			   (v3) {$$};
\node[state, draw=none]			   (v3i) [above=.25cm of v3] {$$};
\node[state, draw=none]			   (v3o) [below=.25cm of v3] {$$};

\node[state, draw=none]			   (v2) [left of =v3] {$$};
\node[state, draw=none]			   (v2i) [below=.25cm of v2] {$$};
\node[state, draw=none]			   (v2o) [above=.25cm of v2] {$$};

\node[state, draw=none]			   (v1) [left of =v2] {$$};
\node[state, draw=none]			   (v1i) [above=.25cm of v1] {$$};
\node[state, draw=none]			   (v1o) [below=.25cm of v1] {$$};

\node[state]			   (v0) [left=1cm of v1] {$$};
\node[state]			   (v0i) [above=.4cm of v0] {$$};
\node[state]			   (v0o) [below=.4cm of v0] {$$};

\node[state]				(vt1)	[above right=1cm and 1.25cm of v2] {$$};
\node[state]				(vb1)	[below right=1cm and 1.25cm of v2] {$$};

\node[state, draw=none]			   (v5) [below right=1cm and 1.25cm of vt1] {$$};
\node[state, draw=none]			   (v5i) [below=.25cm of v5] {$$};
\node[state, draw=none]			   (v5o) [above=.25cm of v5] {$$};

\node[state, draw=none]			   (v4) [left of =v5] {$$};
\node[state, draw=none]			   (v4i) [above=.25cm of v4] {$$};
\node[state, draw=none]			   (v4o) [below=.25cm of v4] {$$};

\node[state, draw=none]			   (v6) [right of =v5] {$$};
\node[state, draw=none]			   (v6i) [above=.25cm of v6] {$$};
\node[state, draw=none]			   (v6o) [below=.25cm of v6] {$$};

\node[state]			   (v7) [right=1cm of v6] {$$};
\node[state]			   (v7i) [above=.4cm of v7] {$$};
\node[state]			   (v7o) [below=.4cm of v7] {$$};

\path   (vt1) edge [->, line width=1mm] node {$$} (v0i.north)
		  (v0i) edge [->, line width=1mm] node {$$} (v0)
		  (v0) edge [->, line width=1mm] node {$$} (v0o)
		  (v7i) edge [->, line width=1mm] node {$$} (v7)
		  (v7) edge [->, line width=1mm] node {$$} (v7o)
		  (v7o.south) edge [->, line width=1mm] node {$$} (vb1);

\end{tikzpicture}
\caption{The variable is assigned to $\mathbf{true}$}
\label{fig:positive}
\end{minipage}
\begin{minipage}[b]{0.5\linewidth}
\centering
\begin{tikzpicture}[-,>=stealth', auto, transform shape, node distance=.6cm,
                    semithick, every state/.style={fill=none,draw=black,text=black,shape=circle,scale=0.15}]

\node[state, draw=none]			   (v3) {$$};
\node[state, draw=none]			   (v3i) [above=.25cm of v3] {$$};
\node[state, draw=none]			   (v3o) [below=.25cm of v3] {$$};

\node[state, draw=none]			   (v2) [left of =v3] {$$};
\node[state, draw=none]			   (v2i) [below=.25cm of v2] {$$};
\node[state, draw=none]			   (v2o) [above=.25cm of v2] {$$};

\node[state, draw=none]			   (v1) [left of =v2] {$$};
\node[state, draw=none]			   (v1i) [above=.25cm of v1] {$$};
\node[state, draw=none]			   (v1o) [below=.25cm of v1] {$$};

\node[state]			   (v0) [left=1cm of v1] {$$};
\node[state]			   (v0i) [above=.4cm of v0] {$$};
\node[state]			   (v0o) [below=.4cm of v0] {$$};

\node[state]				(vt1)	[above right=1cm and 1.25cm of v2] {$$};
\node[state]				(vb1)	[below right=1cm and 1.25cm of v2] {$$};

\node[state, draw=none]			   (v5) [below right=1cm and 1.25cm of vt1] {$$};
\node[state, draw=none]			   (v5i) [below=.25cm of v5] {$$};
\node[state, draw=none]			   (v5o) [above=.25cm of v5] {$$};

\node[state, draw=none]			   (v4) [left of =v5] {$$};
\node[state, draw=none]			   (v4i) [above=.25cm of v4] {$$};
\node[state, draw=none]			   (v4o) [below=.25cm of v4] {$$};

\node[state, draw=none]			   (v6) [right of =v5] {$$};
\node[state, draw=none]			   (v6i) [above=.25cm of v6] {$$};
\node[state, draw=none]			   (v6o) [below=.25cm of v6] {$$};

\node[state]			   (v7) [right=1cm of v6] {$$};
\node[state]			   (v7i) [above=.4cm of v7] {$$};
\node[state]			   (v7o) [below=.4cm of v7] {$$};

\path   (vt1) edge [->, line width=1mm] node {$$} (v7i.north)
		  (v0i) edge [->, line width=1mm] node {$$} (v0)
		  (v0) edge [->, line width=1mm] node {$$} (v0o)
		  (v7i) edge [->, line width=1mm] node {$$} (v7)
		  (v7) edge [->, line width=1mm] node {$$} (v7o)
		  (v0o.south) edge [->, line width=1mm] node {$$} (vb1);

\end{tikzpicture}
\caption{The variable is assigned to $\mathbf{false}$}
\label{fig:negative}
\end{minipage}
\end{figure}

\vspace{-0.5cm}

\begin{proposition}\label{prop:consistent}
For each segment $s_j$, the ways in which the long edges are traversed 
in the last $m$ fragments of $s_j$ are consistent with
the ways in which the long edges are traversed in the first $m$ fragments
of $s_{j+1}$.
\end{proposition}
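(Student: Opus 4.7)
The plan is to prove consistency by contradiction: if for some $i$ the long-edge patterns ($\mathbf{T}$ or $\mathbf{F}$, as in Figures~\ref{fig:positive}--\ref{fig:negative}) of the $x_i^1$-gadget in $s_j$ and of the $x_i^0$-gadget in $s_{j+1}$ disagree, I will exhibit two consecutive visits to either $\mathit{pvt}_i^L$ or $\mathit{pvt}_i^R$ whose elapsed time strictly exceeds the corresponding relative deadline.

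The first ingredient is to pin down, as a function of the truth value, the temporal position of each pivot inside a fragment. From the seven-step decomposition (i)--(vii) preceding Proposition~\ref{prop:dur} and the fact (Proposition~\ref{prop:nojump}) that the row traversal is an uninterrupted block, a direct inspection of Figure~\ref{fig:connectaux} gives: in a $\mathbf{true}$-traversal the fragment visits $\mathit{pvt}_i^L$ \emph{before} the row and $\mathit{pvt}_i^R$ \emph{after}, whereas in a $\mathbf{false}$-traversal the two roles are exchanged. Writing $TL$, $BL$ for the top/bottom long-edge costs of the relevant gadget and $R$ for its row-traversal cost (so $R \geq 24h + 10$), the time from the start of the fragment to $\mathit{pvt}_i^L$ equals $TL + 8$ in the $\mathbf{true}$ case and $TL + 16 + R$ in the $\mathbf{false}$ case, and dually for $\mathit{pvt}_i^R$.

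The core computation is the duration between the two pivot-$L$ visits straddling the boundary---one inside fragment $m+i$ of $s_j$, the other inside fragment $i$ of $s_{j+1}$. Writing $\alpha_i^b(s)$ and $\beta_i^b(s)$ for the portions of a fragment traversing the $x_i^b$-gadget in segment $s$ that lie respectively before and after $\mathit{pvt}_i^L$, this duration equals $\beta_i^1(s_j) + [\text{intermediate fragments}] + \tfrac{1}{2}T + \alpha_i^0(s_{j+1})$, where there are $(m-i)$ intermediate fragments in $\mathit{second}(s_j)$ and $(i-1)$ in $\mathit{first}(s_{j+1})$. Lower-bounding the $m-1$ intermediate fragments via Proposition~\ref{prop:dur} and substituting $T = 2\bigl(m(2(3m+1)l + l) + m(2(3m+2)l + l) + l + 2h\bigr)$, the consistent combinations $(\mathbf{T},\mathbf{T})$ and $(\mathbf{F},\mathbf{F})$ yield $\mathit{RD}(\mathit{pvt}_i^L) - \Theta(h)$, well inside the deadline; whereas the inconsistent combination $(\mathbf{T},\mathbf{F})$ contains \emph{two} row traversals inside the $\alpha + \beta$ term (one from $\beta_i^1(\mathbf{T})$, one from $\alpha_i^0(\mathbf{F})$), inflating the sum by $R + 8 \geq 24h + 18$ and so strictly exceeding the $\Theta(h)$ slack in $\mathit{RD}(\mathit{pvt}_i^L)$.

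The mirror inconsistency $(\mathbf{F},\mathbf{T})$ is ruled out by the same argument applied to $\mathit{pvt}_i^R$: the positional rule in the first step swaps $L$ and $R$, so this time it is the pivot-$R$ duration that acquires the two extra row traversals, and since $\mathit{RCG}_i$ is defined identically to $\mathit{LCG}_i$, the same deadline bound $\mathit{RD}(\mathit{pvt}_i^R) = \mathit{RD}(\mathit{pvt}_i^L)$ applies. The main obstacle will be the careful bookkeeping of the $m-1$ intermediate fragments---keeping track of the heavier $(3m+2)l$ long edges of $\mathit{second}(s_j)$ against the lighter $(3m+1)l$ long edges of $\mathit{first}(s_{j+1})$---and verifying that the tuning term $-(2i-1)l + 4h$ inside $\mathit{RD}(\mathit{pvt}_i^L)$ compensates precisely for the $i$-dependent drift in the minimum intermediate duration, leaving uniformly $\Theta(h)$ of slack in $i$: enough to absorb the $+2$ cost of each clause detour, but nowhere near the $\approx l$ cost of an extra row traversal.
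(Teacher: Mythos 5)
Your proposal is correct and follows essentially the same route as the paper: an inconsistent orientation forces two consecutive visits to $\mathit{pvt}_i^L$ (or, symmetrically, $\mathit{pvt}_i^R$) to be separated by at least $\frac{1}{2}T + m\bigl(2(3m+2)l+l\bigr) - (2i-1)l$ plus a full extra row traversal, which overwhelms the $4h$ slack in the pivot's relative deadline. Your explicit $\alpha/\beta$ bookkeeping and the observation about the heavier $(3m+2)l$ edges versus the $-(2i-1)l$ tuning term are exactly the (implicit) accounting behind the paper's one-line bound.
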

\begin{proof}
Without loss of generality, consider the case that
$\exists (\mathit{pvt}_i^L \leadsto \mathit{pvt}_i^R) \subseteq \mathit{second}(s_{j})$
and $\exists (\mathit{pvt}_i^R \leadsto \mathit{pvt}_i^L) \subseteq \mathit{first}(s_{j+1})$.
By Proposition~\ref{prop:inorder}, these two occurrences of $\mathit{pvt}_i^L$
in $s$ are separated by, at least, the sum of $\frac{1}{2}T + m\big(2(3m+2)l + l\big)  - (2i - 1)l$
and the duration of the actual subpath $\mathit{pvt}_i^R \leadsto \mathit{pvt}_i^L$ in $\mathit{first}(s_{j+1})$.
It is clear that $\mathit{pvt}_i^L$ must miss its relative deadline.
\end{proof}

\begin{proposition}\label{prop:hitcorrect}
In each segment $s_j$, if a variable gadget is traversed as in Figure~\ref{fig:positive} (Figure~\ref{fig:negative}), 
then all of its clause boxes are traversed in Pattern `$\sqcup$' (Pattern `$\sqcap$').
\end{proposition}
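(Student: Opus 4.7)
The plan is to reduce the statement to a single clause box via Proposition~\ref{col:asawhole} and then use the box combinatorics to pin down which of the two patterns is forced.

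In the positive case, Figure~\ref{fig:positive} shows that the path descends the left side to $v^{b,L}_i$. Since $v^{t,L}_i$ and $v^{m,L}_i$ have already been consumed and $\mathit{pvt}^L_i$ must still be reached (via $\mathit{LCG}_i$, the only consistency gadget connected to $v^{b,L}_i$), the only continuation is $v^{b,L}_i \to \mathit{in}^{\downarrow,L}_i \to \mathit{pvt}^L_i \to \mathit{out}^{\uparrow,L}_i \to v^{\bar{a},0}_i$; that is, the path enters $\mathit{SB}_i^0$ at its top-left vertex. Each box in the row is a $2 \times 3$ arrangement whose only inter-column edges link the two top vertices and the two bottom vertices, while the middle vertices have no external edges. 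Combined with Proposition~\ref{prop:exact}, which requires each vertex to be visited exactly once, the unique traversal of $\mathit{SB}_i^0$ starting at the top-left vertex is pattern $\sqcup$ (down the left column, across the bottom, up the right column), exiting at the top-right vertex $v^{\bar{f},0}_i$.

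The edge $v^{\bar{f},0}_i \to v^{a,1}_i$ then enters $\mathit{CB}_i^1$ at its top-left, and the same box-combinatorics argument forces $\mathit{CB}_i^1$ into pattern $\sqcup$; the optional bottom crossing $v^{c,1}_i \to v^{c_1} \to v^{d,1}_i$, which is available exactly when $x^0_i$ occurs positively in $c_1$, is consistent with this pattern. Invoking Proposition~\ref{col:asawhole} then yields pattern $\sqcup$ for all clause boxes of this variable gadget. The negative case is entirely symmetric: the path descends the right side to $v^{b,R}_i$, is forced through $\mathit{RCG}_i$ into $v^{\bar{d},h}_i$ (bottom-right of $\mathit{SB}_i^h$), and the same combinatorial argument (run from the other end) forces $\mathit{CB}_i^h$ and hence every clause box into pattern $\sqcap$, with detours at the top matching negative occurrences. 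The main obstacle is only bookkeeping in the first step, namely ruling out alternative routes from $v^{b,L}_i$ into the row; this is immediate from Proposition~\ref{prop:exact} since $v^{t,L}_i$ and $v^{m,L}_i$ are consumed and the `from above' doorway of $\mathit{LCG}_i$ consists solely of $\mathit{in}^{\downarrow,L}_i$ and $\mathit{out}^{\uparrow,L}_i$.
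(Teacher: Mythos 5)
Your argument is correct, and it is essentially the reasoning the paper intends: the paper states this proposition without proof, treating it as immediate from the fragment decomposition (i)--(vii) together with Propositions~\ref{prop:exact} and~\ref{col:asawhole}, and your write-up simply makes that explicit --- the direction of the side traversal fixes the entry point into the row (top-left of $\mathit{SB}_i^0$ in the positive case, bottom-right of $\mathit{SB}_i^h$ in the negative case), and the degree-two middle vertices of each $3\times 2$ box then force the `$\sqcup$' (resp.\ `$\sqcap$') pattern to propagate along the whole row. The only step you state rather than prove is that the continuation from $v^{b,L}_i$ is the pivot subpath $\mathit{in}^{\downarrow,L}_i \to \mathit{pvt}^L_i \to \mathit{out}^{\uparrow,L}_i$ rather than $\mathit{in}^{\downarrow,L}_i \to v^{\bar c,0}_i$, but this is exactly item (iii) of the fragment structure already asserted in the paper (and is anyway forced, since entering the row first would leave $\mathit{pvt}^L_i$ reachable only through the $x_i^1$ gadget, contradicting Propositions~\ref{prop:onerow} and~\ref{prop:inorder}).
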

%\begin{proof}
%(\emph{Sketch}). By Proposition~\ref{prop:exact} and Propositions~\ref{prop:twopivots},~\ref{prop:onerow}.
%\end{proof}

Consider a segment $s_j$. As each clause vertex is visited once in $s_j$ (by Proposition~\ref{prop:exact}),
the ways in which the long edges are traversed in all fragments $v \leadsto v'$ of $s_j$ (i.e., as in Figure~\ref{fig:positive} or Figure~\ref{fig:negative})
can be seen as a satisfying assignment of $\varphi(0)$ (by construction and Proposition~\ref{prop:hitcorrect}).
By the same argument, the ways in which the long edges are traversed in all fragments of $s_{j+1}$
can be seen as a satisfying assignment of $\varphi(1)$.
Now by Proposition~\ref{prop:consistent}, the assignment of variables
$\overline{x}^1$ is consistent in both segments. 
By IH, $s$ witnesses a (periodic) satisfying assignment of
$\bigwedge_{j \geq 0} \varphi(j)$. Proposition~\ref{prop:iff} is hence proved.

Finally, note that $\mathit{FT}$ can easily be modified into a metric
over $V$ by replacing each entry of value $2T$ with the `shortest
distance' between the two relevant vertices.  It is easy to see that
Proposition~\ref{prop:iff} still holds.  Our main result, which holds
for the metric case, follows immediately from
Section~\ref{subsec:membership}.
\begin{theorem}
The \textsc{cr-uav} Problem is $\mathrm{PSPACE}$-complete.\footnote{Our result holds irrespective
of whether the numbers are encoded in unary or binary.}
\end{theorem}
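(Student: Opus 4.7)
The plan is to combine the two halves already assembled in the paper: $\mathrm{PSPACE}$-membership of the general \textsc{cr-uav} Problem (established in Section~\ref{subsec:membership} by encoding it as a network of B\"uchi timed automata and invoking Thm.~7 of~\cite{Alur1998}) together with $\mathrm{PSPACE}$-hardness of the single-UAV restriction (which follows from Proposition~\ref{prop:iff}, via the reduction from \textsc{periodic sat}~\cite{Orlin1981}). The reduction is polynomial-time in any encoding: the size bounds $|V| = O(mh)$ and largest weight $O(m^2 h)$ ensure that the instance is of polynomial size in $|\varphi(0)|$, and all weights are polynomially bounded so that unary and binary encodings are polynomially related. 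Hardness of the general (multi-UAV) problem is then inherited from hardness of the single-UAV case.

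The only remaining technical step is that the graph $G$ produced by the reduction uses $\mathit{FT}$-values of $2T$ as placeholders for `missing' edges, so $\mathit{FT}$ is not literally a metric. The plan is to replace each such placeholder by the shortest-path distance in $G$ taken over the `real' edges of weight strictly less than $2T$; this produces a symmetric weighting that obeys the triangle inequality. I then need to verify that Proposition~\ref{prop:iff} survives this modification. The forward direction is immediate, since the explicit witness solution built from a satisfying assignment never traverses a placeholder edge, and so its edge durations are unchanged. For the reverse direction, any candidate solution in the metric-completed instance that uses a newly-shortened edge can be expanded into a path of equal total duration through intermediate vertices of $V$; since every relative deadline is strictly less than $2T$ by construction, the structural analysis of Section 3 applies unchanged.

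I expect the main obstacle, such as it is, to lie in justifying that the cascade of structural propositions---from Proposition~\ref{prop:tborbt} through Proposition~\ref{prop:hitcorrect}---is genuinely insensitive to whether $\mathit{FT}$ is the original placeholder assignment or its metric completion. The arguments there rely only on duration bounds along concrete subpaths whose edges all have weight less than $2T$, so no reasoning needs to be redone; one just has to observe, for each proposition, that its hypotheses and conclusions refer to quantities preserved by the metric completion. Once this is checked, hardness and membership combine to give the theorem.
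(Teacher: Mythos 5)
Your proposal matches the paper's own argument: $\mathrm{PSPACE}$-membership via the timed-automata encoding of Section~\ref{subsec:membership}, $\mathrm{PSPACE}$-hardness via Proposition~\ref{prop:iff}, and the final metric-completion step replacing the $2T$ placeholder edges by shortest-path distances (which the paper dispatches with ``it is easy to see''; you supply slightly more justification, but the same justification). The observation about unary versus binary encodings being immaterial because all constants are polynomially bounded is likewise exactly the paper's footnote.
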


\section{Conclusion}
We have proved that the \textsc{cr-uav} Problem is
$\mathrm{PSPACE}$-complete even in the single-UAV case.  The proof
reveals a 
%(somewhat unexpected) 
connection between a periodically specified problem and a recurrent
path-planning problem (which is not \emph{succinctly specified} in the sense
of~\cite{Marathe1997}).  We list below some possible directions for
future work:

\begin{enumerate}

\item A number of crucial problems in other domains, e.g., the
  generalised pinwheel scheduling problem~\cite{Feinberg2005} and the message
  ferrying problem~\cite{Zhao2004}, share similarities with the
  \textsc{cr-uav} Problem---namely, they have relative deadlines and
  therefore `contexts'.  Most of these problems are only known to be
  $\mathrm{NP}$-hard. It would be interesting to investigate whether
  our construction can be adapted to establish
  $\mathrm{PSPACE}$-hardness of these problems.

\item It is claimed in \cite{Fargeas2013} that the restricted case
  in which vertices can be realised as points in a two-dimensional plane
  (with discretised distances between points) is $\mathrm{NP}$-complete (with a single UAV).
  A natural question is the relationship with the problem studied in the present paper.

\item Current approaches to solving the \textsc{cr-uav} Problem often
  formulate it as a Mixed-Integer Linear Program (MILP) and then
  invoke an off-the-shelf solver (see, e.g.,~\cite{Basilico2012}). Yet
  as implied by Proposition~\ref{prop:iff}, the length of a solution
  can however be exponential in the size of the problem
  instance. We are currently investigating alternative implementations
  which would overcome such difficulties.
%This may involve techniques such as digitisation of timed automata~\cite{Henzinger1992, Ouaknine2002}
%or antichains~\cite{Doyen2010}.

\end{enumerate}

\bibliographystyle{splncs03}
\bibliography{refs}        %use a bibtex bibliography file refs.bib

\newpage
\appendix

\section{A Counterexample}\label{app:cex}

In~\cite{Basilico2012} it is claimed that the \textsc{cr-uav} Problem with a
single UAV is in $\mathrm{NP}$. The claim is based on the following bound on the periods
of solutions:

\begin{claim}[{\cite[Theorem $4.5$]{Basilico2012}}]\label{falseclaim}
Consider an instance $G$ of the \textsc{cr-uav} Problem with a single UAV.
If $G$ has a solution, then $G$ has a solution of the form $u^\omega$ 
where $u$ is a finite path through $G$ with 
$|u| \leq \displaystyle{ \frac{\max_{v \in V} \mathit{RD}(v)}{\min_{\substack{v, v' \in V \\ v \neq v'}} \mathit{FT}(v, v')} }$.
\end{claim}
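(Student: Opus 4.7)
The plan is to \emph{disprove} the claim by exhibiting a single metric CR-UAV instance $G$ such that (i) $G$ admits a solution, but (ii) every periodic solution $u^\omega$ of $G$ has $|u|$ strictly greater than $\max_{v} \mathit{RD}(v) / \min_{v \neq v'} \mathit{FT}(v,v')$. Since the claim promises a polynomial period bound, any single concrete instance violating this bound suffices as a counterexample.

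First, I would search for a small instance---ideally three or four vertices---whose flight-time function is a metric with minimum $1$, and whose relative deadlines are chosen so that the bound $\max \mathit{RD} / \min \mathit{FT}$ evaluates to a modest integer $k$. The deadlines should be calibrated in a ``pinwheel''-like regime: the sum of reciprocal deadlines is less than $1$ (so a schedule exists in principle), yet no arrangement of at most $k$ visits can simultaneously satisfy every deadline. A promising starting configuration is three vertices $\{a,b,c\}$ with $\mathit{FT}(a,b) = \mathit{FT}(b,c) = 1$ and $\mathit{FT}(a,c) = 2$ (a genuine metric), combined with tight relative deadlines on $a$ and $c$ that force these vertices to be revisited through $b$, and a longer deadline on $b$ so that the overall schedule remains feasible.

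Having fixed $G$, the verification splits into two tasks. For (i) it is enough to display an explicit cyclic word $u$ and check directly that each consecutive revisit of every vertex in $u^\omega$ happens within its relative deadline; this is a finite arithmetic calculation. For (ii) I would rule out every candidate shorter period by enumeration: up to rotation there are only finitely many cyclic paths of length at most $k$ on the vertex set of $G$, and each one can be eliminated either by exhibiting a vertex whose interval between consecutive visits exceeds its relative deadline in $u^\omega$, or by showing that some vertex is omitted entirely from $u$.

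The main obstacle is (ii): the case analysis must be exhaustive yet compact. To keep it short, I would leverage the forced structure coming from the tightest relative deadlines---observing, for instance, that the vertex with smallest $\mathit{RD}$ must appear at a controlled density in any $u$, which in turn severely restricts where the remaining vertices can be interleaved. If a three-vertex triangle proves too rigid to separate the claimed bound from the true minimum period, my fallback is to use four vertices in a classical pinwheel-style configuration (for example, deadlines akin to $(3,5,7)$ augmented with a further vertex absorbing the slack), where it is known that feasible periodic schedules must have periods substantially larger than any naive ratio of parameters.
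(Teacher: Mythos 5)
You have correctly recognised that this ``Faulty Claim'' must be handled by refutation, and your strategy---exhibit one small concrete instance on which a solution exists but every period must exceed the ratio bound, then verify minimality by exhausting the finitely many shorter cyclic paths---is exactly the paper's approach (Appendix~\ref{app:cex}). The gap is that you never actually produce the witness: a search plan for a counterexample is not a counterexample, and until a specific $G$ is written down and both checks (feasibility of some $u^\omega$, infeasibility of every period within the bound) are carried out, nothing has been refuted. For the record, the paper's instance is the four-vertex clique of Figure~\ref{fig:cex}, with relative deadlines $5,6,9,10$ and flight times $1$ and $2$; the claimed bound evaluates to $10$, yet the shortest period of any solution is $11$ (verified mechanically, in the same spirit as your step (ii)). Your proposed three-vertex triangle with flight times $1,1,2$ is very unlikely to work: any round trip from one endpoint of the long edge back to itself through the third vertex already costs $4$, and with so few vertices the feasible periodic schedules are too rigid to push the minimum period above $\max_v \mathit{RD}(v)$ while keeping the instance solvable---your own fallback to four vertices is where the paper lands. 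Finally, note that the paper goes further than a single counterexample: Proposition~\ref{prop:exponentialperiod} gives a family $\{G_n\}$ whose shortest periods grow exponentially in the magnitude of the largest constant, which rules out not only the stated linear bound but any polynomial one.
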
 

If constants are encoded in unary, the claim above would immediately imply
$\mathrm{NP}$-membership of the problem (with a single UAV).
However, the claim turned out to be incorrect, as we now give a counterexample below.
Consider the problem instance $G$ depicted in Figure~\ref{fig:cex} (we number the vertices
in clockwise order, starting with $0$ at bottom left).
The shortest possible period of a solution is $11$\footnote{This can be verified with a model checker, e.g., NuSMV~\cite{Cimatti2002}.}
whereas the claim above gives a bound of $10$.

\begin{figure}[h]
\centering
\begin{tikzpicture}[-,>=stealth', auto, node distance=2.5cm,
                    semithick, bend angle=25, every state/.style={fill=none,draw=black,text=black,shape=circle}]

\node[state, align=center]					(v1) at (-4, 1)	{\normalsize $5$};
\node[state, align=center]					(v2) at (0, 2)	{\normalsize $6$};
\node[state, align=center]					(v3) at (0, 0)	{\normalsize $9$};
\node[state, align=center]					(v4) at (-4, 3) {\normalsize $10$};

\path		(v1)	edge node [near start] {$2$} (v2)
			(v2)	edge node {$1$} (v3)
			(v3)	edge node {$2$} (v1)
			(v1)	edge node {$1$} (v4)
			(v3)	edge node [swap, near start] {$2$} (v4)
			(v2)	edge node [swap] {$2$} (v4);

\end{tikzpicture}
\caption{A periodic solution with the shortest period: $(32010230210)^\omega$}
\label{fig:cex}
\end{figure}
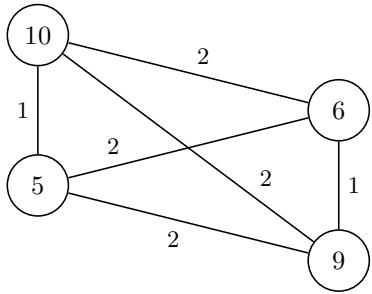

In fact, we can state a stronger result here.
The following proposition says that, the shortest period of a solution can indeed be exponential (and not linear)
in the magnitude of the largest relative deadline.

\begin{proposition}\label{prop:exponentialperiod}
There is a family of instances $\{G_n\}_{n > 0}$ (of the \textsc{cr-uav} Problem with a single UAV)
such that the shortest possible period of a solution to $G_n$
is exponential in the magnitude of the largest constant in $G_n$.\footnote{The proof of this proposition is due to Daniel Bundala.}
\end{proposition}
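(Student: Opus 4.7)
The plan is to exhibit a family $\{G_n\}$ whose largest constant is polynomial in $n$ but whose shortest periodic solution has length exponential in $n$. The natural idea is to use $n$ pairwise coprime numbers---say, the first $n$ primes $p_1, \ldots, p_n$, each of magnitude $O(n \log n)$ by the prime number theorem---embedded as ``forced periods'' in the graph, so that a Chinese-Remainder-style argument forces any periodic solution to have period at least $\prod_{i=1}^n p_i$, which is $e^{\Theta(n \log n)}$ and hence exponential in the largest constant.

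Concretely, I would build $G_n$ around a central hub vertex $c$ with very tight relative deadline (say $\mathit{RD}(c) = 2$), forcing the UAV into a rigid ``slot structure'' in which between consecutive visits to $c$ it performs exactly one round trip to some satellite gadget $D_i$ (reachable from $c$ in one time unit). To the hub I would attach $n$ satellite gadgets $D_1, \ldots, D_n$, each associated with a prime $p_i$ and designed so that in any valid solution the pattern of visits to $D_i$ has a period (counted in slots) that is a positive integer multiple of $p_i$. If this can be arranged, then in any periodic solution of slot-period $P$, every $p_i$ divides $P$; hence $P \geq \mathrm{lcm}(p_1, \ldots, p_n) = \prod_i p_i$, which gives the desired exponential bound.

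The main obstacle is precisely this exactness condition: a relative deadline of $p_i$ gives only an \emph{upper bound} on the gap between successive visits, so a priori the UAV could ``over-visit'' $D_i$ and spoil the divisibility argument. To force exact periodicity one must make extra visits globally infeasible. One route is to tune the construction so that $\sum_i 1/p_i$ is so close to $1$ that no slack is left for over-visits; another is to replace each satellite by a paired gadget $(u_i, v_i)$ in which a companion vertex of tight deadline can be reached only through $D_i$, so that any extra visit to $D_i$ consumes a slot another $D_j$ urgently needs. Getting this interlocking to work, without inflating the constants and while preserving the triangle inequality, is the delicate part of the construction.

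An alternative route, which side-steps the exactness issue entirely, is to appeal to the reduction of the main section: since $\textsc{periodic sat}$ is $\mathrm{PSPACE}$-hard, it admits a family of instances whose periodic satisfying assignments provably require period exponential in $n$ (else the problem would be in $\mathrm{NP}$). Feeding such a family into the construction of Section~3 produces $\textsc{cr-uav}$ instances whose largest constant is polynomial in $n$ (by the $O(m^2 h)$ bound on the largest constant and $O(mh)$ bound on $|V|$) but whose shortest solution period must be exponential, since by Proposition~\ref{prop:iff} it encodes the required exponential-period assignment. Either way, once the structural claim is in hand, the final counting step is routine.
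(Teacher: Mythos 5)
Your core idea---force the period of any solution to be a common multiple of the first $n$ primes, so that it is at least $\prod_i p_i = e^{\Omega(n)}$ while the constants stay polynomial in $n$---is exactly the idea behind the paper's construction, but neither of your two routes actually closes the argument. In the first route you correctly identify the fatal issue (relative deadlines are only upper bounds, so nothing a priori prevents over-visiting a satellite and destroying the divisibility argument), but you leave it unresolved; moreover your hub-and-spoke layout has a feasibility problem that precedes the exactness problem: if each slot serves exactly one satellite $D_i$ and $D_i$ must be served at least once every $p_i$ slots, you need $\sum_i 1/p_i \leq 1$, which already fails for the first three primes, and enlarging the primes to restore feasibility reintroduces precisely the slack you were trying to eliminate. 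The paper's gadget avoids both problems by putting the prime gadgets in \emph{series} rather than in parallel: $G_n$ is a chain of $n$ ``diamonds'', the $i$-th having exactly $p_i$ parallel branches each with relative deadline $p_i T$, threaded by a backbone ($v_t$, shared vertices, $v_b$, $v_m$) of deadline $T$ that forces every round of duration roughly $T$ to pass straight down through \emph{one} branch of \emph{every} diamond. Each of the $p_i$ branches must then be hit within every $p_i$ consecutive rounds while only one branch per round is possible, so by pigeonhole the branch schedule of diamond $i$ is a perfect cyclic permutation of period exactly $p_i$ (prime, hence no shorter period), and the overall period is a multiple of $\prod_i p_i$, with vertex count and largest constant both $O(n^2 \ln n)$.

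Your ``alternative route'' does not in fact side-step the issue. The inference ``\textsc{periodic sat} is $\mathrm{PSPACE}$-hard, hence it has instances with no polynomial-period satisfying assignment, else it would be in $\mathrm{NP}$'' is valid only under the assumption $\mathrm{NP} \neq \mathrm{PSPACE}$, whereas the proposition (and the paper's proof) is unconditional. You could repair this by exhibiting an explicit exponential-period \textsc{periodic sat} family---for instance, $\varphi(0)$ asserting that $\overline{x}^1$ encodes the $m$-bit binary successor of $\overline{x}^0$ modulo $2^m$, which forces every satisfying assignment to have period $2^m$---and then verifying that the backward direction of Proposition~\ref{prop:iff} transfers the period lower bound on assignments to a period lower bound on solutions of the reduced \textsc{cr-uav} instance; but as written the route is conditional, and that period-transfer step is asserted rather than checked.
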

\begin{proof}

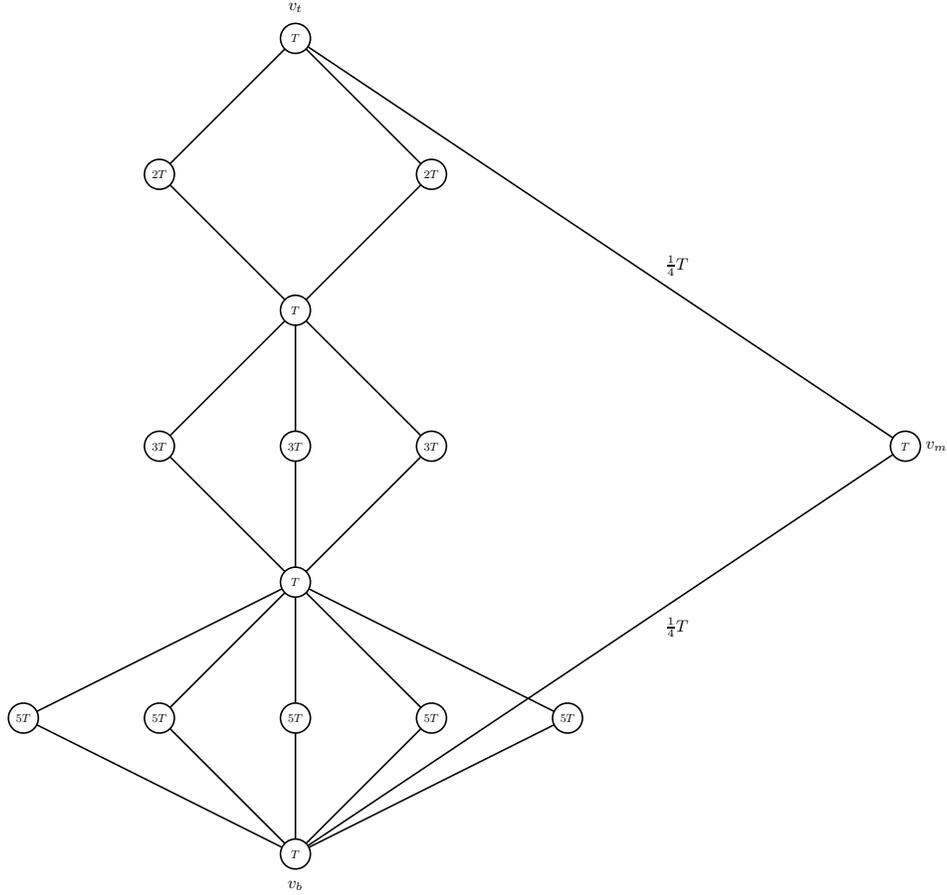
\begin{figure}[t]
\centering
\begin{tikzpicture}[-,>=stealth', auto, transform shape, node distance=.6cm, scale=0.7,
                    semithick, every state/.style={fill=none,draw=black,text=black,shape=circle, scale=0.7}]

\node[state]				(vt)	{$T$};
\node[state, draw=none]				(v0)	[below=2cm of vt] {$$};
\node[state]				(v0l)	[left=2cm of v0] {$2T$};
\node[state]				(v0r)	[right=2cm of v0] {$2T$};
\node[state]				(v1)	[below=2cm of v0] {$T$};
\node[state]				(v2)	[below=2cm of v1] {$3T$};
\node[state]				(v2l)	[left=2cm of v2] {$3T$};
\node[state]				(v2r)	[right=2cm of v2] {$3T$};
\node[state]				(v3)	[below=2cm of v2] {$T$};
\node[state]				(v4)	[below=2cm of v3] {$5T$};
\node[state]				(v4l)	[left=2cm of v4] {$5T$};
\node[state]				(v4ll)	[left=2cm of v4l] {$5T$};
\node[state]				(v4r)	[right=2cm of v4] {$5T$};
\node[state]				(v4rr)	[right=2cm of v4r] {$5T$};

\node[state]				(vb)	[below=2cm of v4] {$T$};

\node[state]				(vm)	[right=11cm of v2] {$T$};

\node [xshift=3mm] at (vm.east) {\small $v_{m}$};
\node [yshift=3mm] at (vt.north) {\small $v_{t}$};
\node [yshift=-3mm] at (vb.south) {\small $v_{b}$};

\path   (vm) edge node [swap, pos=0.4] {\small $\frac{1}{4}T$} (vt)
		  (vm) edge node [pos=0.4] {\small $\frac{1}{4}T$} (vb)
		  (vt) edge (v0l)
		  (vt) edge (v0r)
		  (v0l) edge (v1)
		  (v0r) edge (v1)
		  (v1) edge (v2l)
		  (v1) edge (v2)
		  (v1) edge (v2r)
		  (v2l) edge (v3)
		  (v2)  edge (v3)
		  (v2r) edge (v3)
		  (v3) edge (v4ll)
		  (v3) edge (v4l)
		  (v3) edge (v4)
		  (v3) edge (v4r)
		  (v3) edge (v4rr)
		  (v4ll)  edge (vb)
		  (v4l)   edge (vb)
		  (v4)    edge (vb)
		  (v4r)   edge (vb)
		  (v4rr)  edge (vb)
;

\end{tikzpicture}
\caption{The instance $G_3$}
\label{fig:prime}
\end{figure}

(\emph{Sketch.}) See Figure~\ref{fig:prime} for an illustrated example where $T = 4n$.
The $i$-th `diamond' (in top-down order) has $p_n$ branches where $p_n$ is the $n$-th prime.
The relative deadlines are set as indicated, each unlabelled edge has $FT$ set to $1$, and
each missing edge has $FT$ set to the `shortest distance' between the two relevant vertices.
It can be shown that a solution must be an infinite repetition of either (i) from $v_t$ through all the diamonds to $v_b$, to
$v_m$ and to $v_t$ again, or (ii) from $v_b$ through all the diamonds to $v_t$, to $v_m$ and to $v_b$ again.
Furthermore, in each diamond one must go straight down, and only the edges shown in the figure can be used.
It can be shown that the shortest period of a solution to $G_n$ is bounded below by $\displaystyle{\prod_{i = 1}^n p_i} = \Omega(e^n)$.
On the other hand, the number of vertices and the largest constant in $G_n$ are both $O(n^2 \ln n)$.
\end{proof}

\section{Proof of Proposition~\ref{prop:tborbt}}\label{app:tborbt}

\begin{lemma}\label{lem:toporbot}
Each segment $s_j$ must start with and end with $v_{top}$ or $v_{bot}$.
\end{lemma}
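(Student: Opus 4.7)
The plan is to exploit the very restrictive connectivity of $v_{mid}$. By construction, $v_{mid}$ is adjacent (via a `real' edge) only to $v_{top}$ and $v_{bot}$, both connections having $\mathit{FT}$ equal to $\tfrac{1}{4}T$. Every other edge incident to $v_{mid}$ is one of the `missing' edges and has $\mathit{FT}$ set to $2T$.

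First I would argue that no edge of weight $\geq 2T$ can appear in the periodic solution $s$. Indeed, the largest value assigned by $\mathit{RD}$ is $\tfrac{3}{2}T$ (this is the relative deadline of the clause vertices; all other relative deadlines are smaller, as one can check against $T + l + 2h$, $T + 2h$, $T$, and the consistency gadget values, all of which are $< 2T$ for the chosen $T$). If any edge $(u,u')$ of weight $\mathit{FT}(u,u') \geq 2T$ were traversed in $s$, then the time elapsed between any two consecutive visits of $u'$ would be at least $2T$, exceeding $\mathit{RD}(u') < 2T$ and violating the solution condition.

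Next I would use this observation at the boundary of each segment. By definition of a segment, $s$ contains the pattern $\ldots v_{mid}\, s_j\, v_{mid} \ldots$, so the edge immediately following the initial $v_{mid}$ and the edge immediately preceding the final $v_{mid}$ both have $v_{mid}$ as an endpoint. By the previous paragraph these edges have weight $< 2T$, so they must be the only two usable edges at $v_{mid}$, namely those to $v_{top}$ and to $v_{bot}$. Consequently the first vertex of $s_j$ and the last vertex of $s_j$ both lie in $\{v_{top}, v_{bot}\}$, which is exactly the statement of the lemma.

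I do not anticipate any real obstacle: the only mildly fiddly point is verifying that every entry of $\mathit{RD}$ is strictly less than $2T$, which is a quick inspection of the construction (the deadlines are at most $\tfrac{3}{2}T$, attained at clause vertices and at the $\mathit{in}/\mathit{out}$ vertices of the consistency gadgets). Once this is noted, the lemma follows immediately from the fact that $v_{mid}$ has only two `live' neighbours.
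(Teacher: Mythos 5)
Your proof is correct and matches the paper's intent: the paper leaves this lemma without an explicit proof, but its construction already remarks that the missing edges of weight $2T$ ``can never be taken'' since every relative deadline is below $2T$, which is precisely your key observation, and from there the conclusion that $v_{mid}$'s only usable neighbours are $v_{top}$ and $v_{bot}$ is immediate.
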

%\begin{proof}
%As $v_{mid}$ is only connected to $v_{top}$ and $v_{bot}$, the lemma is immediate.
%\end{proof}

\begin{lemma}\label{lem:atleast3mp1}
The time needed from $v_{top}$ or $v_{bot}$ to any other vertex is at least $(3m + 1)l$.
\end{lemma}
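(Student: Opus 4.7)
The plan is to simply enumerate the edges incident to $v_{top}$ and $v_{bot}$ and observe that every such edge has flight time at least $(3m+1)l$; since any path from $v_{top}$ or $v_{bot}$ to another vertex must begin with one of these edges, the lemma follows immediately.

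First I would read off from the construction the complete neighbour-lists. By the definitions given in the variable gadget for $x_1^0$ and $x_m^1$ together with the final edges to $v_{mid}$, the only neighbours of $v_{top}$ are $v_{mid}$ (via an edge of weight $\tfrac{1}{4}T$), $v_1^{t,L}$ and $v_1^{t,R}$ (each via a `long' edge of weight $(3m+1)l$); symmetrically, $v_{bot}$ has only $v_{mid}$ (weight $\tfrac{1}{4}T$) and $v_m^{b,L}$, $v_m^{b,R}$ (each via a long edge of weight $(3m+2)l$). All remaining edges in the clique have $\mathit{FT}=2T$ and, as pointed out in the footnote on the construction, may be treated as absent in the single-UAV case since $2T$ exceeds every relative deadline.

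Next I would verify the arithmetic inequality $\tfrac{1}{4}T \geq (3m+1)l$. Unfolding
\[
T \;=\; 2\Big( m\bigl(2(3m+1)l + l\bigr) + m\bigl(2(3m+2)l + l\bigr) + l + 2h\Big),
\]
the single summand $2m(2(3m+1)l+l)$ already dominates $4(3m+1)l$ for $m \geq 1$, so $\tfrac{1}{4}T \geq (3m+1)l$ holds comfortably. Since the remaining two edge weights $(3m+1)l$ and $(3m+2)l$ are trivially at least $(3m+1)l$, every edge incident to $v_{top}$ or $v_{bot}$ has weight at least $(3m+1)l$.

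There is no real obstacle here: the argument is pure case analysis on the three neighbours of each of $v_{top}$ and $v_{bot}$, and the only quantitative step is the routine bound on $\tfrac{1}{4}T$. The lemma then follows because any finite path from $v_{top}$ (resp.\ $v_{bot}$) to a vertex distinct from itself must traverse at least one of its incident edges, and hence accumulates duration at least $(3m+1)l$. \qed
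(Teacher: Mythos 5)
Your argument is correct and is exactly the (omitted) justification the paper intends: $v_{top}$ and $v_{bot}$ are each incident only to two long edges of weight $(3m+1)l$ or $(3m+2)l$, to $v_{mid}$ via an edge of weight $\tfrac{1}{4}T \geq (3m+1)l$, and otherwise to edges of weight $2T$, so the very first edge of any path out of them already costs at least $(3m+1)l$. The only blemish is an indexing slip: the long-edge neighbours of $v_{bot}$ are the bottom side vertices of the gadget for $x_m^1$, i.e.\ $v_{2m}^{b,L}$ and $v_{2m}^{b,R}$, not $v_m^{b,L}$ and $v_m^{b,R}$ (which sit in the gadget for $x_m^0$); this does not affect the bound.
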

%\begin{proof}
%Immediate from observation.
%\end{proof}

\begin{lemma}\label{lem:atleast14t}
The time needed from $v_{mid}$ to any other vertex is at least $\frac{1}{4}T$.
\end{lemma}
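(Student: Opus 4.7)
The plan is to read the bound directly off the construction. By definition, $v_{mid}$ has only two incident edges of usable flight time, namely those to $v_{top}$ and $v_{bot}$, each of weight $\frac{1}{4}T$. Every other edge incident to $v_{mid}$ was declared ``missing'' and hence has $\mathit{FT} = 2T$. Since $2T$ strictly exceeds every value assigned by $\mathit{RD}$ (the largest of which is $\frac{3}{2}T$, assigned to the clause vertices), no such edge can appear in any valid solution; traversing one would already violate the relative deadline of either of its endpoints.

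Consequently, every path starting at $v_{mid}$ and ending at some $v \neq v_{mid}$ must begin with the edge to $v_{top}$ or the edge to $v_{bot}$, contributing $\frac{1}{4}T$ to its total duration. If $v \in \{v_{top}, v_{bot}\}$, the bound holds with equality; otherwise at least one further edge, of strictly positive integer weight, is appended, so the total duration strictly exceeds $\frac{1}{4}T$.

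There is no real obstacle: the argument is a one-step case analysis on the first edge out of $v_{mid}$, together with the trivial observation that $\tfrac{3}{2}T < 2T$ so that the ``missing'' edges are indeed unusable. The lemma thus follows immediately from the construction.
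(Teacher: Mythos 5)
Your proof is correct and matches the paper's (implicit) reasoning: the paper states this lemma without an explicit proof, relying on exactly the observation you make — $v_{mid}$'s only usable edges are the two of weight $\frac{1}{4}T$ to $v_{top}$ and $v_{bot}$, since the remaining edges have $\mathit{FT} = 2T$, which exceeds every relative deadline (the maximum being $\frac{3}{2}T$) and so can never be taken. Nothing is missing.
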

%\begin{proof}
%Immediate from observation.
%\end{proof}

\begin{lemma}\label{lem:segmorethanone}
Each segment $s_j$ must contain more than one vertex.
\end{lemma}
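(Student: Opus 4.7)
The plan is to argue by contradiction: assume some $s_j$ consists of a single vertex, and produce a witness vertex whose relative deadline is violated. By Lemma~\ref{lem:toporbot} such an $s_j$ must be either $v_{top}$ or $v_{bot}$; I will write out the case $s_j = v_{top}$, the other being entirely symmetric.

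First I would analyse the short subpath running from the last vertex of $s_{j-1}$, through the $v_{mid}$ preceding $s_j$, through $s_j = v_{top}$, through the $v_{mid}$ preceding $s_{j+1}$, to the first vertex of $s_{j+1}$. By Lemma~\ref{lem:toporbot}, both endpoints of $s_{j-1}$ and $s_{j+1}$ lie in $\{v_{top}, v_{bot}\}$, so this subpath uses four edges, each incident to a $v_{mid}$ and an element of $\{v_{top}, v_{bot}\}$, and Lemma~\ref{lem:atleast14t} forces each such flight time to be at least $\frac{1}{4}T$. Hence the subpath has duration at least $T$, and every vertex appearing along it belongs to $\{v_{top}, v_{bot}, v_{mid}\}$.

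The key step is then to exhibit a vertex whose relative deadline is strictly less than $T$ and which is therefore incompatible with being `skipped over' for a duration of at least $T$. A pivot vertex such as $\mathit{pvt}_1^L$ is a natural candidate, since its relative deadline, as defined in the construction, is essentially $\frac{1}{2}T$ plus lower-order terms in $l$ and $h$. A short arithmetic check (unpacking the definition of $T$ and using $l = 24h + 34$) confirms that $\mathit{RD}(\mathit{pvt}_1^L) < T$; I expect this magnitude bookkeeping to be the only step that requires real care, the rest being direct applications of the preceding lemmas.

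With such a witness $p$ in hand the contradiction is immediate: since $p \notin \{v_{top}, v_{bot}, v_{mid}\}$, it is not visited anywhere along the subpath, so two consecutive visits of $p$ in the periodic solution straddling this region must be separated by strictly more than $\mathit{RD}(p)$, violating its deadline. The symmetric case $s_j = v_{bot}$ is handled by the very same argument, e.g., using $\mathit{pvt}_1^R$ as the witness.
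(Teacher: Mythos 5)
Your proof is correct and follows essentially the same route as the paper: both derive a contradiction from the observation that a singleton segment forces a duration-$(\geq T)$ excursion (last vertex of $s_{j-1}$) $\rightarrow v_{mid} \rightarrow s_j \rightarrow v_{mid} \rightarrow$ (first vertex of $s_{j+1}$) during which only $v_{top}$, $v_{bot}$ and $v_{mid}$ are visited, thereby starving some other vertex. The only difference is the choice of witness---the paper uses $v_1$ (with $\mathit{RD}(v_1)=T+2h$, beaten thanks to the extra $2(3m+1)l$ of travel to and from $v_1$ supplied by Lemma~\ref{lem:atleast3mp1}), whereas you use $\mathit{pvt}_1^L$---and your arithmetic claim does check out, since $\mathit{RD}(\mathit{pvt}_1^L)=\frac{1}{2}T+m\big(2(3m+2)l+l\big)-l+4h<T$ reduces to $2h< m\big(2(3m+1)l+l\big)+2l$ (though note this quantity is closer to $\frac{3}{4}T$ than to ``$\frac{1}{2}T$ plus lower-order terms'', as $m\big(2(3m+2)l+l\big)$ is comparable to $\frac{1}{4}T$).
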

\begin{proof}
By Lemma~\ref{lem:toporbot}, without loss of generality let $s_j = v_{bot}$, a single vertex. It is easy to see
that $s_{j-1}$ must end with $v_{top}$ and $s_{j+1}$ must start with $v_{top}$, otherwise
the relative deadline of $v_{top}$ will be violated. Now consider $v_1$ (with $\mathit{RD}(v_1) = T + 2h$).
By Lemma~\ref{lem:atleast3mp1} and the fact that $\mathit{dur}(v_{top} v_{mid} v_{bot} v_{mid} v_{top}) = T$,
the relative deadline of $v_1$ is violated for sure even if $s$ visits $v_1$
immediately after $v_{top}$. This is a contradiction.
\end{proof}

\begin{proposition}\label{prop:bound}
For each segment $s_j$, $0 < \mathit{dur}(s_j) \leq \frac{1}{2}T$.
\end{proposition}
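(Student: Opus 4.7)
The plan is to prove the two bounds separately, relying essentially on the boundary conditions already established. The lower bound $\mathit{dur}(s_j) > 0$ is almost immediate: by Lemma~\ref{lem:segmorethanone}, $s_j$ contains more than one vertex, so it includes at least one edge. Since every $\mathit{FT}$ value is a strictly positive integer, the duration of $s_j$ (a sum of such weights) is strictly positive.

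For the upper bound, the argument is to isolate the full duration between two consecutive visits of $v_{mid}$ in $s$ and then use the relative deadline of $v_{mid}$. By Lemma~\ref{lem:toporbot}, $s_j$ starts and ends with either $v_{top}$ or $v_{bot}$. By construction, the only edges incident to $v_{mid}$ with $\mathit{FT}$ less than $2T$ are the edges to $v_{top}$ and $v_{bot}$, each of weight $\frac{1}{4}T$. Since $s$ uses $v_{mid}$ to ``sandwich'' $s_j$, the duration between the visit to $v_{mid}$ just before $s_j$ and the visit to $v_{mid}$ just after $s_j$ is
\[
\frac{1}{4}T + \mathit{dur}(s_j) + \frac{1}{4}T = \frac{1}{2}T + \mathit{dur}(s_j).
\]
Because $s$ is a solution and $\mathit{RD}(v_{mid}) = T$, this quantity must be at most $T$, which yields $\mathit{dur}(s_j) \leq \frac{1}{2}T$ at once.

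There is no real obstacle here; the proposition is essentially a bookkeeping consequence of the setup, and the only subtlety is ensuring that $s_j$'s endpoints are indeed in $\{v_{top}, v_{bot}\}$ (so that the two ``$\frac{1}{4}T$''-edges on either side are actually the edges used), which is exactly what Lemma~\ref{lem:toporbot} gives us. I expect the whole proof to be a couple of lines.
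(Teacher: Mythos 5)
Your proof is correct and follows essentially the same route as the paper: the lower bound comes from Lemma~\ref{lem:segmorethanone} together with positivity of the edge weights, and the upper bound from the identity $\mathit{dur}(v_{mid}\, s_j\, v_{mid}) = \frac{1}{2}T + \mathit{dur}(s_j)$ combined with $\mathit{RD}(v_{mid}) = T$. Your version merely spells out the justification (via Lemma~\ref{lem:toporbot} and the construction of the edges at $v_{mid}$) that the paper leaves implicit.
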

\begin{proof}
By Lemma~\ref{lem:segmorethanone} we have $\mathit{dur}(s_j) > 0$.
For the upper bound, note that $\mathit{dur}(v_{mid} s_j v_{mid}) = \frac{1}{2}T + \mathit{dur}(s_j)$
and $\mathit{RD}(v_{mid}) = T$.
\end{proof}

\begin{proposition}\label{prop:leastonce}
Each segment $s_j$ contains all vertices in $V \setminus \{ v_{mid} \}$ with relative deadlines less or equal than $T + l + 2h$.
\end{proposition}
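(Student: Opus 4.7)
The plan is to argue by contradiction. Suppose that some vertex $v$ with $\mathit{RD}(v) \leq T + l + 2h$ is absent from segment $s_j$, and let $t_1, t_2$ be the times of the last occurrence of $v$ before $s_j$ and the first occurrence after $s_j$, respectively (these exist since $s$ visits every vertex infinitely often). The solution property forces $t_2 - t_1 \leq \mathit{RD}(v) \leq T + l + 2h$, and the strategy is to derive a contradiction by establishing the strict lower bound $t_2 - t_1 > T + l + 2h$.

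The interval $[t_1, t_2]$ contains the sub-path running from $v$'s location to the end of its segment (a $v_{top}$ or $v_{bot}$ vertex by Lemma~\ref{lem:toporbot}), to $v_{mid}$, across the full round-trip $v_{mid}\,s_j\,v_{mid}$, to the start of $v$'s next segment (again $v_{top}$ or $v_{bot}$), and finally to $v$. Since $\mathit{FT}(v_{mid}, v_{top}) = \mathit{FT}(v_{mid}, v_{bot}) = \tfrac{1}{4}T$, and since $\mathit{dur}(v_{mid}\,s_j\,v_{mid}) = \tfrac{1}{2}T + \mathit{dur}(s_j) > \tfrac{1}{2}T$ by Lemma~\ref{lem:segmorethanone}, I would collect these contributions to obtain
\[
t_2 - t_1 \;>\; \Delta_{\mathrm{out}} \,+\, T \,+\, \Delta_{\mathrm{in}},
\]
where $\Delta_{\mathrm{out}}$ (resp.\ $\Delta_{\mathrm{in}}$) denotes the shortest distance in $G$ from $v$ to (resp.\ from) the relevant $v_{top}/v_{bot}$ endpoint.

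In the case $v \notin \{v_{top}, v_{bot}\}$, Lemma~\ref{lem:atleast3mp1} gives $\Delta_{\mathrm{out}}, \Delta_{\mathrm{in}} \geq (3m+1)l$, and a routine check using $l = 24h + 34 > 2h$ gives $2(3m+1)l > l + 2h$, whence $t_2 - t_1 > T + l + 2h$, a contradiction. In the remaining case $v \in \{v_{top}, v_{bot}\}$, we have $\mathit{RD}(v) = T$, and even setting $\Delta_{\mathrm{out}} = \Delta_{\mathrm{in}} = 0$, the strict positivity of $\mathit{dur}(s_j)$ yields $t_2 - t_1 > T$, a contradiction.

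There is no deep obstacle here; the only care required is maintaining the strictness of the inequalities so that $t_2 - t_1$ genuinely exceeds (rather than merely equals) $\mathit{RD}(v)$. This strictness is ensured by Lemma~\ref{lem:segmorethanone}, and all other distance bounds are already packaged in Lemmas~\ref{lem:toporbot}, \ref{lem:atleast3mp1}, and \ref{lem:atleast14t}.
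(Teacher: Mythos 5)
Your proof is correct and follows essentially the same route as the paper's: both derive a contradiction by summing flight-time lower bounds along the stretch of $s$ between the two occurrences of the missing vertex $v$ that straddle $v_{mid}\,s_j\,v_{mid}$, showing this gap exceeds $T + l + 2h \geq \mathit{RD}(v)$. The only difference is bookkeeping: the paper draws the extra $l+2h$ of slack from $\mathit{dur}(s_j) \geq 2(3m+1)l$ and bounds both outer legs by $\tfrac{1}{4}T$ via Lemma~\ref{lem:atleast14t} (which also makes your case split for $v \in \{v_{top}, v_{bot}\}$ unnecessary), whereas you draw the slack from the legs $v \leadsto v_{top}/v_{bot}$ via Lemma~\ref{lem:atleast3mp1} and use $\mathit{dur}(s_j) > 0$ only for strictness.
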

\begin{proof}
Let $v \in V \setminus \{ v_{mid} \}$ be a vertex missing in $s_j$ with 
$\mathit{RD}(v) \leq T + l + 2h$. By Lemmas~\ref{lem:toporbot},~\ref{lem:atleast3mp1} and~\ref{lem:segmorethanone},
$\mathit{dur}(s_j) \geq 2(3m + 1)l > l + l > l + 2h$.
We have $\mathit{dur}(v_{mid} s_j v_{mid}) = \frac{1}{2}T + \mathit{dur}(s_j) > \frac{1}{2}T + l + 2h$.
By Lemma~\ref{lem:atleast14t}, $\mathit{dur}(vv_{mid} s_j v_{mid}v)$ must be greater than $T + l + 2h$
for any $v \in V \setminus \{ v_{mid} \}$, which is a contradiction.
\end{proof}

%By Proposition~\ref{prop:tonce}, we can derive a lower bound on the duration of a segment
%by accumulating the duration of the edge leaving $v_t$, all edges entering/leaving vertices shared by
%neighbouring gadgets, and the edge entering $v_b$. This gives $m \big(2(3m + 1)l \big) + m \big(2(3m + 2)l \big)$.
By Proposition~\ref{prop:leastonce}, we first derive a (crude) lower bound on $\mathit{dur}(s_j)$.
The sum of the minimum times needed to enter and leave every $v \in S$ and
the minimum times needed to enter and leave both ends of $s_j$ gives
\begin{equation}\label{eq:eq1}
\mathit{dur}(s_j) \geq (m-1)\big(2(3m+1)l\big) + m\big(2(3m+2)l\big) + 2(3m+1)l\,.
\end{equation}

\begin{proposition}\label{prop:tbonce}
$v_{top}$, $v_{bot}$ and each $v \in S$ appears once in each segment $s_j$.
\end{proposition}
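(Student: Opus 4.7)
The plan is to exploit the tightness of the crude lower bound (\ref{eq:eq1}) against the upper bound $\mathit{dur}(s_j) \leq \frac{1}{2}T$ from Proposition~\ref{prop:bound}. The gap $\frac{1}{2}T - [\text{(\ref{eq:eq1})}] = 2ml + l + 2h$ is only linear in $ml$, while any additional entry--exit pair at a vertex $v \in S \cup \{v_{top}, v_{bot}\}$ contributes at least $2(3m+1)l$ of long-edge weight. Since $l = 24h + 34 > 2h$, this additional cost strictly exceeds the slack, forcing each such vertex to be visited at most once.

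First I would invoke Proposition~\ref{prop:leastonce}: each $v \in S \cup \{v_{top}, v_{bot}\}$ has $\mathit{RD}(v) \leq T + l + 2h$ and therefore appears at least once in $s_j$. Let $k_v$ denote the number of visits and $e_v \in \{0, 1, 2\}$ the number of endpoints of $s_j$ equal to $v$ (with $e_v = 0$ for $v \in S$, and $\sum_v e_v = 2$ by Lemma~\ref{lem:toporbot}). The key structural observation is that, \emph{inside} $s_j$, every edge incident to such a $v$ is a ``long'' edge of weight $(3m+1)l$ or $(3m+2)l$: the only short neighbour of $v_{top}$ or $v_{bot}$ is $v_{mid}$, which lies outside $s_j$. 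Each such long edge is counted once at its unique $(S \cup \{v_{top}, v_{bot}\})$-endpoint, so the total long-edge contribution to $\mathit{dur}(s_j)$ is $\sum_v (2k_v - e_v)\, w_v$, where $w_v = (3m+1)l$ for $v \in \{v_{top}, v_1, \ldots, v_{m-1}\}$ and $w_v = (3m+2)l$ otherwise.

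Next I would case on the endpoint configuration of $s_j$. By Lemma~\ref{lem:toporbot} both endpoints lie in $\{v_{top}, v_{bot}\}$. In either symmetric configuration (both endpoints $v_{top}$, or both $v_{bot}$), the other element of $\{v_{top}, v_{bot}\}$ must still be visited in the interior of $s_j$ by Proposition~\ref{prop:leastonce}, forcing an additional pair of long edges; direct evaluation of $\sum_v (2k_v - e_v)\, w_v$ in the minimum-$k_v$ instance of this case already exceeds $\frac{1}{2}T$, ruling it out. In the remaining asymmetric configuration---where $s_j$ goes from $v_{top}$ to $v_{bot}$ or vice versa---the minimum $k_v = 1$ for all $v$ yields a long-edge total of $(3m+1)l(2m-1) + (3m+2)l(2m+1) = (12m^2 + 6m + 1)l$, leaving a residual slack of $2ml + 2h$ against $\frac{1}{2}T$. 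Any hypothetical $k_v \geq 2$ would force at least two extra traversals of long edges of weight $\geq (3m+1)l$, contributing $\geq 2(3m+1)l$ on top, which strictly exceeds $2ml + 2h$ because $l > 2h$. This contradicts Proposition~\ref{prop:bound} and forces $k_v = 1$ for every $v \in S \cup \{v_{top}, v_{bot}\}$.

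The main obstacle is the careful bookkeeping of long-edge counts across the several endpoint configurations, distinguishing $(3m+1)l$- and $(3m+2)l$-weight edges and tracking that an endpoint occurrence reduces the visible edge count by one. Once this is set up, the arithmetic reduces to the single inequality $l > 2h$ (which holds trivially from $l = 24h + 34$), ensuring that a single surplus pair of long edges outweighs the slack permitted by Proposition~\ref{prop:bound}.
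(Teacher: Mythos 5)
Your proof is correct and follows essentially the same route as the paper: both bound $\mathit{dur}(s_j)$ from below by counting the long edges forced at each occurrence of a vertex in $S \cup \{v_{top}, v_{bot}\}$ and observe that any extra occurrence adds at least $2(3m+1)l$, which exceeds the slack left by the upper bound $\frac{1}{2}T$ of Proposition~\ref{prop:bound}. Your version merely makes the endpoint bookkeeping and the $(3m+1)l$ versus $(3m+2)l$ distinction explicit where the paper appeals to ``a similar argument as above.''
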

\begin{proof}
Without loss of generality, assume one of these vertices appears more than once in $s_j$.
By a similar argument as above, we derive that $\mathit{dur}(s_j)$ is at least $(m-1)\big(2(3m+1)l\big) + m\big(2(3m+2)l\big) + 2(3m+1)l + 2(3m+1)l > \frac{1}{2}T$.
This contradicts Proposition~\ref{prop:bound}.
\end{proof}

By the proposition above, we can revise our lower bound in Eq.(\ref{eq:eq1}) by noting that
$s_j$ must start and end with different vertices. This gives
\begin{equation}\label{eq:eq2}
\mathit{dur}(s_j) \geq (m-1)\big(2(3m+1)l\big) + m\big(2(3m+2)l\big) + (3m+1)l + (3m+2)l\,.
\end{equation}

Now without loss of generality let $s_j$ ends with $v_{top}$ and $s_{j+1}$ starts with $v_{top}$.
By Eq.(\ref{eq:eq2}),
$\mathit{dur}(s_j) + \mathit{dur}(s_{j+1}) \geq 2\Big( (m - 1)\big(2(3m+1)l\big) + m\big(2(3m+2)l\big) + (3m+1)l + (3m+2)l \Big) > \frac{1}{2}T$,
and hence $\mathit{dur}(s_j v_{mid} s_{j + 1}) > T$.
By Proposition~\ref{prop:tbonce}, $v_{bot}$ can only appear at both ends of $s_j v_{mid} s_{j + 1}$,
hence its relative deadline must be violated. This is a contradiction. Proposition~\ref{prop:tborbt} is hence proved.

\section{Proof of Proposition~\ref{prop:exact}}\label{app:cnt}

%\begin{proposition}\label{prop:mosttwo}
%In a single segment, exactly two ``long'' edges are taken in each variable gadget.
%\end{proposition}
%\begin{proof}
%If one more such edge is taken in any variable gadget, the total duration of the segment will be at least
%\[
%m \big(2(3m + 1)l \big) + m \big(2(3m + 2)l \big) + (3m + 1)l  > \frac{1}{2}T,
%\]
%contradicting Proposition~\ref{prop:upperbound}.
%\qed
%\end{proof}
%
%We derive a lower bound on the duration of a segment based on Proposition~\ref{prop:leastonce}.
%To leave $v_{top}$ the UAV must take $(2m+1)l$ time units. To enter $v_{bot}$ it must take $(2m+1)l$.
%To enter and leave each vertex shared by two variable gadgets,
%it must take at least $2(2m+1)l$ (by an argument very similar to Proposition~\ref{prop:tbonce}, any
%such vertex can only appear once in a segment). The sum of these with the times needed to enter
%other vertices gives
Now we refine our lower bound in Eq.(\ref{eq:eq2}) by taking into account other vertices in variable gadgets
and consistency gadgets with $\mathit{RD}$ less or equal to $T + l + 2h$ (by Proposition~\ref{prop:leastonce}).
As many of these vertices are adjacent, we only accumulate the minimum times needed to enter them.
This gives an extra time of $m(24h + 22) + 4m + m(24h + 22)$ (note that by Proposition~\ref{prop:tbonce},
only one of the four vertices connected to a shared vertex has been entered and cannot be included
in the calculation). In total, we have
\begin{equation}\label{eq:eq3}
\mathit{dur}(s_j) \geq \frac{1}{2}T - 20m - 2h \,.
\end{equation}
 
\begin{proposition}\label{prop:32once}
Each segment $s_j$ contains all vertices with relative deadlines equal to $\frac{3}{2}T$, i.e.,
clause vertices and vertices in
\( \bigcup_{i \in \{1, \ldots, m\}} \big( (\mathit{LCG}_i \setminus \{\mathit{pvt}_i^L\}) \cup (\mathit{RCG}_i \setminus \{ \mathit{pvt}_i^R \} ) \big) \).
\end{proposition}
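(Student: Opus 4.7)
The plan is to argue by contradiction. Suppose some vertex $v$ with $\mathit{RD}(v) = \frac{3}{2}T$ is absent from segment $s_j$. Since $s$ visits $v$ infinitely often, there is a latest occurrence of $v$ strictly before $s_j$ and an earliest occurrence strictly after; I will show the duration of the subpath connecting these two occurrences strictly exceeds $\frac{3}{2}T$, contradicting $\mathit{RD}(v)$.

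First I would establish a distance lower bound: every vertex $v$ of relative deadline $\frac{3}{2}T$ — whether a clause vertex $v^{c_k}$ or a non-pivot vertex of some $\mathit{LCG}_i$ or $\mathit{RCG}_i$ — lies at distance at least $(3m+1)l$ from the set $\{v_{top}, v_{bot}\}$ in $G$. This follows by inspection of the construction: every path from such a $v$ to $\{v_{top}, v_{bot}\}$ must leave the interior of some variable gadget, which can be done only by crossing one of the `long' edges of $\mathit{FT}$ value $(3m+1)l$ or $(3m+2)l$, all other intra-gadget edges having weight $2$ (the remaining, implicitly-present edges of weight $2T$ are far too large to contribute any shortcut, as $2T$ exceeds any relative deadline).

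Next I would bound below the $v$-to-$v$ subpath. Writing $d := (3m+1)l$, this subpath decomposes as: (a) a traversal from $v$ to the endpoint of its containing segment, of duration $\geq d$; (b) two transitions via $v_{mid}$ entering $s_j$, each costing $\frac{1}{4}T$ because by Proposition~\ref{prop:tborbt} segment endpoints are always in $\{v_{top}, v_{bot}\}$, so totaling $\frac{1}{2}T$; (c) the segment $s_j$ itself, of duration $\geq \frac{1}{2}T - 20m - 2h$ by Equation~(\ref{eq:eq3}); (d) two further $v_{mid}$-transitions of total cost $\frac{1}{2}T$ leaving $s_j$; and (e) a final traversal of duration $\geq d$ back to $v$. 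Summing:
\[
\mathit{dur}(v \leadsto v) \;\geq\; 2d + T + \bigl(\tfrac{1}{2}T - 20m - 2h\bigr) \;=\; \tfrac{3}{2}T + 2(3m+1)l - 20m - 2h.
\]
Since $(3m+1)l = (3m+1)(24h+34)$ comfortably exceeds $10m + h$ for all $m, h \geq 1$, this strictly exceeds $\frac{3}{2}T = \mathit{RD}(v)$, giving the desired contradiction. If the two closest occurrences of $v$ are separated by more than one full segment, the bound only grows, so the argument covers this case a fortiori.

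The main (and essentially only) obstacle is the distance lower bound in the first step: one must verify by inspection that no unexpected shortcut — through the structure of a neighbouring variable gadget, through a clause vertex, or through the consistency gadgets — beats $(3m+1)l$ from any $\frac{3}{2}T$-vertex to $\{v_{top}, v_{bot}\}$. Once this is confirmed, the remainder is a direct arithmetic comparison using Proposition~\ref{prop:tborbt} and Equation~(\ref{eq:eq3}).
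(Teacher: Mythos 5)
Your proposal is correct and follows essentially the same route as the paper: the paper's proof bounds $\mathit{dur}(v\, v_{bot}\, v_{mid}\, s_j\, v_{mid}\, v_{top}\, v) \geq \frac{3}{2}T - 20m - 2h + 2(3m+1)l > \frac{3}{2}T$ using Eq.~(\ref{eq:eq3}) and Lemma~\ref{lem:atleast3mp1}, which is exactly your decomposition (a)--(e). Your ``distance lower bound'' step is precisely Lemma~\ref{lem:atleast3mp1} of the paper (stated there for \emph{all} vertices other than $v_{top}, v_{bot}$, which holds immediately since every edge incident to $v_{top}$ or $v_{bot}$ has weight at least $(3m+1)l$), so no separate inspection argument is needed.
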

\begin{proof}
Assume that there is such a vertex $v$ not appearing in $s_j$.
By Eq.(\ref{eq:eq3}), we have $\mathit{dur}(v_{bot} v_{mid} s_j v_{mid} v_{top}) \geq \frac{3}{2}T - 20m - 2h$.
By Lemma~\ref{lem:atleast3mp1}, the relative deadline of $v$ must be violated as
$\mathit{dur}(v v_{bot} v_{mid} s_j v_{mid} v_{top} v) \geq \frac{3}{2}T - 20m - 2h + 2(3m + 1)l > \frac{3}{2}T$. 
This is a contradiction.
\end{proof}

Based on the previous proposition, we can further refine our lower bound on the duration of a segment.
The minimum times needed to enter
\begin{itemize}
\item clause vertices $v^{c_j}$, $j \in \{1, \ldots, h\}$
\item vertices in \( \bigcup_{i \in \{1, \ldots, m\}} \big( (\mathit{LCG}_i \setminus \{\mathit{pvt}_i^L\}) \cup (\mathit{RCG}_i \setminus \{ \mathit{pvt}_i^R \} ) \big) \)
\end{itemize}
can now be included in the calculation. We have
\begin{equation}\label{eq:eq4}
\mathit{dur}(s_j) \geq \frac{1}{2}T - 4h \,.
\end{equation}
\begin{proposition}\label{prop:pivotmoreonce}
In each segment $s_j$, each vertex in \( \bigcup_{i \in \{1, \ldots, m\}} \{\mathit{pvt}_i^L, \mathit{pvt}_i^R\} \) appears more than once.
\end{proposition}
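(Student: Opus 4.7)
I plan to argue by contradiction. Suppose that for some $i \in \{1, \ldots, m\}$, the vertex $\mathit{pvt}_i^L$ (the case of $\mathit{pvt}_i^R$ being entirely symmetric) appears at most once in some segment $s_j$. A short calculation using the definitions of $T$ and $\mathit{RD}(\mathit{pvt}_i^L)$ will show that $\mathit{RD}(\mathit{pvt}_i^L) \leq T + l + 2h$, so Proposition~\ref{prop:leastonce} forces $\mathit{pvt}_i^L$ to occur at least once in every segment. Combined with the hypothesis, it therefore occurs exactly once in $s_j$.

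Next I would set $\tau_0$ to be the time of this unique occurrence in $s_j$, and let $\tau_-$ and $\tau_+$ be the times of the immediately preceding and following occurrences of $\mathit{pvt}_i^L$ in $s$. Because $s_{j-1}$ and $s_{j+1}$ both contain $\mathit{pvt}_i^L$, the time $\tau_-$ must lie in $s_{j-1}$ and $\tau_+$ in $s_{j+1}$. Applying the relative-deadline constraint to both gaps yields the upper bound
\[
\tau_+ - \tau_- \;\leq\; 2\,\mathit{RD}(\mathit{pvt}_i^L)\,.
\]

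For the matching lower bound, I will decompose $\tau_+ - \tau_-$ as the sum of (time from $\tau_-$ to the end of $s_{j-1}$) $+\; \tfrac{1}{2}T \;+\; \mathit{dur}(s_j) \;+\; \tfrac{1}{2}T \;+$ (time from the start of $s_{j+1}$ to $\tau_+$), where the two $\tfrac{1}{2}T$ terms are contributed by the $v_{mid}$ traversals. Lemma~\ref{lem:atleast3mp1} bounds the first and last summand below by $(3m+1)l$ each, and Eq.(\ref{eq:eq4}) gives $\mathit{dur}(s_j) \geq \tfrac{1}{2}T - 4h$, so altogether
\[
\tau_+ - \tau_- \;\geq\; \tfrac{3}{2}T + 2(3m+1)l - 4h\,.
\]

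The main (and essentially the only) obstacle will be the numerical verification that the lower bound strictly exceeds the upper bound. Substituting the explicit values of $T$ and $\mathit{RD}(\mathit{pvt}_i^L)$, the desired inequality $\tfrac{3}{2}T + 2(3m+1)l - 4h > 2\,\mathit{RD}(\mathit{pvt}_i^L)$ reduces, after cancellation, to $(4m + 4i + 1)\,l > 10h$, which holds comfortably given $l = 24h + 34$ and $m, i \geq 1$. This contradiction completes the proof.
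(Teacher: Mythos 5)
Your proposal is correct and follows essentially the same argument as the paper: the paper bounds the larger of the two inter-visit gaps below by $\frac{1}{2}\bigl(\frac{1}{2}T + (\frac{1}{2}T - 4h) + \frac{1}{2}T\bigr) + (3m+1)l$ using Lemma~\ref{lem:atleast3mp1} and Eq.~(\ref{eq:eq4}), which is exactly half of your bound $\tau_+ - \tau_- \geq \frac{3}{2}T + 2(3m+1)l - 4h$ compared against $2\,\mathit{RD}(\mathit{pvt}_i^L)$. Your version is merely more explicit (and your final arithmetic $(4m+4i+1)l > 10h$ checks out).
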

\begin{proof}
Let there be such a vertex $v$ appearing only once in a segment.
By Lemma~\ref{lem:atleast3mp1}, there are two occurrences of $v$ in $s$ separated by at least
$\frac{1}{2} \cdot \big( \frac{1}{2}T + (\frac{1}{2}T - 4h) + \frac{1}{2}T \big) + (3m+1)l$.
This exceeds all possible values of $\mathit{RD}(v)$.
\end{proof}

By the proposition above, we assume that each vertex in \( \bigcup_{i \in \{1, \ldots, m\}} \{\mathit{pvt}_i^L, \mathit{pvt}_i^R\} \) appears twice in a segment.
Counting each such vertex once again gives an extra time of $4h$. The sum of this with Eq.(\ref{eq:eq4})
matches the upper bound in Proposition~\ref{prop:bound}.
Any more visit to a vertex in $V \setminus \{ v_{mid}, v_{top}, v_{bot}, v_1, \ldots, v_{2m-1} \}$
will immediately contradict Proposition~\ref{prop:bound}. Proposition~\ref{prop:exact} is hence proved.

\end{document}